\documentclass[
		runningheads,
		10pt,
		dvipsnames]
		{llncs}

%
\usepackage{booktabs} 
\usepackage{amsmath}
\usepackage{amssymb}
\usepackage{amsfonts}
\usepackage{listings}
\usepackage{lscape}
\usepackage{todonotes}
\usepackage{algorithm, algpseudocode}
\usepackage{subfig}
\usepackage{mathtools}
\usepackage{multirow}
\usepackage{tabularx}
\usepackage[hidelinks]{hyperref}
\usepackage{geometry}
 \geometry{
 a4paper,
 total={210mm,297mm},
 left=23mm,
 right=23mm,
 top=30mm,
 bottom=30mm,
 }

\usepackage{tikz}
\usetikzlibrary{shapes,fadings,shadows,trees,matrix,intersections,patterns,decorations.markings,decorations.pathreplacing,arrows,automata,calc,decorations,fit,positioning,backgrounds}
\usetikzlibrary{shapes.geometric,decorations.pathmorphing}

\tikzset{
	aut/.style      = {auto,node distance=1cm,line width=2pt,>=to,thick,align=center,shorten >=1pt},
	location/.style = {state,fill=white!89!black,draw=white!50!black,minimum size=0.7cm,inner sep=0cm},
}

\tikzset{
	accept/.style  = {double},
	reject/.style  = {pattern=north east lines, pattern color=black!30}
}

\tikzset{
	atomic/.style	= {minimum size=6.5cm},
	a-big/.style	= {minimum size=7.5cm},
	composite/.style= {minimum size=2.0cm, minimum height=1cm},
	explode/.style	= {node distance= 3.5cm}
}




%
\newcommand{\tconnect}[5][]{ %
	\path[->,#1] (#2) edge [         ] node (#5)[label=#4]   {} (#3);%
}
\newcommand{\tconnectt}[6][]{ %
	\path[->,#1] (#2) edge [         ] node (#5)[label={[#6]:#4}]   {} (#3);%
}




\newcommand{\rfig}[1]{Figure~\ref{#1}}
\newcommand{\rfigb}[1]{Figure~\ref{#1}}
\newcommand{\rtbl}[1]{Table~\ref{#1}}
\newcommand{\rdef}[1]{Definition~\ref{#1}}

\newcommand{\rprop}[1]{Proposition~\ref{#1}}

\newcommand{\rlemma}[1]{Lemma~\ref{#1}}
\newcommand{\ralg}[1]{Algorithm~\ref{#1}}

\newcommand{\rsec}[1]{Section~\ref{#1}}

\newcommand{\rex}[1]{Example~\ref{#1}}

\newcommand{\rapp}[1]{Appendix~\ref{#1}}

\newcommand{\gettable}[1]{%
	\begin{table}[t]%
	\centering%
		#1%
	\end{table}%
}

\newcommand{%
	\begin{figure}[t] %
		\centering%
		\scalebox{1}{\input{tikz/}} %
		\vspace{-1em}
		\caption{}%
		\label{}%
		\vspace{-1em}
	\end{figure}%
}[4][1] {%
	\begin{figure}[t] %
		\centering%
		\scalebox{#1}{\input{tikz/#2}} %
		\vspace{-1em}
		\caption{#3}%
		\label{#4}%
		\vspace{-1em}
	\end{figure}%
}

\newcommand{\squishlist}{
\begin{list}{-}
 { \setlength{\itemsep}{0pt}
    \setlength{\parsep}{1pt}
    \setlength{\topsep}{1pt}
    \setlength{\partopsep}{0pt}
    \setlength{\leftmargin}{0.6em}
    \setlength{\labelwidth}{1.5em}
    \setlength{\labelsep}{0.4em} } }
\newcommand{\squishend}{
 \end{list}  }

\newcommand{\wrt}[0]{wrt}

\newcommand{\cblock}[1]{}


\renewcommand\equiv{\Leftrightarrow}
\newcommand{\matchr}{{\kern 0 em \rightarrow\,}}

\newcommand\THEMIS{\texttt{THEMIS}}
\newcommand{\vars}{\texttt}
\newcommand{\func}{\mathrm}
\newcommand{\funca}{\textsc}
\newcommand{\code}[1]{\ensuremath{\mathtt{#1}}}
\newcommand\xv{x}
\DeclareMathOperator{\fdom}{dom}
\DeclareMathOperator{\fcodom}{codom}
\newcommand\pif[1][\>]{\ensuremath{#1 \mbox{\text{if }}}}
\newcommand\pelse[1][\>]{\ensuremath{#1 \mbox{\text{otherwise}}}}
\newcommand\tuple[1]{\langle #1 \rangle}
\newcommand\aut{\mathcal{A}}

\newcommand\mem{\mathcal{M}}
\newcommand\Mem{\mathit{Mem}}
\newcommand\sys{\mathcal{I}}

\newcommand\verdict{\mathbb{B}_3}
\newcommand\verdictb{\mathbb{B}_2}
\newcommand\timestamp{\mathbb{N}}

\newcommand\sysadd{\dagger_{\lor}}
\newcommand\memadd{\dagger_2}

\newcommand\atoms{\mathit{Atoms}}
\newcommand\chorcoref{\mathit{coref}}
\newcommand\chorref{\mathit{ref}}
\newcommand\chorresp{\mathit{respawn}}
\newcommand\scor{\mathrm{scor}}

\newcommand\msgkill{\mathrm{msg}_{\rm kill}}
\newcommand\msgverdict{\mathrm{msg}_{\rm ver}}
\newcommand\expr{\mathit{Expr}}
\newcommand\clookup{\mathrm{lu}}
\newcommand\cchoose{\mathrm{chc}}
\newcommand\csplit{\mathrm{split}}
\newcommand\verdictf{\mathrm{ver}}
\newcommand\verdictAt{\mathrm{verAt}}
\newcommand\seval{\func{eval}}
\newcommand\senc{\func{enc}}

\newcommand\AP{\mathit{AP}}
\newcommand\APmons{{\mathit{AP}_{\mathrm{mons}}}}
\newcommand\smove{\mathrm{mov}}
\newcommand\sreach{\mathrm{sel}}
\newcommand\sto{\mathrm{to}}
\newcommand\snext{\mathrm{next}}

\newcommand\MONID{\mathrm{m}}
\newcommand\paths{\ensuremath{\mathrm{paths}}}
\newcommand\mondep{\mathrm{dep}}
\newcommand\mons{\mathrm{Mons}}
\newcommand\comps{\mathcal{C}}

\newcommand\setof[1]{\{#1\}}
\newcommand\trace{\mathrm{tr}}
\newcommand\vna{\mathtt{\mathbf{?}}}
\newcommand\vt{\ensuremath{\mathtt{true}}}
\renewcommand\vt{\ensuremath{\top}}
\newcommand\vf{\ensuremath{\mathtt{false}}}
\renewcommand\vf{\ensuremath{\bot}}
\newcommand\fundef{\mathrm{undef}}
\newcommand\cons{\mathrm{memc}}
\newcommand\migrchc{\mathrm{choose}}

\newcommand\rw{\mathrm{rw}}
\newcommand\fid{\mathrm{idt}}
\newcommand\fts{\mathrm{ts}}
\newcommand\fsimp{\mathrm{simplify}}
\newcommand\finc{\mathrm{inc}}

\newcommand\fenc[1]{\ensuremath{\func{s}_{#1}}}
\newcommand\monitorable{\mathrm{monitorable}}
\newcommand\compat{\mathrm{compat}}
\newcommand\constraint{\mathrm{cdep}}
\newcommand\reach[1]{\mathrm{reach}_{\mathrm{#1}}}

\newcommand\MDG{\ensuremath{\mathrm{MDG}}}
\newcommand\MDS{\ensuremath{\mathrm{MDS}}}

\newcommand\decent{\ensuremath{\mathcal{D}}}
\newcommand\traces{\ensuremath{\mathcal{T}}}
\newcommand\cmapping{\ensuremath{\mathcal{L}}}
\newcommand\mroot{\ensuremath{\mathrm{rt}}}
\newcommand\emptytrace{\ensuremath{\epsilon}}

\newcommand\paramlabel{L}

\newcommand\paramdelay{\delta_{t}}%
\def\MemRep{\texttt{EHE}}
\newcommand\sstable{\mathrm{kn}}
\newcommand\algorch{Orch}
\newcommand\algmigr{Migr}
\newcommand\algmigrr{Migrr}
\newcommand\algchor{Chor}

\newcommand{\timestamps}[1]{\ensuremath{\mathrm{rounds}({#1})}}

\newcommand{\xelse}{\mbox{ otherwise }}
\newcommand{\funcparts}[2][ll]
{
	\left\{
		\begin{array}{#1}
			#2
		\end{array}
	\right.
}

\newcommand{\twopartdef}[3]
{
	\left\{
		\begin{array}{ll}
			#1 & \mbox{if } #2 \\
			#3 & \mbox{otherwise}
		\end{array}
	\right.
}
\newcommand{\mrm}[1]{\mathrm{#1}}

%
\begin{document}

\author{
Antoine El-Hokayem\inst{1}
\and
Yli\`es Falcone\inst{1}
\institute{
Univ. Grenoble Alpes, CNRS, Inria, Grenoble INP\thanks{Institute of Engineering Univ. Grenoble Alpes}, LIG, \\ F-38000 Grenoble, France \\ \email{firstname.lastname@univ-grenoble-alpes.fr}
}
}

\title{On the Monitoring of Decentralized Specifications}%
\subtitle{Semantics, Properties, Analysis, and Simulation}
\titlerunning{On the Monitoring of Decentralized Specifications}
\authorrunning{Antoine El-Hokayem and Yl\`es Falcone}

\maketitle

\begin{abstract}
We define two complementary approaches to monitor decentralized systems.
The first relies on those with a centralized specification, i.e, when the specification is written for the behavior of the entire system.
To do so, our approach introduces a data-structure that i) keeps track of the execution of an automaton, ii) has predictable parameters and size, and iii) guarantees strong eventual consistency.
The second approach defines decentralized specifications wherein multiple specifications are provided for separate parts of the system.
We study two properties of decentralized specifications pertaining to monitorability and compatibility between specification and architecture.
We also present a general algorithm for monitoring decentralized specifications.
We map three existing algorithms to our approaches and provide a framework for analyzing their behavior.
Furthermore, we introduce \THEMIS{}, a framework for designing such decentralized algorithms and simulating their behavior.
We show the usage of \THEMIS{} to compare multiple algorithms and verify the trends predicted by the analysis by studying two scenarios: a synthetic benchmark and a real example.
\end{abstract}

\section{Introduction}
%
%

%
Runtime Verification (RV)~\cite{LeuckerS09,RVTutorial,Bartocci2017} is a lightweight formal method which consists in verifying that a run of a system is correct {\wrt} a specification.
The specification formalizes the behavior of the system typically in logics (such as variants of Linear-Time Temporal Logic, LTL) or finite-state machines.
Typically the system is considered as a black box that feeds events to a monitor.
An event usually consists of a set of atomic propositions that describe some abstract operations or states in the system.
The sequence of events transmitted to the monitor is referred to as the trace.
Based on the received events, the monitor emits verdicts in a truth domain that indicate whether the run complies or not with the specification.
A typical truth domain is a set $\setof{\vt, \vf, \vna}$ where verdicts $\vt$ and $\vf$ indicate respectively that a program complies or violates the specification, and verdict $\vna$ indicates that no final verdict could be reached yet.
Truth domains can also include additional verdicts such as currently true and currently false, to indicate a finer grained truth value.
RV techniques have been used for instance in the context of decentralized automotive~\cite{ex:autosar} and medical~\cite{ex:medical} systems.
In both cases, RV is used to verify correct communication patterns between the various components and their adherence to the architecture and their formal specifications.
While RV comprehensively deals with monolithic systems, multiple challenges are presented when scaling existing approaches to decentralized systems, that is, systems with multiple components with no central observation point.
These challenges are inherent to the nature of decentralization; the monitors have a partial view of the system and need to account for communication and consensus.
Our assumptions on the system are as follows:
No monitors are malicious, i.e., messages do not contain wrong information;
No messages are lost, they are eventually delivered in their entirety but possibly out-of-order;
All components share one logical discrete time marked by round numbers indicating relevant transitions in the system specification.
\paragraph{Challenges.}
Several algorithms have been designed~\cite{BauerF12,FalconeCF14,BonakdarpourFRT16,DecentMon} and used~\cite{Bartocci13} to monitor decentralized systems.
Algorithms are primarily designed to address one issue at a time and are typically experimentally evaluated by considering runtime and memory overheads.
However, such algorithms are difficult to compare as they may combine multiple approaches at once.
For example, algorithms that use LTL rewriting~\cite{BauerF12,DecentMon,HavelundR01} not only exhibit variable runtime behavior due to the rewriting, but also incorporate different monitor synthesis approaches that separate the specification into multiple smaller specifications depending on the monitor.
%
In this case, we would like to split the problem of generating equivalent decentralized specifications from a centralized one (synthesis) from the problem of monitoring.
In addition, works on characterizing what one can monitor (i.e., monitorability~\cite{KimVBKLS99,PnueliZ06,FalconeFM12}) for centralized specifications exist~\cite{LTL3Tools,FalconeFM12,Diekert201429}, but do not extend to decentralized specifications.
For example by splitting an LTL formula ad-hoc, it is possible to obtain a non-monitorable subformula\footnote{We use the example from~\cite{DecentMon}: $\boldsymbol{\mathrm{GF}}(a) \land \neg (\boldsymbol{\mathrm{GF}}(a))$ (where $\boldsymbol{\mathrm{GF}}(a)$ means that $a$ should hold infinitely often) is monitorable, but its subformulas are both non-monitorable.} which interferes with the completeness of a monitoring algorithm.
\paragraph{Contributions.}
We tackle the presented challenges using two complementary approaches.
The first approach consists in using the data structure \emph{Execution History Encoding} (\MemRep{}) that encodes automata executions.
Since by using \MemRep{} one only needs to rewrite Boolean expressions, we are able to determine the parameters and their respective effect on the size of expressions, and fix upper bounds.
In addition, \MemRep{} is designed to be particularly flexible in processing, storing and communicating the information in the system.
\MemRep{} operates on an encoding of atomic propositions and guarantees strong-eventual consistency~\cite{CRDT}.
%
The second approach introduces decentralized specifications.
We introduce decentralized specifications,  define their semantics, interdependencies and study some of their properties.
We aim at abstracting the high-level steps of decentralized monitoring.
By identifying these steps, we elaborate a general decentralized monitoring algorithm.
We view a decentralized system as a set of components $\comps$.
A decentralized specification is thus as a set of $n$ finite-state automata with specific properties, which we call \emph{monitors}.
We associate $n$ monitors to these components with the possibility of multiple monitors being associated to a component.
Therefore, we generalize monitoring algorithms to multiple monitors.
Monitoring a centralized system can be seen as a special case with one component, one specification, and one monitor.
%
As such, we present a general decentralized monitoring algorithm that uses two high level steps: setup and monitor.
The setup phase creates the monitors, defines their dependencies and attaches them to components.
As such, the setup phases defines a \emph{topology} of monitors and their dependencies.
The monitor phase allows the monitors to begin monitoring and propagating information to reach a verdict when possible.
Therefore, the two high level operations help decompose monitoring into different subproblems and define them independently.
For example, the problem of generating a decentralized specification from a centralized specification is separated from checking the monitorability of a specification, and also separated from the computation and communication performed by the monitor.
We formulate and solve the problems of deciding \emph{compatibility} and \emph{monitorability} for decentralized specifications.
\emph{Compatibility} ensures that a monitor topology can be deployed on a given system,
\emph{monitorability} ensures that  given a specification, monitors are able to eventually emit a verdict, for all possible traces.
We present \THEMIS{}, a JAVA tool that implements the concepts in this paper; and show how it can be used to design and analyze new algorithms.
We use \THEMIS{} to create new metrics related to load-balancing and our data structures.
We use two scenarios to compare four existing algorithms.
The first scenario is a synthetic benchmark,  using random traces and specifications,
while the second scenario is a real example that uses the publish-subscribe pattern in the \emph{Chiron} graphical user interface system.
The synthetic scenario examines the trends of the analysis, and the \emph{Chiron} scenario examines  more specific differences in behavior.

%
This paper extends the work presented at the ACM SIGSOFT International Symposium on Software Testing and Analysis 2017~\cite{themisissta}, as follows:
\begin{itemize}
  \item adding the property that the \MemRep{} construction guarantees its determinism (\rprop{prop:ehe:invariant});
  \item elaborating and adding properties of decentralized specifications (monitorability, compatibility) as well as the algorithms for checking them (\rsec{sec:prop});
  \item improving \THEMIS{} by optimizing the \MemRep{} performance, and adding distributed and multi-threaded support (\rsec{sec:tool});
  \item {elaborating on the results and providing a discussion of the synthetic benchmarks (\rsec{sec:experiment:synthetic})};
  \item evaluating the algorithms on a new use case based on the \emph{Chiron} example that relies on publish-subscribe and has a formalized specification (\rsec{sec:chiron}); and
  \item extending related work (\rsec{sec:rw}) and formulating additional problems (\rsec{sec:fw}).
\end{itemize}
\paragraph{Overview.}
After presenting related work in \rsec{sec:rw}, we lay out the basic blocks, by introducing our basic data structure (\texttt{dict}), and the basic notions of monitoring with expressions in \rsec{sec:common}.
Then, we present our first approach, a middle ground between rewriting and automata evaluation by introducing the Execution History Encoding (\MemRep{}) data structure in \rsec{sec:cmon}.
We shift the focus on studying decentralized specifications by defining their semantics (\rsec{sec:dmon}), and their properties (\rsec{sec:prop}).
In \rsec{sec:analysis}, we use our analysis of \MemRep{} to study the behavior of three existing algorithms and discuss the situations that advantage certain algorithms over others.
In \rsec{sec:tool}, we present  the \THEMIS{} tool, which we use in \rsec{sec:experiment} to compare the algorithms presented in \rsec{sec:analysis} under two different scenarios: a synthetic random benchmark, and an example of a publish-subscribe system.
In \rsec{sec:fw}, we present future work and formulate additional interesting properties for decentralized specifications.
Finally, we conclude in \rsec{sec:conclusion}.
%
\section{Related Work}\label{sec:rw}
%
%
Several approaches have been taken to handle decentralized monitoring.
The first class of approaches consists in monitoring by rewriting formulae, the second class handles fault-tolerance, and the third class defines specifications for monitoring streams.
\paragraph{Formula rewriting.}
The first class of  approaches consists in monitoring by LTL formula rewriting~\cite{HavelundR01,BauerF12,DecentMon}.
Given an LTL formula specifying the system, a monitor will rewrite the formula based on information it has observed or received from other monitors, to generate a formula that has to hold on the next timestamp.
Typically a formula is rewritten and simplified until it is equivalent to $\vt$ (true) or $\vf$ (false) at which point the algorithm terminates.
Another approach~\cite{MTL} extends rewriting to focus on real-time systems. They use Metric Temporal Logic (MTL), which is an extension to LTL with temporal operators.
This approach also covers lower bound analysis on monitoring MTL formulae.
While these techniques are simple and elegant, rewriting varies significantly during runtime based on observations, thus analyzing the runtime behavior could prove difficult if not unpredictable.
For example, when excluding specific syntactic simplification rules, $\boldsymbol{\mathrm{G}}(\vt)$ could be rewritten $\vt \land \boldsymbol{\mathrm{G}}(\vt)$ and will keep growing in function of the number of timestamps.
To tackle the unpredictability of rewriting LTL formulae, another approach~\cite{FalconeCF14} uses automata for monitoring regular languages, and therefore (i) can express richer specifications, and (ii) has predictable runtime behavior.
These approaches use a centralized specification to describe the system behavior.

\paragraph{Fault-tolerant monitoring.}
Another class of research focuses on handling a different problem that arises in distributed systems.
In~\cite{BonakdarpourFRT16}, monitors are subject to many faults such as failing to receive correct observations or communicate state with other monitors.
Therefore, the problem handled is that of reaching consensus with fault-tolerance, and is solved by determining the necessary verdict domain needed to be able to reach a consensus.
To remain general, we do not impose the restriction that all monitors must reach the verdict when it is known, as we allow different specifications per monitor.
Since we have heterogeneous monitors, we are not particularly interested in consensus.
However, for multiple monitors tasked to monitor the same specification, we are interested in strong eventual consistency.
We maintain the 3-valued verdict domain, and tackle the problem from a different angle by considering eventual delivery of messages.
Similar work~\cite{FAMTL} extends the MTL approach to deal with failures by modeling knowledge gaps and working on resolving these gaps.
We also highlight that the mentioned approaches~\cite{BauerF12,FAMTL,DecentMon}, and other works~\cite{DistVolker,DistSen,DistSchmitz} do in effect define separate monitors with different specifications, typically consisting in splitting the formula into subformulas.
Then, they describe the collaboration between such monitors.
However, they primarily focus on presenting one global formula of the system from which they derive  multiple specifications.
In our approach, we generalize the notions from a centralized to a decentralized specification, and separate the problem of generating multiple specifications equivalent to a centralized specification from the monitoring of a decentralized specification (\rsec{sec:prop:synthesis}).

\paragraph{Specifications over streams.}
Specification languages have been developed that monitor synchronous systems as streams~\cite{LOLA,TESSLA}.
In this setting, events are grouped as a stream, and streams are then aggregated by various operators.
The output domain extends beyond the Boolean domain and encompasses types.
The stream approach to monitoring has the advantage of aggregating types, as such operations such as summing, averaging or pulling statistics across multiple streams is also possible.
Stream combination is thus provided by general-purpose functions, which are more complex to analyze than automata.
This is similar to complex event processing where RV is a special case~\cite{CEPRV}.
Specification languages such as LOLA~\cite{LOLA} even define dependency graphs between various stream information, and have some properties like \emph{well formed}, and \emph{efficiently monitorable} LOLA specifications.
The former ensures that dependencies in the trace can be resolved before they are needed, and the latter ensures that the memory requirement is no more than constant with respect to the length of the trace.
While streams are general enough to express monitoring, they do not address decentralized monitoring explicitly.
As such, there is no explicit assignment of monitors to components and parts of the system, nor consideration of architecture.
Furthermore, there is no algorithmic consideration addressing monitoring in a decentralized fashion, even-though some works such as~\cite{BEEPBEEP3} do provide multi-threaded implementations.

\section{Common Notions} \label{sec:common}
We begin by introducing the dict data structure (\rsec{sec:common:dict}.) used to build more complex data structures, and defining the basic concepts for decentralized monitoring (\rsec{sec:common:mon}).
%
\subsection{The \vars{dict} Data Structure}
\label{sec:common:dict}
%
In monitoring decentralized systems, monitors typically have a state, and attempt to merge other monitor states with theirs to maintain a consistent view of the running system, that is, at no point in the execution, should two monitors receive updates that conflict with one another.
We would like in addition, that any two monitors receiving the same information be in equivalent states.
Therefore, we are interested in designing data structures that can replicate their state under strong eventual consistency (SEC)~\cite{CRDT}, they are known as  state-based convergent replicated data-types (CvRDTs).
We  use a dictionary data structure (noted \vars{dict}) as our basic building block that assigns a value to a given key.
{
 Data structure \vars{dict} will be used to define the memory of a monitor (\rsec{sec:common:mon}), and data structure \MemRep{} which encodes the execution of an automaton (\rsec{sec:cmon:ehe}).
}%

We model \vars{dict} as a partial function $\func{f}$.
{The domain of $\func{f}$ (denoted by $\fdom(\func{f})$) is the set of keys, while the codomain of $\func{f}$ (denoted by $\fcodom(\func{f})$) is the set of values.}
\vars{dict} supports two operations: $\vars{query}$ and $\vars{merge}$.
The $\vars{query}$ operation checks if a key $k \in \fdom(\func{f})$ and returns $\func{f}(k)$.
If $k \not\in \fdom(\func{f})$, then it is undefined.
The $\vars{merge}$ operation of a \vars{dict} $\func{f}$ with another \vars{dict} $\func{g}$, is modeled as function composition.
Two partial functions $\func{f}$ and  $\func{g}$ are composed using operator $\dagger_{\mathit{op}}$ where $\mathit{op} : (\fdom(\func{f}) \times \fdom(\func{g})) \rightarrow (\fcodom(\func{f}) \cup \fcodom(\func{g}))$ is a binary function.
\begin{align*}
	\func{f} \dagger_{\mathit{op}} \func{g} 	 &: \fdom(\func{f}) \cup \fdom(\func{g}) \rightarrow \fcodom(\func{f}) \cup \fcodom(\func{g})\\
		\func{f} \dagger_{\mathit{op}} \func{g} (\xv) 					 &= \funcparts{
		op(\func{f}(\xv), \func{g}(\xv))	 	 & \pif \xv \in 		\fdom(\func{f}) \cap \fdom(\func{g})\\
		\func{g}(\xv)				 & \pif \xv \in \fdom(\func{g}) \setminus \fdom(\func{f})\\
		\func{f}(\xv)				 & \pif \xv \in		\fdom(\func{f}) \setminus \fdom(\func{g})\\
		\fundef				 & \pelse
}
\end{align*}
On sets of functions, $\dagger_{\mathit{op}}$ applies pairwise: $\biguplus^{\mathit{op}}\setof{\func{f_1}, \hdots \func{f_n}} =$ $((\func{f_1} \dagger_{op} \func{f_2}) \hdots \func{f_n})$.
The following two operators are used in the rest of the paper: $\memadd$ and $\sysadd$.
We define both of these operators to be commutative, idempotent, and associative to ensure SEC.
\begin{align*}%
  \memadd(\xv,\xv') = \funcparts{
  \xv' & \pif \xv \prec \xv'\\
  \xv  & \xelse \\
} & \quad
\sysadd(\xv, \xv') = \xv \lor \xv'
\end{align*}
Operator $\memadd$ acts as a replace function based on a total order ($\prec$) between the elements, so that it always chooses the highest element to guarantee idempotence, while $\sysadd$ uses the logical \emph{or} operator to combine elements.
Respectively, we denote the associated pairwise set operators by $\biguplus^{2}$ and $\biguplus^{\lor}$.
Data structure $\vars{dict}$ can be composed by only using operation $\vars{merge}$.
The modifications never remove entries, the state of \vars{dict} is then monotonically increasing using the order provided by $\vars{merge}$.
By ensuring that $\vars{merge}$ is idempotent, commutative, and associative we fulfill the necessary conditions~\cite{CRDT} for our data structure to be a CvRDT.
\begin{proposition}
\label{prop:common:dict}
Data structure \vars{dict} with operations $\vars{query}$ and $\vars{merge}$ is a CvRDT.
\end{proposition}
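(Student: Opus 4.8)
The plan is to invoke the standard sufficient condition for state-based convergent replicated data types from~\cite{CRDT}: an object is a CvRDT whenever its set of reachable states, ordered by the relation induced by $\vars{merge}$, forms a join-semilattice, $\vars{merge}$ computes the least upper bound of two states, and every update is inflationary (monotonically non-decreasing) with respect to that order. Since $\vars{query}$ does not modify the state and $\vars{merge}$ is the only state-changing operation, the whole argument reduces to showing that the composition operator $\dagger_{\mathit{op}}$ on partial functions is commutative, associative, and idempotent for $\mathit{op} \in \setof{\memadd, \sysadd}$, and that these three algebraic laws already yield the semilattice structure and the inflationarity of $\vars{merge}$.

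First I would lift the three laws from the scalar operator $\mathit{op}$ to the dictionary operator $\dagger_{\mathit{op}}$. For commutativity I observe that $\fdom(\func{f} \dagger_{\mathit{op}} \func{g}) = \fdom(\func{f}) \cup \fdom(\func{g})$ is symmetric in $\func{f}, \func{g}$, and then case-split on whether a key $\xv$ lies in $\fdom(\func{f}) \cap \fdom(\func{g})$, in exactly one of the two domains, or in neither: in the intersection case the result is $\mathit{op}(\func{f}(\xv), \func{g}(\xv)) = \mathit{op}(\func{g}(\xv), \func{f}(\xv))$ by commutativity of $\mathit{op}$, and in the remaining cases the two sides coincide immediately. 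Idempotence is direct, since $\func{f} \dagger_{\mathit{op}} \func{f}$ has domain $\fdom(\func{f})$ and maps each $\xv$ to $\mathit{op}(\func{f}(\xv), \func{f}(\xv)) = \func{f}(\xv)$. Associativity is the delicate point, and the one I expect to need the most care: on a fixed key $\xv$ I would enumerate the cases given by membership of $\xv$ in $\fdom(\func{f})$, $\fdom(\func{g})$, $\fdom(\func{h})$ and verify that $((\func{f} \dagger_{\mathit{op}} \func{g}) \dagger_{\mathit{op}} \func{h})(\xv)$ and $(\func{f} \dagger_{\mathit{op}} (\func{g} \dagger_{\mathit{op}} \func{h}))(\xv)$ agree; when $\xv$ lies in all three domains this is exactly associativity of $\mathit{op}$, and the mixed cases reduce to applying $\mathit{op}$ to the one or two defined values in a fixed order, which again matches on both sides. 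Finally, for the concrete operators I would note that $\memadd$ is commutative, associative, and idempotent because "take the $\prec$-greater of two elements" is the join of a total order, and $\sysadd$ inherits these laws from $\lor$ on $\setof{\vt,\vf}$; the paper already asserts this, so I may simply cite it.

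With the algebra in hand the CvRDT conditions follow by routine semilattice reasoning. Define $\func{f} \preceq \func{g}$ iff $\func{f} \dagger_{\mathit{op}} \func{g} = \func{g}$; reflexivity of $\preceq$ is idempotence, antisymmetry uses commutativity, transitivity uses associativity, so $(\vars{dict}, \preceq)$ is a partial order, and $\func{f} \dagger_{\mathit{op}} \func{g}$ is the least upper bound of $\func{f}$ and $\func{g}$ by the same three laws, so the state space is a join-semilattice whose join is $\vars{merge}$. Inflationarity of the only update, $\vars{merge}$, is immediate from $\func{f} \preceq \func{f} \dagger_{\mathit{op}} \func{g}$ and $\func{g} \preceq \func{f} \dagger_{\mathit{op}} \func{g}$, which also matches the informal remark that entries are never removed and values only move up the order. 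Hence $\vars{dict}$ meets the conditions of~\cite{CRDT} and is a CvRDT. The only genuine obstacle is the associativity case analysis for $\dagger_{\mathit{op}}$ across the asymmetric domain cases; the rest is bookkeeping.
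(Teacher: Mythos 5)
Your proposal is correct and follows essentially the same route as the paper, which justifies the proposition only by the one-line observation that $\vars{merge}$ is idempotent, commutative, and associative and that the state is monotonically non-decreasing, thereby meeting the sufficient conditions of~\cite{CRDT}. You simply carry out in full the case analysis (lifting the three laws from $\mathit{op}$ to $\dagger_{\mathit{op}}$ and deriving the join-semilattice order) that the paper leaves implicit.
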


%
\subsection{Basic Monitoring Concepts} \label{sec:common:mon}
%
We recall the basic building blocks of monitoring.
We consider the set of verdicts $\verdict = \setof{\vt, \vf, \vna}$ to denote the verdicts true, false, not reached (or inconclusive) respectively.
{A verdict in $\verdictb = \setof{\vt, \vf}$ is a \emph{final} verdict.
It indicates that the monitor has concluded its monitoring, and any further input will not change affect it.
Abstract states of a system are represented as a set of \emph{atomic propositions} ($\AP$).
A monitoring algorithm typically includes additional information such as a timestamp associated with the atomic propositions.
We capture this information as an encoding of the atomic propositions ($\atoms$)}, this encoding is left to the monitoring algorithm to specify.
$\expr_\atoms$ (resp. $\expr_\AP$) denotes the set of Boolean expressions over $\atoms$ (resp. $\AP$).
When omitted, $\expr$ refers to $\expr_\atoms$.
An encoder is a function $\senc : \expr_\AP \rightarrow \expr_\atoms$ that encodes the atomic propositions into atoms.
In this paper, we use two encoders: $\fid$ which is the identity function (it does not modify the atomic proposition), and $\fts_t$ which adds a timestamp $t$ to each atomic proposition.
A decentralized monitoring algorithm requires retaining, retrieving and communicating observations.
\begin{definition}[Event]
\label{def:common:event}
	An observation is a pair in $\AP \times \verdictb$ indicating whether or not a proposition has been observed. An event is a set of observations in $2^{\AP \times \verdictb}$.
\end{definition}
\begin{example}[Event]\label{ex:common:event}
 Event $\setof{\tuple{a, \vt}, \tuple{b, \vf}}$ over $\setof{a,b}$ indicates that proposition $a$ has been observed to be true, while $b$ has been observed to be false.
\end{example}
\begin{definition}[Memory]
\label{def:common:mem}
A memory is a \vars{dict}, and is modeled as a partial function $\mem : \atoms \rightarrow \verdict$ that associates an atom to a verdict.
The set of all memories is defined as $\Mem$.
\end{definition}
A monitor stores its events in a memory with some encoding (e.g.,  adding a timestamp).
An event can be converted to a memory by encoding the atomic propositions to atoms, and associating their truth value: $\cons: 2^{\AP \times \verdictb} \times (\expr_\AP \rightarrow \expr_\atoms) \rightarrow \Mem$.
\begin{example}[Memory]\label{ex:common:mem}
Let $\mathit{evt}$ $=$ $\setof{\tuple{a, \vt}, \tuple{b, \vf}}$ be an event at $t = 1$, the resulting memories using encodes $\fid$ and $\fts_1$ are:
$
	\cons(\mathit{evt}, \fid) = [a \mapsto \vt, b \mapsto \vf],
	\cons(\mathit{evt}, \fts_1) = [\tuple{1, a} \mapsto \vt, \tuple{1, b} \mapsto \vf].
$
\end{example}
If we impose that $\atoms$ be a totally ordered set, then two memories $\mem_1$ and $\mem_2$ can be merged by applying operator $\memadd$.
The total ordering is needed for operator $\memadd$.
This ensures that the operation is idempotent, associative and commutative.
Monitors that exchange their memories  and merge them  have a consistent snapshot of the memory, regardless of the ordering.
Since a memory is a \vars{dict} and $\memadd$ is idempotent, associative, and commutative, it follows from \rprop{prop:common:dict} that a memory is a CvRDT.
\begin{corollary}\label{prop:mem-crdt}
A memory with operation $\memadd$ is a CvRDT.
\end{corollary}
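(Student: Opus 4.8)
The plan is to derive Corollary~\ref{prop:mem-crdt} directly from Proposition~\ref{prop:common:dict} by checking that a memory is an instance of \vars{dict} and that $\memadd$ satisfies the hypotheses needed for the \vars{merge} operation of a \vars{dict} to yield a CvRDT. First I would observe that, by \rdef{def:common:mem}, a memory $\mem : \atoms \rightarrow \verdict$ is literally a \vars{dict} whose key set is $\atoms$ and whose value set is $\verdict$, with \vars{query} reading off $\mem(k)$ for $k \in \fdom(\mem)$. It therefore suffices to instantiate the generic \vars{merge} operator $\dagger_{\mathit{op}}$ with $\mathit{op} = \memadd$ and confirm that the resulting operation meets the three algebraic conditions (idempotence, commutativity, associativity) that \rprop{prop:common:dict} requires for the CvRDT conclusion.

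The main work is to verify those three properties for $\memadd(\xv,\xv') = \xv'$ if $\xv \prec \xv'$ and $\xv$ otherwise, using that $\atoms$ is totally ordered (this is exactly the hypothesis imposed in the paragraph preceding the corollary). Idempotence: $\memadd(\xv,\xv) = \xv$ since $\xv \not\prec \xv$; so merging a memory with itself is the identity. Commutativity: given two distinct values $\xv,\xv'$, exactly one of $\xv \prec \xv'$ or $\xv' \prec \xv$ holds by totality, and in either case $\memadd$ returns the $\prec$-greater of the two, so $\memadd(\xv,\xv') = \memadd(\xv',\xv)$; if $\xv = \xv'$ both sides are $\xv$. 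Associativity: $\memadd(\memadd(\xv,\xv'),\xv'')$ and $\memadd(\xv,\memadd(\xv',\xv''))$ both compute the $\prec$-maximum of $\setof{\xv,\xv',\xv''}$, which is well defined and unique because $\prec$ is a total order on the finite set of the three elements. These are the same facts the text already asserts for $\memadd$ ("we define both of these operators to be commutative, idempotent, and associative"), so the proof is essentially a pointer back to that definition together with the total-order assumption on $\atoms$.

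Once $\memadd$ is established to be idempotent, commutative, and associative, the generic pairwise lifting $\dagger_{\memadd}$ on memories inherits these properties just as in the \vars{dict} case, and the state of a memory is monotonically increasing under it because entries are never removed. Hence all the conditions of~\cite{CRDT} invoked in the proof of \rprop{prop:common:dict} are met with $\mem$ in place of the abstract $\func{f}$, and we conclude that a memory with operation $\memadd$ is a CvRDT. I do not expect a genuine obstacle here; the only subtlety worth stating explicitly is that the total ordering on $\atoms$ (not just on $\verdict$) is what guarantees $\memadd$ is well defined and well behaved as the merge operator, which is why the corollary is phrased as a consequence of both \rprop{prop:common:dict} and the total-order assumption rather than of the proposition alone.
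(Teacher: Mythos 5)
Your proposal is correct and takes essentially the same route as the paper, which justifies this corollary in a single sentence immediately before its statement: a memory is a \vars{dict}, $\memadd$ is idempotent, commutative, and associative, hence \rprop{prop:common:dict} applies; you simply make the verification of those three algebraic properties explicit. One minor nit: since the lifted merge applies $\memadd$ to the \emph{values} $\mem_1(a)$ and $\mem_2(a)$, the total order $\prec$ is really required on the codomain $\verdict$ rather than on the key set $\atoms$ (the paper's own phrasing notwithstanding), so your closing remark attributes the needed hypothesis to the wrong side of the partial function.
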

In this paper, we perform monitoring by manipulating expressions in $\expr$.
The first operation we provide is $\rw$, which rewrites the expression to attempt to eliminate $\atoms$.
\begin{definition}[Rewriting an expression]
\label{def:common:rw}
An expression $\mathit{e}$ is rewritten with a memory $\mem$ using function $\rw: \expr \times \Mem \rightarrow \expr$ defined as follows:
\begin{align*}
	\rw(\mathit{e}, \mem) &= \texttt{ match $\mathit{e}$ with }\\
				 & \begin{array}{ll}
				\mid a \in \atoms  & \rightarrow \funcparts {
						\mem(a)		& \pif a \in dom(\mem)\\
						a			& \pelse
				}\\
				\mid \neg e'		   & \rightarrow \neg \rw(e', \mem)\\
				\mid e_1 \land e_2 & \rightarrow \rw(e_1,\mem) \land \rw(e_2, \mem)\\
				\mid e_1 \lor e_2 & \rightarrow \rw(e_1,\mem) \lor \rw(e_2, \mem)
\end{array}
\end{align*}
\end{definition}
Using information from a memory $\mem$, the expression is rewritten by replacing atoms with a final verdict (a truth value in $\verdictb$) in $\mem$ when possible.
Atoms that are not associated with a final verdict are kept in the expression.
Operation $\rw$ yields a smaller formula to work with.
\begin{example}[Rewriting]
\label{ex:common:rw}
We consider $\mem = [a \mapsto \vt, b \mapsto \vf];$ and $e = (a \lor b) \land c$.
We have $\mem(a) = \vt$, $\mem(b) = \vf$, $\mem(c) = \vna$.
Since $c$ is associated with $\vna \not\in \verdictb$ then it will not be replaced when the expression is evaluated.
The resulting expression is $\rw(e,\mem) = (\vt \lor \vf)\land c$.
\end{example}
We eliminate additional atoms using Boolean logic.
We denote by $\fsimp(\vars{expr})$ the simplification of expression $\vars{expr}$~\footnote{This is also known as The Minimum Equivalent Expression problem~\cite{CircuitMin}.}.
\begin{example}[Simplification] \label{ex:common:simplify}
Consider $\mem = [a \mapsto \vt]$ and  $e = (a \land b) \lor (a \land \neg b)$.
We have $e' = \rw(e, \mem) = (b \lor \neg b)$.
Atoms can be eliminated with $\fsimp(e')$.
We finally get $\vt$.
\end{example}
We combine both rewriting and simplification in the $\seval$ function which determines a verdict from an expression $\mathit{e}$.
\begin{definition}[Evaluating an expression]
\label{def:common:eval}
The evaluation of a Boolean expression $\mathit{e} \in \expr$ using a memory $\mem$ yields a verdict. Function $\seval: \expr \times \Mem \rightarrow \verdict$ is defined as:
\begin{align*}
	\seval (\mathit{e}, \mem) &= \funcparts{
      \vt & \pif \fsimp(\rw(\mathit{e}, \mem) \Leftrightarrow \vt,\\
      \vf & \pif \fsimp(\rw(\mathit{e}, \mem) \Leftrightarrow \vf,\\
      \vna& \pelse.
   }
\end{align*}
\end{definition}
Function $\seval$ returns the verdict $\vt$ (resp. $\vf$)  if the simplification after rewriting is (Boolean) equivalent to  $\vt$ (resp. $\vf$), otherwise it returns verdict $\vna$.
\begin{example}[Evaluating expressions]\label{ex:common:eval}
Consider $\mem = [a \mapsto \vt, b \mapsto \vf]$ and $e = (a \lor b) \land c$.
We have $\fsimp(\rw(e, \mem)) = \fsimp((\vt \lor \vf) \land c) =  c$, and $\seval(e,\mem) = \vna$ which depends on $c$: we cannot emit a final verdict before observing $c$.
\end{example}
A decentralized system is a set of components $\comps$.
We assign a sequence of events to each component using a decentralized trace function.
\begin{definition}[Decentralized trace]
\label{def:common:dtrace}
A decentralized trace of length $n$ is a function $\trace : [1, n] \times \comps \rightarrow 2^{\AP \times \verdictb}$ (where $[1, n]$ denotes the interval of the $n$ first non-zero natural numbers).%
\end{definition}
Function $\trace$ assigns an event to a component for a given timestamp.
We denote by $\traces$ the set of all possible decentralized traces.
We additionally define function $\clookup : \AP \rightarrow \comps$ to assigns an atomic proposition to a component\footnote{We assume that (1) no two components can observe the same atomic propositions, and (2) a component has at least one observation at all times (a component with no observations to monitor, can be simply considered excluded from the system under monitoring).}.
The function $\clookup$ is defined as
$\clookup(ap) = c \mbox { s.t. } \exists t \in \timestamp, \exists v \in \verdictb : \tuple{ap, v} \in \trace(t, c)$.

We consider timestamp 0 to be associated with the initial state, therefore our traces start at 1.
The length of a trace tr is denoted by $|\trace|$.
An empty trace has length 0 and is denoted by $\emptytrace$.
Monitoring using LTL or finite-state automata relies on sequencing the trace.
Events must be totally ordered.
A timestamp indicates simply the order of the sequence of events.
As such, a timestamp represents a logical time, it can be seen as a \emph{round number}.
Every round consists in a transition taken on the automaton after reading a part of the word.
While $\trace$ gives us a view of what components can locally see, we reconstruct the global trace to reason about all observations.
A global trace of the system is therefore a sequence of events.
A global trace encompasses all observations observed locally by components.
\begin{definition}[Reconstructing a global trace]
\label{def:common:gtrace}
Given a decentralized trace $\trace$ of length $n$, we reconstruct the global trace using function $\rho : \left([1, n] \times \comps \rightarrow 2^{\AP \times \verdictb}\right) \rightarrow \left([1, n] \rightarrow 2^{\AP \times \verdictb}\right)$ defined as $\rho(\trace) = \mathit{evt}_1 \cdot \hdots \cdot \mathit{evt}_{n}$ s.t. $\forall i \in [1, n] : \mathit{evt}_i = \bigcup_{c \in \comps} \trace(i, c)$.
\end{definition}
For each timestamp $i \in [1, n]$, we take all observations of all components and union them to get a global event.
Consequently, an empty trace yields an empty global trace, $\rho(\emptytrace) = \emptytrace$.
\begin{example}[Traces] \label{ex:traces}
	We consider a system of two components $A$ and $B$, that are associated with atomic propositions $a$ and $b$ respectively.
	An example decentralized trace of the system is given by $\trace = [1 \mapsto A \mapsto \setof{\tuple{a, \vt}}, 1 \mapsto B \mapsto \setof{\tuple{b, \vt}}, 2 \mapsto A \mapsto \setof{\tuple{a, \vt}}, 2 \mapsto B \mapsto \setof{\tuple{b, \vf}}]$.
	That is, component $A$ observes  proposition $a$ to be $\vt$ at both timestamps 1 and 2, while $B$ observes $b$ to be $\vt$ at timestamp 1 and $\vf$ at timestamp 2.
	The associated global trace is: $\rho(\trace) = \setof{\tuple{a, \vt}, \tuple{b, \vt}} \cdot \setof{\tuple{a, \vt}, \tuple{b, \vf}}$.
\end{example}
\section{Centralized Specifications}
\label{sec:cmon}
%
%
We now focus on a decentralized system specified by one global automaton.
We consider automata that emit 3-valued verdicts in the domain $\verdict$, similar to those in~\cite{DecentMon,LTL3Tools} for centralized systems.
{Using automata with 3-valued verdicts has been the topic of a lot of the Runtime Verification literature~\cite{LTL3Tools,BauerF12,DecentMon,FalconeCF14,FAMTL}}, we focus on extending the approach for decentralized systems in~\cite{DecentMon} to use a new data structure called Execution History Encoding ($\MemRep{}$).
Typically, monitoring is done by labeling an automaton with events, then playing the trace on the automaton and determining the verdict based on the reached state.
We present the $\MemRep{}$, a data structure that encodes the necessary information from an execution of the automaton.
Monitoring using EHEs ensures strong eventual consistency.
We begin by defining the specification automaton used for monitoring in \rsec{sec:cmon:pre}, then we present the $\MemRep{}$ data structure, illustrate its usage for monitoring in \rsec{sec:cmon:ehe}, and describe its use to reconcile partial observations in \rsec{sec:cmon:reconcile}.
%
\subsection{Preliminaries} \label{sec:cmon:pre}
%
Specifications are similar to the Moore automata generated by~\cite{LTL3Tools}.
We modify labels to be Boolean expressions over atomic propositions (in $\expr_\AP$).
{We choose to label the transitions with Boolean expressions as opposed to events, to  keep a homogeneous representation (with \MemRep{})}\footnote{Indeed, an event can be converted to an expression by the conjunction of  all observations, negating the terms that are associated with the verdict $\vf$.}.
\begin{definition}[Specification] \label{def:cmon:aut}
 The specification is a deterministic Moore automaton $\tuple{Q, q_0, \delta, \verdictf}$ where $q_0 \in Q$ is the initial state, $\delta : Q \times \expr_\AP \rightarrow Q$ is the transition function and $\verdictf : Q \rightarrow \verdict$ is the labeling function.
\end{definition}
The labeling function associates a verdict with each state.
When using multiple automata we use labels to separate them, $\aut_\ell = \tuple{Q_\ell, q_{\ell_0}, \delta_\ell, \verdictf_\ell}$.
We fix $\aut$ to be a specification automaton for the remainder of this section.
For monitoring, we are interested in events (\rdef{def:common:event}), we extend $\delta$ to events, and denote it by $\Delta$\footnote{We note that in this case, we are not using any encoding ($\atoms = \AP$).}.
\begin{definition}[Transition over events]
\label{def:cmon:aut-semantics}
	Given an event $\mathit{evt}$, we build the memory $\mem =\cons(\mathit{evt}, \fid)$.
	Then, function $\Delta : Q \times 2^{\AP \times \verdictb} \rightarrow Q$ is defined as follows:
	\begin{align*}
		\Delta(q, \mathit{evt}) &= \funcparts[ll]{
			q'    & \mbox{ if } \mathit{evt} \neq \emptyset \land \exists q' \in Q, \exists e \in \expr_\AP:
                               \delta(q, e) = q' \land \seval(e, \mem) = \vt,  \\
							q   & \mbox{ otherwise}.
		}
  \end{align*}
\end{definition}
A transition is taken only when an event contains observations (i.e., $\mathit{evt} \neq \emptyset$).
This allows the automaton to wait on observations before evaluating, as such it remains in the same state (i.e., $\Delta(q, \emptyset) = q$).
Upon receiving observations, we use $\mem$ to evaluate each label of an outgoing transition, and determine if a transition can be taken (i.e., $ \exists q' \in Q, \exists e \in \expr_\AP: \delta(q, e) = q' \land \seval(e, \mem) = \vt$).

To handle a trace, we extend $\Delta$ to its reflexive and transitive closure in the usual way, and note it $\Delta^*$.
For the empty trace, the automaton makes no moves, i.e.,  $\Delta^*(\emptytrace) = q_{0}$.
	\begin{figure}[t] %
		\centering%
		\scalebox{1}{\begin{tikzpicture}[aut]
	\node[location] (q0) at (0,0) {$q_0$};
	\node[location, accept, right=of q0] (q1) {$q_1$};

	\tconnect{q0}{q1}{$a \lor b$}{t1}
	\tconnect[loop right]{q1}{q1}{$\vt$}{t2}
	\tconnect[loop left]{q0}{q0}{$\neg a \land \neg b$}{t2}
\end{tikzpicture}} %
		\vspace{-1em}
		\caption{Representing $\mathrm{F}(a \lor b)$}%
		\label{fig:cmon:aut}%
		\vspace{-1em}
	\end{figure}%

\begin{example}[Monitoring using expressions]
\label{ex:cmon:aut}
We consider $\atoms = \AP = \setof{a, b}$ and the specification in \rfig{fig:cmon:aut}, we seek to monitor $\mathrm{F}(a \lor b)$.
The automaton consists of two states: $q_0$ and $q_1$ associated respectively with the verdicts $\vna$ and $\vt$.
We consider  at $t = 1$ the event $\mathit{evt} = \setof{\tuple{a, \vt}, \tuple{b, \vf}}$.
The resulting memory is $\mem = [a \mapsto \vt, b \mapsto \vf]$ (see \rex{ex:common:mem}).
The transition from $q_0$ to $q_1$ is taken since $\seval(a \lor b, \mem) = \vt$.
Thus we have $\Delta(q_0, e) = q_1$ with  verdict $\verdictf(q_1) = \vt$.
\end{example}
\begin{remark}[Properties and normalization]\label{rmk:normalized-aut}
We recall that the specification is a deterministic and complete automaton.
Hence, there are properties on the expressions that label the transition function.
For any $q \in Q$, we have:
\begin{enumerate}
  \item $
      \forall \mem \in \Mem:
        (\exists \tuple{q, e} \in \fdom(\delta): \seval(e, \mem) = \vt)
        \implies
        (\not \exists \tuple{q, e'} \in \fdom(\delta) \setminus \setof{\tuple{q,e}}: \seval(e', \mem) = \vt)
  $; and
  \item the disjunction of the labels of all outgoing transitions results in an expression that is a tautology.
\end{enumerate}
The first property states that for all possible memories encoded with $\fid$ no two (or more) labels can evaluate to $\vt$ at once.
It results from determinism: no two (or more) transitions can be taken at once.
The second property results from completeness: given any input, the automaton must be able to take a move.
Furthermore, we note that for each pair of states $\tuple{q,q'} \in Q \times Q$, we can rewrite $\delta$ such that there exists at most one expression $e \in \expr_\AP$, such that $\delta(q, e) = q'$, without loss of generality.
This is because for a pair of states, we can always disjoin the expressions to form only one expression, as it suffices that only one expression needs to evaluate to $\vt$ to reach $q'$.
By having at most one transition between any pair of states, we simplify the topology of the automaton.
\end{remark}
%
\subsection{Execution History Encoding}
\label{sec:cmon:ehe}
%
The execution of the specification automaton, is in fact, the process of monitoring, upon running the trace, the reached state determines the verdict.
An execution of the specification automaton can be seen as a sequence of states $q_0 \cdot q_1 \cdot \hdots q_t \hdots$.
It indicates that for each timestamp $t \in [0, \infty[$ the automaton is in the state $q_t$.
In a decentralized system, a component receives only local observations and does not necessarily have enough information to determine the state at a given timestamp.
Typically, when sufficient information is shared between various components, it is possible to know the state $q_t$ that is reached in the automaton at $t$ (we say that the state $q_t$ has been found, in such a case).
{The aim of the $\MemRep{}$ is to construct a data structure which follows the current state of an automaton, and in case of partial information, tracks the possible states the automaton can be in.
For that purpose, we need to ensure strong eventual consistency in determining the state $q_t$ of the execution of an automaton.}
That is, after two different monitors share their $\MemRep{}$, they should both be able to find  $q_t$ for  $t$ (if there exists enough information to infer the global state), or if not enough information is available, they both find no state at all.
\begin{definition}[Execution History Encoding - \MemRep{}] \label{def:cmon:ehe}
An Execution History Encoding (\MemRep{}) of the execution of an automaton $\aut$ is a partial function $\sys : \timestamp \times Q \rightarrow \expr$.
\end{definition}
For a given execution, we encode the conditions to be in a state at a given timestamp as an expression in $\expr$.
$\sys(t,q)$ is an expression used to track whether the automaton is in state $q$ at $t$. Using information from the execution stored in a memory $\mem$, if $\seval(\sys(t,q),\mem)$ is $\vt$, then we know that the automaton is indeed in state $q$ at timestamp $t$.
We use the notation $\timestamps{\sys}$, to denote all the timestamps that the \MemRep{} encodes, i.e., $\timestamps{\sys} = \setof{t \in \timestamp \mid \tuple{t,q} \in \fdom(\sys)}$.
Similarly to automata notation, if multiple \MemRep{}s are present, we use a label in the subscript to identify them and their respective operations ($\sys_\ell$ denotes the \MemRep{} of $\aut_\ell$).

To compute $\sys$ for a timestamp range, we will next define some (partial) functions: $\sreach$, $\verdictAt$, $\snext$, $\sto$, and $\smove$.
The purpose of these functions is to extract information from $\sys$ at a given timestamp, which we can use to recursively build $\sys$ for future timestamps.
Given a memory $\mem$ which stores atoms, function $\sreach$ determines if a state is reached at a timestamp $t$.
If the memory does not contain enough information to evaluate the expressions, then the state is $\fundef$.
The state $q$ at timestamp $t$ with a memory $\mem$ is determined by:
\[
	\sreach(\sys, \mem, t) = \funcparts{
				q & \pif \exists q \in Q: \seval(\sys(t,q), \mem) = \vt,\\
				\fundef & \pelse.\\
}\]
Function $\verdictAt$ is a short-hand to retrieve the verdict at a given timestamp $t$:
\[
	\verdictAt(\sys, \mem, t) = \funcparts{
			\verdictf(q)  & \pif \exists q \in Q: q = \sreach(\sys,\mem,t),\\
			\vna		  & \pelse.
}
\]
The automaton is in the first state at $t = 0$.
We start building up $\sys$ with the initial state and associating it with expression $\vt$: $\sys = [0 \mapsto q_0 \mapsto \vt]$.
Then for a given timestamp $t$, we use function $\snext$ to check the next possible states in the automaton by looking at the outgoing transitions for all states in $\sys$ at $t$:
\[
	\snext(\sys, t) = \setof{q'  \in Q \mid \exists \tuple{t,q} \in \fdom(\sys), \exists e \in \expr : \delta(q, e) = q'}.
\]
We now build the necessary expression to reach $q'$ from multiple states $q$ by disjoining the transition labels.
Since the label consists of expressions in $\expr_\AP$ we use an encoder to get an expression in $\expr_\atoms$
To get to the state $q'$ at $t+1$ from $q$ we conjunct the condition to reach $q$ at $t$.
\[
	\sto(\sys, t, q', \senc) = \smashoperator{\bigvee_{\setof{\tuple{q,e'} \mid  \> \delta(q, e') = q'}}}( \sys(t,q) \land \senc(e')  )
\]
By considering the disjunction, we cover all possible paths to reach a given state.
Updating the conditions for the same state on the same timestamp is done by disjoining the conditions:
\[
	\smove(\sys, t_s, t_e) = \funcparts{
		\smove(\sys', t_s + 1, t_e) & \mbox{ if } t_s < t_e,\\
		\sys & \mbox{ otherwise},\\
	}
\]
with $\sys' = \sys \sysadd \smashoperator{\biguplus\limits^{\lor}_{q' \in \snext(\sys,t_s)}} \setof{t_s + 1 \mapsto q' \mapsto \sto(\sys,t_s,q', \fts_{t_s + 1})}$.

Finally, $\sys'$ is obtained by considering the next states and merging all their expressions with $\sys$.
We use superscript to denote the encoding up to a given timestamp $t$ as $\sys^t$.

\gettable{
\small
\centering
\caption{A tabular representation of $\sys^2$}
\label{tbl:cmon:exec}
\begin{tabular}{| r | c | l |}
	\hline \textbf{t} & \textbf{q} & \textbf{e} \\
	\hline
	\hline 0 		  & $q_0$	   & $\vt$\\
	\hline 1		  & $q_0$	   & $\neg \tuple{1, a} \land \neg \tuple{1,b}$\\
	\hline 1 		  & $q_1$	   & $\tuple{1,a} \lor \tuple{1,b}$\\
	\hline 2		  & $q_0$	   & $(\neg \tuple{1, a} \land \neg \tuple{1,b}) \land (\neg \tuple{2, a} \land \neg \tuple{2,b}) $\\
	\hline 2		  & $q_1$	   & $(\tuple{1,a} \lor \tuple{1,b}) \lor ((\neg \tuple{1, a} \land \neg \tuple{1,b}) \land (\tuple{2,a} \lor \tuple{2,b}))$\\
	\hline
\end{tabular}
}

\begin{example}[Monitoring with $\MemRep$] \label{ex:cmon:exec}
We encode the execution of the automaton presented in \rex{ex:cmon:aut}.
{For this example, we use the encoder $\fts_n$ which appends timestamp $n$ to an atomic proposition.}
We have $\sys^0 = [0 \mapsto q_0 \mapsto \vt]$.
From $q_0$, it is possible to go to $q_0$ or $q_1$, therefore $\snext(\sys^0, 0) = \setof{q_0, q_1}$.
To move to $q_1$ at $t = 1$, we must be at $q_0$ at $t = 0$.
The following condition must hold: $\sto(\sys^0, 0, q_1, \fts_1) = \sys^0(0, q_0) \land (\tuple{1, a} \lor \tuple{1,b}) = \tuple{1,a} \lor \tuple{1,b}$.
The encoding up to timestamp $t=2$ is obtained with $\sys^2 = \smove(\sys^0, 0, 2)$ and is shown in \rtbl{tbl:cmon:exec}.
We consider the same event as in \rex{ex:common:mem} at $t = 1$, $\mathit{evt} = \setof{\tuple{a, \vt}, \tuple{b, \vf}}$.
Let $\mem = \cons(\mathit{evt}, \fts_1)  = [\tuple{1, a} \mapsto \vt, \tuple{1,b} \mapsto \vf]$.
It is possible to infer the state of the automaton after computing only $\sys^1 = \smove(\sys^0, 0, 1)$ by using $\sreach(\sys^1, \mem, 1) $, we evaluate:
\[
\begin{array}{lll}
	\seval(\sys^1(1, q_0), \mem) &= \neg \tuple{1, a} \land \neg \tuple{1,b}  &= \vf \\
	\seval(\sys^1(1, q_1), \mem) &= \tuple{1,a} \lor \tuple{1,b}  &= \vt
\end{array}
\]
We find that $q_1$ is the selected state, with verdict $\verdictf(q_1) = \vt$.
\end{example}
Since we are encoding deterministic automata, we recall from Remark~\ref{rmk:normalized-aut} that when a state $q$ is reachable at a timestamp $t$, no other state is reachable at $t$.
Moreover, the \MemRep{} construction using operation $\smove$ and encoder $\fts$ preserves determinism.

\begin{proposition}[Deterministic \MemRep{}]\label{prop:ehe:invariant}
 Given an \MemRep{} $\sys$ constructed with operation $\smove$ using encoder $\fts$, we have:
 \[
 \forall t \in \timestamps{\sys}, \forall \mem \in \Mem, \exists q \in Q: (\seval(\sys(t,q)) = \vt, \mem) \implies (\forall q' \in Q \setminus \{ q \} : \seval(\sys(t,q')) \neq \vt, \mem).
 \]
\end{proposition}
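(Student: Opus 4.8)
\medskip\noindent\emph{Proof strategy.}
The plan is to first recast the claim as a purely propositional statement about the expressions stored in $\sys$, and then prove that statement by induction along the $\smove$ construction. For the recasting I would use that $\rw(\cdot,\mem)$ is a homomorphism for $\neg,\wedge,\vee$ and merely substitutes the Boolean constant $\mem(a)\in\verdictb$ for each atom $a$ on which $\mem$ records a final verdict (Definition~\ref{def:common:rw}), together with the fact that $\fsimp$ preserves Boolean equivalence (Definition~\ref{def:common:eval}). These two facts yield: $\seval(e,\mem)=\vt$ holds iff the expression obtained from $e$ by that substitution is a propositional tautology in the atoms left untouched by $\rw$. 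Hence it suffices to prove the stronger, memory-free statement that for every $t\in\timestamps{\sys}$ and all distinct $q,q'$ with $\tuple{t,q},\tuple{t,q'}\in\fdom(\sys)$, the expression $\sys(t,q)\wedge\sys(t,q')$ is unsatisfiable over $\atoms$. Indeed, if $\seval(\sys(t,q),\mem)=\vt$ and $\seval(\sys(t,q'),\mem)=\vt$ held simultaneously, then --- both rewriting to tautologies --- any truth assignment to $\atoms$ extending the final verdicts recorded by $\mem$ would satisfy both $\sys(t,q)$ and $\sys(t,q')$, contradicting unsatisfiability of their conjunction; and a state $q'$ with $\tuple{t,q'}\notin\fdom(\sys)$ is never reached (its condition is undefined), so it needs no separate argument.

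I would establish the memory-free statement by induction on $t$, following how $\smove$ extends $\sys$ by one timestamp at a time from $\sys^0=[0\mapsto q_0\mapsto\vt]$. The base case $t=0$ is vacuous, as $q_0$ is the only state in the domain at timestamp $0$. For the inductive step, recall that $\smove$ sets, for each $q'\in\snext(\sys,t)$,
\[
\sys(t+1,q')=\sto(\sys,t,q',\fts_{t+1})=\bigvee_{p\,:\,\delta(p,e_{p,q'})=q'}\bigl(\sys(t,p)\wedge\fts_{t+1}(e_{p,q'})\bigr),
\]
where by Remark~\ref{rmk:normalized-aut} I may assume a single transition label $e_{p,q'}$ for each ordered pair of states. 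Given distinct $q_1,q_2\in\snext(\sys,t)$, I would distribute the conjunction $\sys(t+1,q_1)\wedge\sys(t+1,q_2)$ into a disjunction of terms $\sys(t,p_1)\wedge\fts_{t+1}(e_{p_1,q_1})\wedge\sys(t,p_2)\wedge\fts_{t+1}(e_{p_2,q_2})$ indexed by pairs of predecessors $\tuple{p_1,p_2}$, and it is enough to show each such term is unsatisfiable over $\atoms$.

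There are two cases. If $p_1\neq p_2$, the term contains the subconjunction $\sys(t,p_1)\wedge\sys(t,p_2)$, unsatisfiable by the induction hypothesis, hence so is the term. If $p_1=p_2=p$, the term contains $\fts_{t+1}(e_{p,q_1})\wedge\fts_{t+1}(e_{p,q_2})$; since $\fts_{t+1}$ is an injective renaming of atoms, this subconjunction is satisfiable over $\atoms$ iff $e_{p,q_1}\wedge e_{p,q_2}$ is satisfiable over $\AP$. But a satisfying assignment of the latter yields, via $\cons(\cdot,\fid)$, a memory $\mem$ with $\seval(e_{p,q_1},\mem)=\seval(e_{p,q_2},\mem)=\vt$, i.e.\ two distinct outgoing transitions of $p$ (leading to the distinct states $q_1$ and $q_2$) simultaneously enabled, contradicting the determinism property (Remark~\ref{rmk:normalized-aut}, item~1). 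So every term is unsatisfiable, whence $\sys(t+1,q_1)\wedge\sys(t+1,q_2)$ is unsatisfiable, closing the induction; the proposition then follows from the recasting in the first paragraph.

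The main obstacle is the careful bookkeeping between the two vocabularies, not the induction itself. Concretely, one must justify that rewriting an expression by a memory coincides with partially applying the corresponding truth assignment and that $\fsimp$ is faithful to Boolean equivalence, so that the predicate $\seval(e,\mem)=\vt$ is literally ``$\rw(e,\mem)$ is a propositional tautology''; and one must turn the determinism of $\aut$ --- which Remark~\ref{rmk:normalized-aut} phrases in terms of $\seval$ over \emph{all} memories --- into honest propositional unsatisfiability of a two-label conjunction, which requires exhibiting the explicit witness $\cons(\cdot,\fid)$ obtained from a satisfying assignment. Two minor points are worth recording: completeness of $\aut$ (item~2 of the Remark) plays no role here, only determinism; and the freshness of the timestamp-$(t+1)$ atoms introduced by $\fts_{t+1}$ is convenient but not essential, since the induction hypothesis already kills every term with $p_1\neq p_2$.
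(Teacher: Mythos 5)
Your proof is correct, and it follows the same skeleton as the paper's: induction along the timestamps added by $\smove$, normalization so that each ordered pair of states carries a single label, a case split on the predecessor states contributing to $\sto$, and a final appeal to determinism of $\delta$ (item~1 of Remark~\ref{rmk:normalized-aut}). The genuine difference is the inductive invariant. The paper fixes a memory, uses the induction hypothesis to single out the unique state $q_{\mrm{u}}$ whose expression at $n-1$ evaluates to $\vt$ under that memory, and then discards the remaining disjuncts of $\sto$ as ``unable to evaluate to $\vt$''; you instead prove the memory-free statement that $\sys(t,q)\wedge\sys(t,q')$ is propositionally unsatisfiable for distinct $q,q'$, and recover the proposition by noting that two expressions which $\seval$ to $\vt$ under one memory are jointly satisfiable. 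Your reformulation buys rigor at exactly the point where the paper is loose: dropping disjuncts that are merely ``not $\vt$'' under a given memory is not by itself sound (a disjunction of non-tautologies can still rewrite to a tautology), whereas a disjunction all of whose terms are unsatisfiable is unsatisfiable, so your distribution argument closes that gap; to make the paper's per-memory argument airtight one would in any case have to pass to a total truth assignment, which is precisely your satisfiability witness $\cons(\cdot,\fid)$. The paper's version is shorter and stays closer to the intended semantic reading of $\seval$; yours is the more careful propositional rendering of the same idea. Your two closing observations --- that completeness of the automaton is not used, and that freshness of the $\fts_{t+1}$-atoms is inessential --- are both accurate.
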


Determinism is preserved since, by using encoder $\fts$, we only change an expression to add the timestamp.
By construction, when there exists a state $q$ s.t. $\seval(\sys(t,q), \mem) = \vt$, such a state is unique, since the \MemRep{} is built using a deterministic automaton.
The full proof is in Appendix~\ref{sec:app:proofs}.

While the construction of a \MemRep{} preserves the determinism found in the automaton, an important property is in ensuring that the \MemRep{} encodes correctly the the automata execution.

\begin{proposition}[Soundness]\label{prop:cmon:soundness}
Given a decentralized trace $\trace$ of length $n$, we reconstruct the global trace $\rho(\trace) = \mathit{evt}_1 \cdot \hdots \cdot \mathit{evt}_n$, we have: $\Delta^*(q_0, \rho(\trace)) = \sreach(\sys^n, \mem^n, n)$, with:\\
$
\begin{array}{rl}
\quad \sys^n &= \smove([0 \mapsto q_0 \mapsto \vt], 0, n), \mbox{and} \\
\quad \mem^n &= \biguplus^{2}_{t \in [1, n]} \setof{\cons(\mathit{evt}_{t}, \fts_{t})}.
\end{array}
$
\end{proposition}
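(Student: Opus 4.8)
The plan is to prove the statement by induction on the trace length $n$, establishing a stronger invariant that relates the *entire* reached state (not just the verdict) to the evaluation of the \MemRep{} expressions at each timestamp. Concretely, I would show: for every $k \in [0,n]$, writing $\mem^k = \biguplus^{2}_{t \in [1,k]} \setof{\cons(\mathit{evt}_t, \fts_t)}$ and $\sys^k = \smove([0 \mapsto q_0 \mapsto \vt], 0, k)$, we have $\seval(\sys^k(k, q), \mem^k) = \vt$ iff $q = \Delta^*(q_0, \mathit{evt}_1 \cdots \mathit{evt}_k)$, and $\seval(\sys^k(k,q'),\mem^k)=\vf$ for all other $q'$ appearing at timestamp $k$. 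Since $\sreach$ is defined to return precisely the (by \rprop{prop:ehe:invariant}, unique) state whose expression evaluates to $\vt$, the proposition follows by taking $k = n$.

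\emph{Base case.} For $n = 0$ the global trace is empty, $\Delta^*(q_0, \emptytrace) = q_0$, and $\sys^0 = [0 \mapsto q_0 \mapsto \vt]$, so $\seval(\vt, \mem^0) = \vt$ and $\sreach(\sys^0, \mem^0, 0) = q_0$. The harder direction of the induction step: assume the invariant holds at $k$ and consider $k+1 \le n$. By \rdef{def:cmon:aut-semantics}, $\Delta^*(q_0, \mathit{evt}_1 \cdots \mathit{evt}_{k+1}) = \Delta(q_k, \mathit{evt}_{k+1})$ where $q_k$ is the state reached at $k$; call the new state $q_{k+1}$, reached via a transition $\delta(q_k, e) = q_{k+1}$ with $\seval(e, \cons(\mathit{evt}_{k+1}, \fid)) = \vt$ (or $q_{k+1} = q_k$ with no transition taken, but by completeness — second property of \rmk{normalized-aut} — some outgoing label is satisfied, and we can fold the "stutter" case into the self-loop labeled by the negation of the other labels). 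The definition of $\smove$ gives $\sys^{k+1}(k+1, q') = \sto(\sys^k, k, q', \fts_{k+1}) = \bigvee_{\delta(q,e') = q'} (\sys^k(k,q) \land \fts_{k+1}(e'))$. I would evaluate this under $\mem^{k+1}$: by the induction hypothesis the disjunct $\sys^k(k,q)$ evaluates to $\vt$ exactly when $q = q_k$ and to $\vf$ otherwise, so only disjuncts with $q = q_k$ survive; and $\seval(\fts_{k+1}(e'), \mem^{k+1}) = \seval(e', \cons(\mathit{evt}_{k+1}, \fid))$ because $\fts_{k+1}$ merely tags propositions with the timestamp $k+1$, which is exactly the timestamp under which $\mathit{evt}_{k+1}$ was encoded into $\mem^{k+1}$, and entries from other timestamps in $\mem^{k+1}$ are irrelevant to these atoms. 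Hence $\seval(\sys^{k+1}(k+1, q'), \mem^{k+1}) = \vt$ iff there is a transition $\delta(q_k, e') = q'$ with $\seval(e', \cons(\mathit{evt}_{k+1},\fid)) = \vt$, i.e., iff $q' = q_{k+1}$; determinism (first property of \rmk{normalized-aut}) ensures no other $q'$ evaluates to $\vt$, and I would argue the remaining surviving disjuncts evaluate to $\vf$ using that the disjunction of all outgoing labels is a tautology, so each non-selected label evaluates to $\vf$ on the complete memory $\cons(\mathit{evt}_{k+1}, \fid)$.

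\emph{Main obstacle.} The delicate point is the interaction between the *partial* memory $\mem^{k+1}$ used in the \MemRep{} evaluation and the *complete* memory $\cons(\mathit{evt}_{k+1}, \fid)$ used by $\Delta$: one has to argue that $\seval$ behaves the same on the relevant atoms. This works here because the reconstructed global trace $\rho(\trace)$ unions all components' observations, and \rsec{sec:common:mon}'s assumption (the footnote on $\clookup$) guarantees each atomic proposition is observed by exactly one component and at all times, so $\mathit{evt}_{k+1}$ is a *complete* event over $\AP$ and $\cons(\mathit{evt}_{k+1}, \fid)$ assigns a final verdict to every atom — hence $\rw$ followed by $\fsimp$ eliminates every atom and the tautology/determinism properties apply cleanly. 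A secondary subtlety is the "waiting" clause in $\Delta$ (the $\mathit{evt} = \emptyset$ case) and in the \MemRep{} self-loops; since $\rho(\trace)$ produces nonempty events under the above assumption, I would note this case does not arise, but if one drops that assumption it must be handled by checking that an empty event produces a memory that leaves every outgoing label unevaluable, so only the self-loop expression (which is $\sys^k(k,q_k)$ conjoined with a tautology after simplification) survives. Modulo these bookkeeping points, the argument is a routine induction leaning on \rprop{prop:ehe:invariant} for uniqueness and on \rmk{normalized-aut} for the determinism/completeness of the labels.
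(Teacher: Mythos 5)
Your proposal is correct and follows essentially the same route as the paper's proof: an induction on the trace length whose inductive step rests on (i) the observation that $\seval(\fts_{k+1}(e'),\mem^{k+1})$ agrees with $\seval(e',\cons(\mathit{evt}_{k+1},\fid))$ — which the paper isolates as a separate lemma on evaluation modulo encoding — and (ii) determinism of the \MemRep{} (\rprop{prop:ehe:invariant}) together with the normalization properties of Remark~\ref{rmk:normalized-aut} to show that the $\sto$-disjunction singles out exactly $q_{k+1}$. Your slightly stronger invariant (other states evaluate to $\vf$ rather than merely not-$\vt$, justified by completeness of the reconstructed events) and your explicit treatment of the empty-event clause are harmless refinements of the same argument.
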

\MemRep{} is sound {\wrt} the specification automaton; both the automaton and $\MemRep{}$ will indicate the same state reached with a given trace.
Thus, the verdict is the same as it would be in the automaton.
The proof is by induction on the reconstructed global trace ($|\rho(\trace)|$). 
\paragraph{Proof sketch.}
We first establish that both the \MemRep{} and the automaton memories evaluate two similar expressions modulo encoding to the same result.
That is, for the given length $i$, the generated memories at $i+1$ with encodings $\fid$ and $\fts_{i+1}$ yield similar evaluations for the same expression $\mathit{e}$.
Then, starting from the same state $q_i$ reached at length $i$, we assume $\Delta^*(q_0, \mathit{evt}_1 \cdot \hdots \cdot \mathit{evt}_i) = \sreach(\sys^i, \mem^i, i) = q_i$ holds.
We prove that it holds at $i+1$, by building the expression (for each encoding) to reach state $q_{i+1}$ at $i+1$, and showing that the generated expression is the only expression that evaluates to $\vt$.
As such, we determine that both evaluations point to $q_{i+1}$ being the next state.
The full proof is in Appendix~\ref{sec:app:proofs}.
%
\subsection{Reconciling Execution History} \label{sec:cmon:reconcile}
%
\MemRep{} provides interesting properties for decentralized monitoring.
Since the data structure $\MemRep{}$ is a reification of \vars{dict}, as it maps a pair in $\timestamp \times Q$ to an expression in $\expr$, and since the combination is done using $\sysadd$, which is idempotent, commutative and associative,  it follows from \rprop{prop:common:dict} that  $\MemRep{}$ is a CvRDT.
\begin{corollary}
\label{prop:ehe-crdt}
An \MemRep{} with operation $\sysadd$ is a CvRDT.
\end{corollary}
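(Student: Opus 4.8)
The plan is to derive the statement as a specialization of \rprop{prop:common:dict}, exactly as Corollary~\ref{prop:mem-crdt} specializes it to memories. The first step is to make explicit that an \MemRep{} $\sys : \timestamp \times Q \rightarrow \expr$ is a reification of \vars{dict}: its key set is $\timestamp \times Q$, its value set is $\expr$, $\vars{query}$ returns $\sys(t,q)$ whenever $\tuple{t,q} \in \fdom(\sys)$, and merging two \MemRep{}s is precisely the \vars{dict} composition $\sysadd$, whose underlying binary operator is $\sysadd(e,e') = e \lor e'$. By \rprop{prop:common:dict}, it then suffices to verify that this instantiation meets the CvRDT hypotheses: the \vars{merge} operator must be commutative, associative and idempotent, and the operations that build an \MemRep{} must only grow it with respect to the order induced by \vars{merge}.

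For the algebraic part, commutativity and associativity of $\sysadd$ are inherited directly from $\lor$. Idempotence, $e \lor e \equiv e$, holds only modulo Boolean equivalence; this is harmless because every consumer of an \MemRep{} accesses its entries through $\seval$ (hence through $\rw$ and $\fsimp$), and $\seval(\cdot,\mem)$ is invariant under $\equiv$, so \MemRep{} values may be handled up to logical equivalence. Hence $(\expr,\sysadd)$ is a join-semilattice, and — reusing the \vars{dict} argument of \rprop{prop:common:dict} — the relation $\sys_1 \leq \sys_2 \xiff \sys_1 \sysadd \sys_2 = \sys_2$ is a partial order with $\sysadd$ as least upper bound (reflexivity from idempotence, transitivity and antisymmetry from associativity and commutativity, together with the key/value bookkeeping of the \vars{dict} composition on differing domains). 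This gives the first two CvRDT conditions: the \MemRep{}s form a join-semilattice and \vars{merge} computes the join.

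It remains to check that \MemRep{} construction is inflationary. The construction starts from $[0 \mapsto q_0 \mapsto \vt]$ and proceeds by $\smove$ steps, each of the form $\sys' = \sys \sysadd \bigl(\biguplus^{\lor}_{q' \in \snext(\sys,t_s)} \setof{t_s+1 \mapsto q' \mapsto \sto(\sys,t_s,q',\fts_{t_s+1})}\bigr)$; by definition of the \vars{dict} composition this never deletes a key and only replaces an existing value $e$ by $e \lor (\dots)$, i.e. by something $\geq e$ in the semilattice, so $\sys \leq \sys'$ and, iterating, $\sys^{t} \leq \sys^{t'}$ for $t \leq t'$. The same holds for merges with \MemRep{}s received from other monitors, which are $\sysadd$ themselves. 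All three CvRDT conditions being met, \rprop{prop:common:dict} yields the claim. I expect the only delicate point to be the passage to Boolean-equivalence classes needed to make $\lor$ genuinely idempotent (and the matching fact that $\seval$ and $\sreach$ factor through $\equiv$); everything else is a routine instantiation of the generic \vars{dict} result, which is precisely why the statement is stated as a corollary.
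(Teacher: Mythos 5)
Your proof is correct and follows essentially the same route as the paper: the corollary is obtained by instantiating \rprop{prop:common:dict} with keys in $\timestamp \times Q$, values in $\expr$, and $\sysadd$ as the (commutative, associative, idempotent) merge operator. Your extra care about idempotence of $\lor$ holding only modulo Boolean equivalence (and $\seval$ factoring through $\equiv$) is a sound refinement of a point the paper simply asserts, but it does not change the argument.
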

Two (or more) components sharing $\MemRep{}$s and merging them will be able to infer the same execution history of the automaton.
That is, components will be able to aggregate the information of various $\MemRep{}$s, and are able to determine the reached state, if possible, or that no state was reached.
Merging two $\MemRep{}$s of the same automaton with $\sysadd$ allows us to aggregate information from two partial histories.

However, two $\MemRep{}$s for the same automaton contain the same expression if constructed with $\smove$.
To incorporate the memory in a \MemRep{}, we generate a new \MemRep{} that contains the rewritten and simplified expressions for each entry.
To do so we define function $\finc$ to apply to a whole \MemRep{} and a memory to generate a new \MemRep{}: $\finc(\sys, \mem) = \biguplus^{2}_{\tuple{t,q} \in \fdom(\sys)} \setof{[\tuple{t,q} \mapsto \fsimp(\rw(\sys(t,q), \mem))]}$.
We note, that for a given $\sys$ and $\mem$, $\finc(\sys, \mem)$ maintains the invariant of \rprop{prop:ehe:invariant}.
We are simplifying expressions or rewriting atoms with their values in the memory which is what $\seval$ already does for each entry in the $\MemRep{}$.
That is, $\finc(\sys, \mem)$ is a valid representation of the same deterministic and complete automaton as $\sys$.
However, $\finc(\sys, \mem)$  incorporates information from memory $\mem$ in addition.
\begin{proposition}[Memory obsolescence]\label{prop:inc}
  \[ \forall \tuple{t,q} \in \fdom(\finc(\sys,\mem)):
    \seval(\sys(t,q), \mem) \equiv \seval(\finc(\sys,\mem)(t,q), []).
  \]
\end{proposition}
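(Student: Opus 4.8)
The plan is to unfold the definition of $\finc$, collapse the right-hand side using the fact that rewriting against the empty memory changes nothing, and then match the two sides verdict-by-verdict via the equivalence-preservation property of $\fsimp$.

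First I would make both sides concrete. In $\finc(\sys,\mem) = \biguplus^{2}_{\tuple{t,q} \in \fdom(\sys)} \setof{[\tuple{t,q} \mapsto \fsimp(\rw(\sys(t,q), \mem))]}$ every singleton \vars{dict} carries a pairwise-distinct key, so the pairwise merge merely unions them: $\fdom(\finc(\sys,\mem)) = \fdom(\sys)$ and $\finc(\sys,\mem)(t,q) = \fsimp(\rw(\sys(t,q), \mem))$ for every $\tuple{t,q} \in \fdom(\sys)$. In particular $\sys(t,q)$ is defined whenever $\tuple{t,q} \in \fdom(\finc(\sys,\mem))$, so the left-hand side is well defined. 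Fix such a pair and abbreviate $e = \sys(t,q)$ and $e' = \fsimp(\rw(e, \mem))$.

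Next I would reduce $\seval(e', [])$. Since $\fdom([]) = \emptyset$, a one-line structural induction on $e'$ following \rdef{def:common:rw} shows $\rw(e', []) = e'$ (the atom rule always takes its ``otherwise'' branch, the Boolean connectives are handled homomorphically). Hence by \rdef{def:common:eval}, $\seval(e', [])$ equals $\vt$ iff $\fsimp(e') \equiv \vt$, equals $\vf$ iff $\fsimp(e') \equiv \vf$, and equals $\vna$ otherwise. The crucial ingredient is then that $\fsimp$ returns a Boolean-equivalent expression, i.e. $\fsimp(g) \equiv g$ for every $g \in \expr$ — the defining property of simplification used throughout (cf. \rex{ex:common:simplify} and \rdef{def:common:eval}). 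Applying it to $g = e'$ and to $g = \rw(e,\mem)$ gives $\fsimp(e') \equiv e' = \fsimp(\rw(e,\mem)) \equiv \rw(e,\mem)$, so $\fsimp(e') \equiv \vt$ holds exactly when $\fsimp(\rw(e,\mem)) \equiv \vt$ does, and likewise for $\vf$. By \rdef{def:common:eval} these latter conditions are precisely the ones defining $\seval(e,\mem) = \vt$ and $\seval(e,\mem) = \vf$; since $\vt \not\equiv \vf$, at most one of them can hold, so the $\vna$ cases coincide by elimination. This yields $\seval(\sys(t,q),\mem) \equiv \seval(\finc(\sys,\mem)(t,q), [])$, as required.

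The whole argument is essentially bookkeeping, and I do not expect a genuine obstacle. The only point that deserves care is that it rests on treating $\fsimp$ as equivalence-preserving (and on $\rw(\cdot,[])$ being the identity on expressions); both are immediate from the paper's own use of these operations, and could be isolated as a small auxiliary lemma if one wants the proof to be fully self-contained.
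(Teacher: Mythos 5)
Your proof is correct and takes essentially the same route as the paper's, whose entire argument is the one-liner ``follows directly by construction of $\finc$ and the definition of $\seval$ (which uses functions $\fsimp$ and $\rw$)'' — you have simply spelled out the bookkeeping that this leaves implicit. The two auxiliary facts you isolate ($\rw(\cdot,[])$ being the identity on expressions, and $\fsimp$ preserving Boolean equivalence) are exactly what the paper's appeal to the definitions rests on, so nothing further is needed.
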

\begin{proof}
	Follows directly by construction of $\finc$ and the definition of $\seval$ (which uses functions $\fsimp$ and $\rw$).
\end{proof}
\rprop{prop:inc} ensures that it is possible to directly incorporate a memory in an $\MemRep{}$, making the memory no longer necessary.
This is useful for algorithms that communicate the \MemRep{}, as they do not need to also communicate the memory.

By rewriting the expressions, the \MemRep{}s of two different monitors receiving different observations contain different expressions.
However, since they still encode the same automaton, and observations do not conflict, merging with $\sysadd$ shares useful information.

\begin{corollary}
  Given an \MemRep{} $\sys$ constructed using function $\smove$, and two memories $\mem_1$ and $\mem_2$ that do not observe conflicting observations,
  the two \MemRep{}s $\sys_1 = \finc(\sys, \mem_1)$ and $\sys_2 = \finc(\sys, \mem_2)$ have the following properties $\forall \tuple{t,q} \in \fdom(\sys')$:
  \begin{enumerate}
  \item $\sys' = \sys_1 \sysadd \sys_2$ is deterministic (\rprop{prop:ehe:invariant});
  \item $\seval(\sys'(t,q), []) \implies \seval(\sys(t,q), \mem_1 \memadd \mem_2)$;
  \item $(\seval(\sys'(t,q), []) = \vt) \implies ((\seval(\sys'(t,q), \mem_1) = \vt) \land (\seval(\sys'(t,q), \mem_2) = \vt))$;
  \item $(\seval(\sys'(t,q), []) = \vt) \implies
        (\seval(\sys_1(t,q), \mem_1) \neq \vf \land \seval(\sys_2(t,q), \mem_2) \neq \vf).
  $
\end{enumerate}
\end{corollary}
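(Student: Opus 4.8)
The plan is to reduce all four items to two elementary facts about rewriting. The first is folklore: $\fsimp$ preserves Boolean equivalence, $\rw(\cdot,\mem)$ commutes with $\neg,\land,\lor$, it is idempotent in its memory ($\rw(\rw(e,\mem),\mem)=\rw(e,\mem)$), and it maps tautologies to tautologies. The second is a "substitution/consistency" lemma: if $\mem'$ is a subfunction of $\mem''$ and an assignment $\alpha$ agrees with $\mem''$ on $\fdom(\mem'')$, then $\rw(e,\mem')$ and $e$ evaluate identically under $\alpha$ — intuitively, rewriting by $\mem'$ only pre-fills atoms with the values $\alpha$ would have supplied anyway (proof by a one-line structural induction). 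From the definition of $\finc$ and from $\sysadd=\lor$, I would first record that $\sys_1,\sys_2,\sys'$ all have domain $\fdom(\sys)$ and that $\sys'(t,q)$ is Boolean-equivalent to $\rw(e_q,\mem_1)\lor\rw(e_q,\mem_2)$, where $e_q := \sys(t,q)$ (using that $\fsimp$ preserves equivalence). Throughout, "non-conflicting" is used in the form that $\mem_1$ and $\mem_2$ agree on their common domain, so that $\mem_1\memadd\mem_2$ restricts to each of them; and the determinism claim of item~1 is read over memories compatible with the incorporated observations $\mem_1\memadd\mem_2$ (which is the case for every memory produced during monitoring, since observations are merged without conflict).

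I would dispatch item~3 first, as it is immediate: $\seval(\sys'(t,q),[])=\vt$ says $\sys'(t,q)$ is a tautology, hence so is $\rw(\sys'(t,q),\mem_i)$, giving $\seval(\sys'(t,q),\mem_i)=\vt$. For item~4, the hypothesis makes $\rw(e_q,\mem_1)\lor\rw(e_q,\mem_2)$ a tautology; if $\seval(\sys_i(t,q),\mem_i)=\vf$ then, using $\sys_i(t,q)\equiv\rw(e_q,\mem_i)$ and idempotence of $\rw$ in $\mem_i$, one gets $\rw(e_q,\mem_i)\equiv\vf$, so $\rw(e_q,\mem_{3-i})\equiv\vt$; evaluating both at an assignment $\alpha$ agreeing with $\mem_1\memadd\mem_2$ (which exists) and invoking the consistency lemma yields $e_q(\alpha)=\vf$ and $e_q(\alpha)=\vt$, a contradiction, so $\seval(\sys_i(t,q),\mem_i)\neq\vf$.

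Items~1 and 2 would then both follow from a single identity I would prove next: for any memory $\mem$ compatible with $\mem_1\memadd\mem_2$,
\[ \seval(\sys'(t,q),\mem)=\vt \;\Longrightarrow\; \seval\bigl(\sys(t,q),\,\mem_1\memadd\mem_2\memadd\mem\bigr)=\vt. \]
To prove it, expand $\rw(\sys'(t,q),\mem)\equiv\rw(\rw(e_q,\mem_1),\mem)\lor\rw(\rw(e_q,\mem_2),\mem)$; each disjunct equals $\rw(e_q,\mu_i)$ for $\mu_i=\mem_i\memadd\mem$, and non-conflict makes $\mu_i$ a subfunction of $\mem^+:=\mem_1\memadd\mem_2\memadd\mem$. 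If this disjunction is a tautology it is in particular true on every $\alpha$ agreeing with $\mem^+$; by the consistency lemma each $\rw(e_q,\mu_i)(\alpha)=e_q(\alpha)$, so $e_q(\alpha)=\vt$ for all such $\alpha$, i.e. $\rw(e_q,\mem^+)\equiv\vt$, which is $\seval(\sys(t,q),\mem^+)=\vt$. Item~2 is the case $\mem=[]$ (and, under a verdict-order reading of the implication, the case where $\sys'(t,q)\equiv\vf$ is easier still, since that forces both $\rw(e_q,\mem_i)\equiv\vf$, hence $\rw(e_q,\mem_1\memadd\mem_2)\equiv\vf$). For item~1, if $\seval(\sys'(t,q),\mem)=\vt$ and $\seval(\sys'(t,q'),\mem)=\vt$ for some compatible $\mem$, the identity gives $\seval(\sys(t,q),\mem^+)=\seval(\sys(t,q'),\mem^+)=\vt$, and since $\sys$ is a deterministic \MemRep{} (\rprop{prop:ehe:invariant}), $q=q'$.

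The main obstacle is item~1, and it is genuinely subtle — which is exactly why the non-conflict hypothesis is indispensable. The $\sysadd$-merge of two "incorporated" \MemRep{}s can leave in $\sys'(t,q)$ a disjunct contributed by a monitor that was stale on atoms the other monitor had already resolved, so that feeding $\sys'$ a memory contradicting the baked-in observations could make two states evaluate to $\vt$ simultaneously; this is why one must read the invariant over memories compatible with $\mem_1\memadd\mem_2$. Once that restriction is in place, the identity above transfers determinism from $\sys$ to $\sys'$ in one step. Apart from item~1, the only work left is the two helper facts and the consistency lemma, all by routine structural induction on expressions.
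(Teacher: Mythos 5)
Your proof is correct, and it is substantially more rigorous than the paper's own justification, which is an informal paragraph following the corollary that appeals to \rprop{prop:ehe:invariant}, \rprop{prop:inc}, the $\lor$-structure of $\sysadd$, and the absence of conflicts. The extra ingredients you supply are (i) the explicit reduction $\sys'(t,q)\equiv\rw(\sys(t,q),\mem_1)\lor\rw(\sys(t,q),\mem_2)$ and (ii) the substitution/consistency lemma, which lets you argue via a single assignment $\alpha$ agreeing with $\mem_1\memadd\mem_2$; this turns item~4 into a clean contradiction ($e_q(\alpha)=\vt$ and $e_q(\alpha)=\vf$), where the paper reaches the same conclusion through the looser appeal that ``the automaton cannot be in two states at once.'' Most importantly, you are right that item~1 read literally (determinism in the sense of \rprop{prop:ehe:invariant}, i.e., quantified over \emph{all} memories) fails for $\sys'$: a query memory contradicting observations already baked into $\mem_1$ or $\mem_2$ can make two states' entries evaluate to $\vt$ at once (take $\sys(t,q)=a$, $\sys(t,q')=\neg a$, $\mem_1=[a\mapsto\vt]$, $\mem_2=[\,]$, and query $\sys'$ with $[a\mapsto\vf]$). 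Your restriction to memories compatible with $\mem_1\memadd\mem_2$ is the correct reading and is exactly what the paper's sentence ``the additional disjunction will not affect the outcome of $\seval$'' silently assumes; making it explicit is a genuine improvement over the paper's argument. The only caveat is that your helper facts (that $\rw$ composes over compatible memories, preserves tautologies and unsatisfiability, and the consistency lemma itself) are asserted rather than proved, but each is a routine structural induction on $\expr$, so this is not a gap.
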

The first property ensures that the merge of two \MemRep{}s that incorporate memories are still indeed representing a deterministic and complete automaton, this follows from \rprop{prop:ehe:invariant} and \rprop{prop:inc}.
Since operation $\sysadd$  disjoins the two expressions, and since the two expressions come from \MemRep{}s that each maintain the property, the additional disjunction will not affect the outcome of $\seval$.
The second property extends \rprop{prop:inc} to the merging of \MemRep{} with incorporated memories.
It follows directly from \rprop{prop:inc}, and the assumptions that the memories have no conflicts.
The third property adds a stronger condition.
It states that merging two \MemRep{}s with incorporated memories results in an \MemRep{} that does not evaluate differently under the different memories.
This follows from the second property and the fact that the memories do not have conflicting observations.
Finally, the fourth property ensures that merging an \MemRep{} with an entry that evaluates to $\vf$ does not result in an entry that evaluates to $\vt$.
That is, if an \MemRep{} has already determined that a state is not reachable, merging it with another \MemRep{} does not result in the state being reachable.
This ensures the consistency when sharing information.
This property follows from the merging operator $\sysadd$ which uses $\lor$ to merge entries in two \MemRep{}s.
We recall that an entry in $\tuple{t,q} \in \fdom(\sys')$ is constructed as:
$\seval(\sys_1(t,q), \mem_1) \lor \seval(\sys_2(t,q), \mem_2)$.
For $\seval(\sys'(t,q), [])$ to be $\vt$,  either $\seval(\sys_1(t,q), \mem_1)$ or $\seval(\sys_2(t,q), \mem_2)$ has to be $\vt$, if one is already $\vf$, then the other has to be $\vt$.
This leads to a contradiction, since both $\sys_1$ and $\sys_2$ encode the same deterministic automaton, as such, the automaton cannot be in two states at once.

\cblock{%
For the same execution of the automaton and a timestamp $t$, if we have two encodings $\sys(t,q) = e$ and $\sys'(t,q) = e'$, then we know that the automaton is in $q$ at $t$ iff either $e$ or $e'$ evaluates to $\vt$ (\rdef{def:cmon:ehe}).
\edited{
Since $\sys$ and $\sys'$ encode the same automaton, the information needed to reach the same state is similar.
}
Therefore, the new expression $e \lor e'$ can be an effective way to reconcile information from two encodings \edited{of the same automaton}.
The memory $\mem$ can be embedded in an expression $e$ by simply using $\rw(e, \mem)$ (\rdef{def:common:rw}).
Thus, by rewriting expressions and combining $\MemRep{}$s, it is possible to reconcile multiple partial observations of an execution.
}%
\begin{example}[Reconciling information]\label{ex:cmon:combine}
We consider specification  $\boldsymbol{\mathrm{F}}(a \land b)$ (\rfig{fig:cmon:combine-aut}), and two components: $c_0$  and $c_1$ monitored by $m_0$ and $m_1$ respectively.
The monitors can observe the propositions $a$ and $b$ respectively and use one \MemRep{} each: $\sys_0$ and $\sys_1$ respectively.
Their memories are respectively $\mem_0 = [\tuple{1,a} \mapsto \vt]$ and $\mem_1 = [\tuple{1,b} \mapsto \vf]$.
\rtbl{tbl:cmon:combine} shows the $\MemRep{}$s at $t = 1$.
Constructing the \MemRep{} follows similarly from \rex{ex:cmon:exec}.
We show the rewriting for both $\sys_0$ and $\sys_1$ respectively in the next two columns.
Then, we show the result of combining the rewrites using $\sysadd$.
We notice initially that since $b$ is $\vf$, $m_1$ could evaluate $\neg\tuple{1,a} \lor \vt = \vt$ and know that the automaton is in state $q_0$.
However, for $m_0$, this is not possible until the expressions are combined.
By evaluating the combination $(\vf \lor \neg\tuple{1,b}) \lor \vt = \vt$, $m_0$ determines that the automaton is in state $q_0$.
In this case, we are only looking for expressions that evaluate to $\vt$.
We notice that monitor $m_1$ can determine that $q_1$ is not reachable (since $\tuple{1,a} \land \vf = \vf$) while $m_0$ cannot, as the expression $\tuple{1,b}$ cannot yet be evaluated to a final verdict.
This does not affect the outcome, as we are only looking for one expression that evaluates to $\vt$, since both $\sys_0$ and $\sys_1$ are encoding the same execution.
\end{example}
	\begin{figure}[t] %
		\centering%
		\scalebox{.95}{\begin{tikzpicture}[aut]
	\node[location] (q0) at (0,0) {$q_0$};
	\node[location, accept, right=of q0] (q1) {$q_1$};

	\tconnect{q0}{q1}{$a \land b$}{t1}
	\tconnect[loop right]{q1}{q1}{$\vt$}{t2}
	\tconnect[loop left]{q0}{q0}{$\neg a \lor \neg b$}{t2}
\end{tikzpicture}} %
		\vspace{-1em}
		\caption{Representing $\mathrm{F}(a \land b)$}%
		\label{fig:cmon:combine-aut}%
		\vspace{-1em}
	\end{figure}%

\gettable{
\small
\centering
\caption{Reconciling information}
\label{tbl:cmon:combine}
\begin{tabular}{| r | c || l || l || l |}
	\hline \textbf{t} & \textbf{q} & $\mathbf{\sys_0}$ & $\mathbf{\sys_1}$ & $\mathbf{{\sysadd}}$ \\
	\hline  \hline 0 		  & $q_0$	   & $\vt$
			&  $\vt$ & $\vt$\\
	\hline 1		  & $q_0$	   
          & $\vf \lor \neg \tuple{1,b}$
          & $\neg \tuple{1,a} \lor \vt$
          & $\vt$\\
	\hline 1 		  & $q_1$	   
          & $\vt \land \tuple{1,b}$
          & $\tuple{1,a} \land \vf$
          & $\tuple{1,b}$\\
	\hline
\end{tabular}
}


\cblock{
\edited{%
We recall that the $\MemRep{}$ encodes expressions of reachable states in the automaton.
This provides us with two advantages; it allows us to reason about the future, and to ``skip'' some redundant observations in the automaton.
These advantages are similar to predictive verification techniques either in runtime verification~\cite{PredictiveRV}, or in general for traces (Predictive Trace Analysis (PTA) techniques)~\cite{Huang_2014,Said2011}.
Predictive techniques determine ``possible'' states that the monitor can observe given the program execution, and a set of assumptions obtained from static analysis methods.
Such methods could analyze dependencies in the program, or its control flow to determine states that will never be reached during the execution, and thus speed up the process of monitoring, and infer information about the future.
The $\MemRep{}$ data structure indeed can be expanded $n$ discrete steps in the future using $\smove$ operation, this can reveal reachable states in the automaton within $n$ steps.
The cost of such computation is detailed in Section~\ref{sec:analysis:ds}.
For example, it is possible to notice, that no reachable state with the verdict $\vt$ exists (in $n$ steps), and as such, the monitor can use this information to optimize its messages or computation.

Furthermore, since reachability is encoded as Boolean expressions, it is possible for such expressions to simplify based on known information (in the memory).
This is particularly useful to ``jump'' in certain cases.
We illustrate this in expression $e = \tuple{1, a} \land \tuple{2, a} \land \tuple{3, b}$.
For $e$ to evaluate and reach a final verdict, must wait on observations for 3 timestamps.
However suppose the observations for $a$ are delayed, and we receive first $\tuple{3,b}$ to be $\vf$, then we can directly conclude that $e = \vf$, as such the state with which $e$ is associated cannot be reached.
}%
}

\section{Decentralized Specifications} \label{sec:dmon}

%
In this section, we shift the focus to a specification that is decentralized.
A set of automata represent various requirements (and dependencies) for different components of a system.
In this section, we define the notion of a decentralized specification and its semantics, and in \rsec{sec:prop}, we define various properties on such specifications.
%
\paragraph{Decentralizing a specification.}
%
We recall that a decentralized system consists of a set of components $\comps$.
To decentralize the specification, instead of having one automaton, we have a set of specification automata (Definition~\ref{def:cmon:aut}) $\mons = \setof{\aut_\ell \mid \ell \in \APmons}$, where $\APmons$ is a set of monitor labels.
We refer to these automata as \emph{monitors}.
To each monitor, we associate a component using a function $\cmapping : \mons  \rightarrow \comps$.
However, the transition labels of  a monitor $\mathrm{mon} \in \mons$ are expressions restricted to either observations local to the component the monitor is attached to (i.e., $\cmapping(\mathrm{mon})$), or references to other monitors.
Transitions are labeled over $\APmons \setminus \setof{\mathrm{mon}} \cup \setof{ap \in \AP \mid \clookup(ap) = \cmapping(\mathrm{mon})}$.
This ensures that the monitor is labeled with observations it can locally observe or depend on other monitors.
To evaluate a trace as one would on a centralized specification, we require one of the monitors to be a starting point,  we refer to that monitor as the \emph{root monitor} ($\mroot \in \mons$).
\begin{definition}[Decentralized specification]
	A decentralized specification is a tuple $\langle \APmons$, $\mons$, $\comps$, $\cmapping$, $\mroot \rangle$.
\end{definition}
We note that a centralized specification is a special case of a decentralized specification, with one component (global system, $\mathrm{sys}$), and one monitor ($\mathrm{g}$) attached to the sole component, i.e.
$\tuple{\setof{\mathrm{g}}, \setof{\aut_\mathrm{g}}, \setof{\mathrm{sys}}, [\aut_\mathrm{g} \mapsto \mathrm{sys}], \aut_\mathrm{g}}$.

As automata expressions now include references to monitors, we first define function $\mondep : \expr \rightarrow \mons$, which determines monitor dependencies.
Then, we define the semantics of evaluating (decentralized) specifications with references.
\begin{definition}[Monitor dependency]
\label{def:dmon:dep}
The set of monitor dependencies in an expression $\vars{e}$ is obtained by function $\mondep : \expr \rightarrow \mons$, defined as\footnote{We note that this definition can be trivially extended to any encoding of such expressions that contains the monitor id.}:
$ \mondep (e) = $ \texttt{match} $e$ \texttt{with}: \\
$
\begin{array}{llll}
				 |\> \mathit{id} \in \APmons  &\matchr\setof{\aut_{\it id}} &
				 |\> e_1 \land e_2 &\matchr\mondep(e_1) \cup \mondep(e_2)\\
				 |\> \neg e		   &\matchr\mondep(e) &
				 |\> e_1 \lor e_2  &\matchr\mondep(e_1) \cup \mondep(e_2)
\end{array}$
\end{definition}

Function $\mondep$ finds all monitors referenced by expression $\vars{e}$, by syntactically traversing it.

	\begin{figure}[t] %
		\centering%
		\scalebox{0.8}{\begin{tikzpicture}[aut]
	\node[location] (m0q0) at (0,0) {$q_{\MONID_{0_0}}$};
	\node[location, accept, right=2 of m0q0] (m0q1) {$q_{\MONID_{0_1}}$};

	\tconnectt{m0q0}{m0q1}{$\MONID_1 \lor a_0$}{m0t1}{yshift=0cm}
	\tconnectt[loop above]{m0q0}{m0q0}{$\neg (\MONID_1 \lor a_0)$}{m0t4}{xshift=-0.3cm, yshift=-0.2cm}
	\tconnectt[loop above]{m0q1}{m0q1}{$\vt$}{m0t5}{yshift=-0.2cm}

	\node[location] (m1q0) at (6,0) {$q_{\MONID_{1_0}}$};
	\node[location, accept, left=1 of m1q0] (m1q1) {$q_{\MONID_{1_1}}$};
	\node[location, reject, right=1 of m1q0] (m1q2) {$q_{\MONID_{1_2}}$};

	\tconnect{m1q0}{m1q1}{$b_0$}{m1t1}
	\tconnect{m1q0}{m1q2}{$\neg b_0$}{m1t2}{yshift=-0.2cm}

	\tconnect[loop above]{m1q1}{m1q1}{$\vt$}{m1t3}{yshift=-0.3cm}
	\tconnect[loop above]{m1q2}{m1q2}{$\vt$}{m1t4}{yshift=-0.3cm}

\end{tikzpicture}} %
		\vspace{-1em}
		\caption{Representing $\mathrm{F}(a_0 \lor b_0)$ (decentralized)}%
		\label{fig:dmon:aut}%
		\vspace{-1em}
	\end{figure}%

\begin{example}[Decentralized specification]
\label{ex:dmon:aut}
\rfigb{fig:dmon:aut} shows a decentralized specification corresponding to the centralized specification in \rex{ex:cmon:aut}.
It consists of 2 monitors  $\aut_{\MONID_0}$ and $\aut_{\MONID_1}$ respectively.
We consider two atomic propositions $a_0$ and $b_0$ which can be observed by component $c_0$ and $c_1$ respectively.
The monitor labeled with $\MONID_0$ (resp. $\MONID_1$) is attached to component $c_0$ (resp. $c_1$).
$\aut_{\MONID_0}$ depends on the verdict from $\MONID_1$ and only observations local to $c_0$, while $\aut_{\MONID_1}$ is only labeled with with observations local to $c_1$.
Given the expression $m_1 \land a_0$, we have $\mondep(m_1 \land a_0) = \setof{\aut_{\MONID_1}}$.
\end{example}
\paragraph{Semantics of a decentralized specification.}
The transition function of the decentralized specification is similar to the centralized automaton with the exception of monitor ids.
\begin{definition}[Semantics of a decentralized specification]
\label{def:sem-decent}
Consider the root monitor $\aut_\mroot$ and a decentralized trace $\trace$ with index $i \in [1, |\trace|]$ representing the timestamps.
Monitoring $\trace$ starting from $\aut_\mroot$ emits the verdict $\verdictf_{\mroot} (\Delta'^*_{\mroot}(q_{\mroot_0}, \trace, 1))$ where for a given monitor label $\ell$:
\begin{align*}
  \Delta_\ell'^*(q, \trace, i) &= \funcparts {
    \Delta_\ell'^*(\Delta_\ell'(q, \trace, i), \trace, i + 1) & \pif i < |\trace|\\
    \Delta_\ell'(q, \trace, i) & \pelse
  }\\
		\Delta'_\ell(q, \trace, i) &=\funcparts{
			q'
				& \mbox{ if } \trace(i, \cmapping(\aut_\ell)) \neq \emptyset \land  \exists e \in \expr_\AP : \delta_{\ell}(q, e) = q'  \land \, \seval(e, \mem) = \vt\\
			q  & \mbox{ otherwise}
		}
\end{align*}
\vspace{-2em}
\begin{align*}
  \text{where}\quad  \mem    &= \cons(\trace(i,  \cmapping(\aut_\ell)), \fid) \memadd \smashoperator{\biguplus^2_{\aut_{\ell'} \in \mondep(e)}} \setof{[\ell' \mapsto \verdictf_{\ell'}(q_{\ell'_f})]}\\
  \text{and}\quad  q_{\ell'_f} &= \Delta'^*_{\ell'}(q_{\ell'_0}, \trace, i)\\
\end{align*}
\vspace{-2em}
\end{definition}
For a monitor $\aut_\ell$, we determine the new state of the automaton starting at $q \in Q_{\ell}$, and running the trace $\trace$ from timestamp $i$ to timestamp $t$ by applying $\Delta_{\ell}'^*(q, \trace, i)$.
To do so, we evaluate one transition at a time using $\Delta_{\ell}'$ as would $\Delta_{\ell}^*$ with $\Delta_{\ell}$ (see \rdef{def:cmon:aut-semantics}).
To evaluate $\Delta_{\ell}'$ at any state $q' \in Q_{\ell}$, we need to evaluate the expressions so as to determine the next state $q''$.
The expressions contain atomic propositions and monitor ids.
For atomic propositions, the memory is constructed using $\cons(\trace(i, \cmapping(\aut_\ell)), \fid)$ which is based on the event with observations local to the component the monitor is attached to (i.e., $\cmapping(\aut_\ell)$).
However, for monitor ids, the memory represents the verdicts of the monitors.
To evaluate each reference $\ell'$ in the expression, the remainder of the trace starting from the current event timestamp $i$ is evaluated recursively on the automaton $\aut_{\ell'}$ from the initial state $q_{\ell'_0} \in \aut_{\ell'}$.
Then, the verdict of the monitor is associated with $\ell'$ in the memory.
\begin{example}[Monitoring of a decentralized specification]
Consider monitors $\aut_{\MONID_0}$ (root) and $\aut_{\MONID_1}$ associated to components $c_0$ and $c_1$ respectively and the trace $\trace = [
  1  \mapsto c_0  \mapsto \setof{\tuple{a, \vf}},
  1  \mapsto c_1  \mapsto \setof{\tuple{b, \vf}},
  2  \mapsto c_0  \mapsto \setof{\tuple{a, \vf}},
  2  \mapsto c_1  \mapsto \setof{\tuple{b, \vt}}
] $.
To evaluate $\trace$ on $\aut_{\MONID_0}$ (from \rfig{fig:dmon:aut}), we use $\Delta_{\MONID_0}'^*(q_{{\MONID_0}_0}, \trace, 1)$.
To do so, we first evaluate $\Delta_{\MONID_0}'(q_{{\MONID_0}_0}, \trace, 1)$.
We notice that the expressions depend on $\MONID_1$, therefore we need to evaluate $\Delta_1'^*(q_{{\MONID_1}_0}, \trace, 1)$.
All expressions have no monitor labels, thus we construct $\mem^1_{{\MONID_1}} =  \cons(\tuple{b, \vf}, \fid) = [b \mapsto \vf]$, and notice that $\seval(\neg b, \mem^1_{{\MONID_1}}) = \vt$ and therefore it can move to state $q_{{\MONID_1}_2}$ associated with verdict $\vf$.
Notice that $\Delta_{\MONID_1}'($ $\Delta_{\MONID_1}'(q_{{\MONID_1}_0}, \trace, 1), \trace, 2) = q_{{\MONID_1}_2}$ with $\verdictf_{{\MONID_1}}(q_{{\MONID_1}_2}) = \vf$.
We can construct $\mem^1_{{\MONID_0}} =  \cons(\tuple{a, \vf},$ $\fid) \memadd [\MONID_1 \mapsto \vf] = [a \mapsto \vf, \MONID_1 \mapsto \vf]$.
We then have $\seval(\neg \MONID_1 \land \neg a_0, \Mem^1_{{\MONID_0}}) = \vt)$ and $\aut_{\MONID_0}$ is in state $(\MONID_0, q_0)$.
By doing the same for $t = 2$, we obtain $\Mem^2_{{\MONID_0}} = [a \mapsto \vf, \MONID_1 \mapsto \vt]$, we then evaluate $\seval(\MONID_1 \land \neg a_0) = \vt$.
This indicates that $\Delta_{\MONID_0}'(\Delta_{\MONID_0}'(q_{{\MONID_0}_0}, \trace, 1), \trace, 2) = q_{{\MONID_0}_1}$ and the final verdict is $\vt$.
\end{example}

%
%
%
%
%
%
%

%
\section{Properties for Decentralized Specifications}
\label{sec:prop}
%
A key advantage of using decentralized specifications is to make the association of monitors with components explicit.
Since monitors have been explicitly modeled as a set of automata with dependencies between each other, we can now determine properties on decentralized specifications.
In this section, we revisit the concept of \emph{monitorability}, characterize it for automata, define it for decentralized specifications, and describe an algorithm for deciding monitorability.
Furthermore, we explore \emph{compatibility}, that is the ability of a decentralized specification to be deployed on a given architecture.

\subsection{Decentralized Monitorability}\label{sec:dmon:monitorability}

An important notion to consider when dealing with runtime verification is that of monitorability~\cite{PnueliZ06,Falcone10}.
In brief, monitorability of a specification determines whether or not an RV technique is applicable to a specification.
That is, a monitor synthesized for a non-monitorable specification is unable to check if the execution complies or violates the specification for all possible traces.
Consider the automaton shown in \rfig{fig:monitorability:aut}, one could see that there is no state labeled with a final verdict.
In this case, we can trivially see that no trace allows us to reach a final verdict.
We also notice similar behavior when monitoring LTL expressions with the pattern $\mathbf{G}\mathbf{F}(p)$ with $p$ is an atomic proposition.
The LTL expression requires that at all times $\mathbf{F}(p)$ holds $\vt$, while $\mathbf{F}(p)$ requires that $p$ eventually holds $\vt$.
As such, at any given point of time, we are unable to determine a verdict, since if $p$ is $\vf$ at the current timestamp, it can still be $\vt$ at a future timestamp, and thus $\mathbf{F}(p)$ will be $\vt$ for the current timestamp.
And if $\mathbf{F}(p)$ is $\vt$ at the current timestamp, the $\mathbf{G}$ requires that it be $\vt$ for all timestamps, so in the future there could exist a timestamp which falsifies it.
Consequently, when monitoring such an expression, a monitor will always output $\vna$, as it cannot determine a verdict for any given timestamp.
In this section, we first characterize monitorability in terms of automata and \MemRep{} for both centralized and decentralized specifications.
Then, we provide an effective algorithm to determine monitorability.

	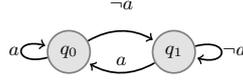
\begin{figure}[t] %
		\centering%
		\scalebox{0.8}{\begin{tikzpicture}[aut]
	\node[location] (m0q0) at (0,0) {$q_{0}$};
	\node[location, right=of m0q0] (m0q1) {$q_{1}$};

	\tconnectt[loop left]{m0q0}{m0q0}{$a$}{m0t0}{yshift=-0.3cm}
  \tconnectt[loop right]{m0q1}{m0q1}{$~\neg a$}{m0t1}{xshift=1.2,yshift=-0.3cm}
		\tconnect[bend left]{m0q0}{m0q1}{$\neg a$}{t1}
		\tconnect[bend left]{m0q1}{m0q0}{$a$}{t2}

\end{tikzpicture}} %
		\vspace{-1em}
		\caption{A trivial nonmonitorable specification}%
		\label{fig:monitorability:aut}%
		\vspace{-1em}
	\end{figure}%

%
\subsubsection{Characterizing Monitorability}
%
\paragraph{Centralized monitorability of properties.}
Monitorability in the sense of \cite{PnueliZ06} is defined on traces.
A monitorable property is where for all finite traces $t$ (a sequence of events) in the set of all (possibly infinite) traces,
there exists a continuation $t'$ such that monitoring $t \cdot t'$ results in a true or false verdict.
Informally, it can be seen as whether or not continuing to monitor the property after reading $t$ can still yield a final verdict.
We note that this definition covers all possible traces, it establishes monitorability to be oblivious of the input trace.
That is, one can determine whether a property is monitorable irrespective of the input trace.

\paragraph{Centralized monitorability in automata.}
We can generalize monitorability to reach ``true'' or ``false'' verdict to the notion of reaching a \emph{final verdict}, and extend it to automata.
For automata, monitorability can be analyzed in terms of reachability and states.
Given a specification $\aut = \tuple{Q, q_0 \in Q, \delta, \verdictf}$,
a state $q \in Q$ is monitorable (denoted $\monitorable(q)$) iff $\verdictf(q') \in \verdictb$ or $\exists q' \in Q$ such that $\verdictf(q') \in \verdictb$ and $q'$ is reachable from $q$.
Specification $\aut$ is said to be monitorable (denoted $\monitorable(\aut)$) iff $\forall q \in Q : \> \monitorable(q)$.
Defining monitorability using reachability is consistent with \cite{PnueliZ06}.
After reading a finite trace $t$ and reaching $q$ ($q = \Delta^*(q_0 , t)$), there exists a continuation $t'$ that leads the automaton to a state $q'$ ($q' = \Delta^*(q , t')$), such that $\verdictf(q') \in \verdictb$.
We note that a specification is monitorable according to this definition iff, in an automaton, all paths from the initial state $q_0$ lead to a state with a final verdict.
As such, it is sufficient to analyze the automaton to determine monitorability irrespective of possible traces (see \rsec{sec:monitorability:compute}).

The automaton presented in Figure~\ref{fig:cmon:aut} to express $\mathbf{F}(a \lor b)$ is monitorable, as both $q_0$ and $q_1$ are monitorable.
At $q_0$, it is possible to reach $q_1$ labeled by the final verdict $\vt$.
We note that monitorability is a necessary but not sufficient condition for termination (with a final verdict).
An infinite trace consisting of the event $\setof{\tuple{a,\vf}, \tuple{b,\vf}}$ never lets the automaton reach $q_1$.
However, monitorability guarantees the eventuality of reaching a final verdict.
Thus, if a state $q$ is not monitorable, then we know that it is impossible to reach a final verdict from $q$, and can abandon monitoring.

\paragraph{Centralized monitorability with \MemRep{}}
Reachability in automata can be expressed as well using the \MemRep{} data structure.
A path from a state $q$ to a state $q'$ is expressed as an expression over atoms.
We define $\paths(q, q')$ to return all possible paths from $q$ to $q'$.
\[\paths(q, q') = \setof{
          \mathit{e} \mid
          \exists t \in \timestamp:  \sys^t(t, q') = \mathit{e}
          \land \,  \sys^t = \smove([0 \mapsto q\mapsto \vt], 0, t)}
  \]
%
Each expression is derived similarly as would an execution in the $\MemRep{}$ (\rdef{def:cmon:ehe}).
We start from state $q$ and use a logical timestamp starting at 0 incrementing it by 1 for the next reachable state.
A state $q$ is monitorable iff $\exists \mathit{e}_f \in \paths(q, q_f)$, such that (1) $ \mathit{e}_f$ is satisfiable;
	(2) $\verdictf(\fsimp( \mathit{e}_f)) \in \verdictb$.
The first condition ensures that the path is able to lead to the state $q_f$, as an unsatisfiable path will never evaluate to true. The second condition ensures that the state is labeled by a final verdict.
An automaton is thus monitorable iff all its states are monitorable.
We note that $\paths(q, q')$ can be infinite if the automaton contains cycles, however path expressions could be ``compacted" using the pumping lemma.
Using \MemRep{} we can frame monitorability as a satisfiability problem which can benefit from additional knowledge on the truth values of atomic propositions.
For the scope of this paper, we focus on computing monitorability on automata in \rsec{sec:monitorability:compute}.

\paragraph{Decentralized monitorability.}
In the decentralized setting, we have a set of monitors $\mons$.
The labels of automata include monitor ids ($\APmons$).
We recall that the evaluation of a reference $\ell\in \APmons$ consists in running the remainder of the trace on $\aut_{\ell}$ starting from the initial state $q_{\ell_0}$.
As such, for any dependency on a monitor $\aut_\ell$, we know that $\ell$ evaluates to a final verdict iff $\monitorable(\aut_\ell)$.
We notice that monitorability of decentralized specification is recursive, and relies on the inter-dependencies between the various decentralized specifications.
This is straightforward for \MemRep{}, since a path is an expression.
For a path $\mathit{e}_f$, the dependent monitors are captured in the set $\mondep(\mathit{e}_f)$.
The additional condition on the path is thus: $\forall \aut_\ell \in \mondep(\mathit{e}_f): \> \monitorable(\aut_\ell)$.


\subsubsection{Computing Monitorability} \label{sec:monitorability:compute}

\paragraph{Centralized specification}
We compute the monitorability of a centralized specification $\aut$, with respect to a set of final verdicts $\verdictb$\footnote{While we use $\verdictb$, this can be extended without loss of generality to an arbitrary set $\mathbb{B}_\mathrm{f}$.}.
Computing monitorability consists in checking that all states of the automaton are co-reachable from states with final verdicts.
As such, it relies on a traversal of the graph starting from the states that are labeled with final verdicts.
To do so, we use a variation of the work-list algorithm.
We begin by adding all states labeled by a final verdict to the work list.
These states are trivially monitorable.
Conversely, any state that leads to a monitorable state is monitorable.
As such, for each element in the work list, we add its predecessors to the work list.
We maintain a set of marked states ($\func{Mark}$), that is, states that have already been processed, so as to avoid adding them again.
This ensures that cycles are properly handled.
The algorithm stabilizes when no further states can be processed (i.e., the work list is empty).
All marked states ($\func{Mark}$) are therefore monitorable.
To check if an automaton is monitorable, we need all of its states to be monitorable.
As such we verify that $|\func{Mark}| = |Q|$.
The number of edges between any pair of states can be rewritten to be at most 1 (as explained in Section~\ref{sec:cmon:pre}).
As such, one has to traverse the graph once, the complexity being linear in the states and edges (i.e., $O(|Q| + |\delta|)$).
Hence in the worst case, an automaton forms a complete graph, and we have ${{|Q|}\choose{2}}$ edges.
The worst case complexity is quadratic in the number of states (i.e., $O(|Q| + \frac{1}{2}|Q| (|Q|-1))$).%

\cblock{%
\begin{algorithm}[t]\scriptsize
 \caption{Centralized Automata Monitorability}
 \label{alg:monitorable:aut}
\begin{algorithmic}[1]
\Procedure{CAMonitorable}{$\aut$, $\verdictb$}
  \State $\func{WL} \gets \func{queue}()$
  \State $\func{Mark} = \emptyset$ \Comment{Define the set of processed states}
	\ForAll{$q \in Q$} \Comment{States with a final verdict}
		\If{$\verdictf(q) \in \verdictb$}
			\State $\func{push}(\func{WL}, q)$ \Comment{Add each state to the worklist}
      \State $\func{Mark} \gets \func{Mark} \cup \setof{q}$ \Comment{Mark the state as processed}
		\EndIf
	\EndFor
	\While{$\neg \func{empty}(\func{WL})$} \Comment{Process}
		\State $\func{q'} \gets \func{pop}(\func{WL})$
		\ForAll{$q_p \in \setof{q \in Q \mid \exists e \in \expr_\AP: \> \delta(q, e) = q'}$} \Comment{Consider all co-reachable states}
			\If{$q_p \not\in \func{Mark}$} \Comment{Consider only states which we have not seen yet}
				\State $\func{push}(\func{WL}, q_p)$ \Comment{Add co-reachable state to the work-list}
				\State $\func{Mark} \gets \func{Mark} \cup \setof{q_p}$ \Comment{Mark the state as processed}
			\EndIf
		\EndFor
	\EndWhile
	\State \Return $|\func{Mark}| = |Q|$ \Comment{Have all states been processed?}
\EndProcedure
\end{algorithmic}
\end{algorithm}
}%


\paragraph{Decentralized specifications}
In the case of decentralized specifications, the evaluation of paths (using $\seval$) in an automaton depends on other monitors (and thus other automata).
To compute monitorability, we first build the monitor dependency set for a given monitor $\aut_\ell$ (noted $\MDS(\aut_\ell)$) associated with a monitor label $\ell$.
\[
	\MDS(\aut_\ell) = \biguplus_{\setof{e  \in \expr\, \mid\, \exists q, q' \in Q_\ell: \> \delta_\ell(q, e) = q'}} \mondep(e)
\]
The monitor dependency list for a monitor contains all the references to other monitors across all paths in the given automaton ($\aut_\ell$), by examining all the transitions.
It can be obtained by a simple traversal of the automaton.

Second, we construct the monitor dependency graph (MDG), which describes the dependencies between monitors.
The monitor dependency graph for a set of monitors $\mons$ is noted $\MDG(\mons) = \tuple{\mons, \func{DE}}$ where $\func{DE}$ is the set of edges which denotes the dependency edges between the monitors, defined as:
	$\func{DE} = \setof{\tuple{\aut_{\ell}, \aut_{\ell'}} \in \mons \times \mons \mid \aut_{\ell'} \in \MDS(\aut_\ell)}$.
A monitor $\aut_{\MONID_i}$ depends on another monitor $\aut_{\MONID_j}$ iff $\MONID_j$ appears in the expressions on the transitions of  $\aut_{\MONID_i}$.
For a decentralized specification to be monitorable, the two following conditions must hold:
$\MDG(\mons)$ has no cycles; and
$\forall \ell \in \mons: \> \func{CAMonitorable}(\aut_\ell)$.
The first condition ensures that no cyclical dependency exists between monitors.
The second condition ensures that all monitors are individually monitorable.
We note, that both conditions are decidable.
Furthermore, detecting cycles in a graph can be done in linear time with respect to the sum of nodes and edges, by doing a depth-first traversal with back-edge detection, or by finding strongly connected components~\cite{Tarjan72}.
Thus in worst case, it is quadratic in $|\mons|$.
Monitorability is therefore quadratic in the number of monitors and states in the largest automaton.
%
\subsection{Compatibility} \label{sec:prop:compat}
%
A key advantage of decentralized specifications is the ability to associate monitors to components.
This allows us to associate the monitor network with the actual system architecture constraints.

The monitor network is a graph $\func{N} = \tuple{\mons, E}$, where $\mons$ is the set of monitors, and $E$ representing the communication edges between monitors.
The monitor network is typically generated by a monitoring algorithm during its \emph{setup} phase (See Section~\ref{sec:analysis:algs}).
For example, $\func{N}$ could be obtained using the construction $\MDG(\mons)$ presented in Section~\ref{sec:monitorability:compute}.
The system is represented as another graph $S = \tuple{\comps,E'}$, where $\comps$ is the set of components, and $E'$ is the set of communication channels between components.

\paragraph{Defining compatibility.}
We now consider checking for \emph{compatibility}.
Compatibility denotes whether a monitoring network can be actually deployed on the system.
That is, it ensures that communication between monitors is possible when those are deployed on the components.
We first consider the reachability in both the system and monitor graphs as the relations $\reach{S} : \comps \rightarrow 2^\comps$, and $\reach{M} : \mons \rightarrow 2^\mons$, respectively.
Second, we recall that a monitor may depend on other monitors and also on observations local to a component.
If a monitor depends on local observations, then it provides us with constraints on where it should be placed.
We identify those constraints using the partial function $\constraint : \mons \rightarrow \comps$.
We can now formally define compatibility.
Compatibility is the problem of deciding whether or not there exists a \emph{compatible assignment}.

\begin{definition}[Compatible assignment] \label{def:compatibility}
A compatible assignment is a function $\compat: \mons \rightarrow \comps$ that assigns monitors to components while preserving the following properties:
\begin{enumerate}
    \item $\forall m_1, m_2 \in \mons: m_2 \in \reach{M}(m_1) \implies \compat(m_2) \in \reach{S}(\compat(m_1))$;
    \item $\forall m \in \fdom(\constraint): \constraint(m) = \compat(m)$.
\end{enumerate}
\end{definition}
The first proposition ensures that reachability is preserved.
That is, it ensures that if a monitor $m_1$ communicates with another monitor $m_2$ (i.e. $m_2 \in \reach{M}(m_1)$), then $m_2$ must be placed on a component reachable from where $m_1$ is placed (i.e. $ \compat(m_2) \in \reach{S}(\compat(m_1))$).
The second proposition ensures that dependencies on local observations are preserved.
That is, if a monitor $m$ depends on local observations from a component $c \in \comps$ (i.e. $\constraint(m) = c$), then $m$ must be placed on $c$ (i.e. $\constraint(m) = \compat(m)$).

\paragraph{Computing compatibility.}
We next consider the problem of finding a \emph{compatible assignment} of monitors to components.
Algorithm~\ref{alg:compatible:brute} finds a compatible assignment for a given monitor network ($\tuple{\mons, E}$), system ($\tuple{\comps, E'}$), and an initial assignment of monitors to components ($\constraint$).
The algorithm can be broken into three procedures: procedure $\funca{verifyCompatible}$ verifies that a (partial) assignment of monitors to components is compatible,
 procedure $\funca{compatibleProc}$ takes as input a set of monitors that need to be assigned and explores the search space (by iterating over components), and finally,
 procedure $\funca{compatible}$ performs necessary pre-computation of reachability, verifies that the constraint is first compatible, and starts the search.

We verify that an assignment of monitors to components ($\mrm{s} : \mons \rightarrow \comps$) is compatible using algorithm $\funca{verifyCompatible}$ (Lines 1-8).
We consider each assigned monitor ($m \in \fdom(\mrm{s})$).
Then,  we constrain the set of reachable monitors from $m$ to those which have been assigned a component ($M' = \reach{M}(m) \cap \fdom(\mrm{s})$).
Using $M'$, we construct a new set of components using $\mrm{s}$ (i.e., $C' = \setof{\mrm{s}(m') \in \comps \mid m' \in M'}$).
Set $C'$ represents the components on which reachable monitors have been placed.
Finally, we verify that the components in the set $C'$ are reachable from where we placed $m$ (i.e., $C' \subseteq \reach{S}(\mrm{s}(m))$).
If that is not the case, then the assignment is not compatible (Line 4).
To iterate over all the search space, that is, all possible assignments of monitors to components, procedure $\funca{compatibleProc}$ (Lines 9-24) considers a set of monitors to assign ($\mrm{M}$),
selects a monitor $m \in \mrm{M}$ (Line 13), and iterates over all possible components, verifying that the assignment is compatible (Lines 14-22).
If the assignment is compatible, it iterates over the remainder of the monitors (i.e., $\mrm{M} \setminus \setof{m}$), until it is empty (Line 16).
If the assignment is not compatible, it discards it and proceeds with another component.
For each monitor we seek to find at least one compatible assignment.
One can see that the procedure eventually halts (as we exhaust all the monitors to assign), and is affected exponentially based on the number of monitors to assign $|\mons \setminus \fdom(\constraint)|$ (Line 31) with a branching factor determined by the possible values to assign ($|\comps|$, Line 14).
It is important to note that the number of monitors to assign is in practice particularly small.
The number of monitors to assign includes monitors that depend only on other monitors and not local observations from components, as the dependency on local observations requires that a monitor be placed on a given component (that is, it will be in $\fdom(\constraint)$).


\begin{example}[Compatibility]
  \rfig{fig:compat} presents a simple monitor network of 3 monitors, and a system graph of 4 components.
  We consider the following constraint: $\constraint = [m_0 \mapsto c_0, m_2 \mapsto c_2]$.
  For compatibility, we must first verify that $\constraint$ is indeed a compatible (partial) assignment, then consider placing $m_1$ on any of the components (i.e., both properties of Definition~\ref{def:compatibility}).
  Procedure $\funca{compatible}$ computes the set of reachable nodes for both the monitor network and the system.
  They are presented in \rfig{fig:compat:reachm} and \rfig{fig:compat:reachs}, respectively.
  We then proceed with line 28 to verify the constraint ($\constraint$) using procedure $\funca{verifyCompatible}$.
  We consider both $m_0$ and $m_2$.
  For $m_0$ (resp. $m_2$) we generate the set (Line 3) $\setof{c_0}$ (resp. $\setof{c_2}$), and verify that it is indeed a subset of $\reach{S}(c_0)$ (resp. $\reach{S}(c_2))$.
  This ensures that the constraint is compatible.
  We then proceed to place $m_1$ by calling $\funca{compatibleProc}(\constraint, \setof{m_1}, \setof{c_0, c_1, c_2, c_3}, \reach{M} , \reach{S})$.
  While procedure $\funca{compatibleProc}$ will attempt all components, we will consider for the example placing $m_1$ on $c_1$.
  On line 15, the partial function $s'$ will be $\constraint \memadd [m_1 \mapsto c_1]$.
  We now call $\funca{verifyCompatible}$ to verify $s'$.
  We consider both $m_0$, $m_1$, and $m_2$.
  For $m_0$ (resp. $m_1$, $m_2$) we generate the set $\setof{c_0, c_1}$ (resp. $\setof{c_1}, \setof{c_2, c_1}$).
  We notice that for $m_0$, $\setof{c_0, c_1}$ is indeed a subset of $\reach{S}(c_0)$.
  This means that $m_0$ is able to communicate with $m_1$.
  However,  it is not the case for $m_2$, the set $\setof{c_2, c_1}$ is not a subset of  $\reach{S}(c_2) = \setof{c_2, c_3}$.
  The monitor $m_2$ will not be able to communicate with $m_1$ if $m_1$ is placed on $c_1$.
  Therefore, assigning $m_1$ to $c_1$ is incompatible.
  Example of compatible assignments for $m_1$ are $c_2$ and $c_3$ as both of those components are reachable from $c_2$.
  Procedure $\funca{compatibleProc}$ continues by checking other components, and upon reaching  $c_2$ or $c_3$  stops and returns that there is at least one compatible assignment.
  Therefore, the monitor network (\rfig{fig:compat:net}) is compatible with the system (\rfig{fig:compat:sys}).
\end{example}

\begin{figure}[t]
  \centering
  \subfloat[Monitor Network\label{fig:compat:net}]{%
      {\scalebox{0.8}{\begin{tikzpicture}[aut]
	\node[location] (m1) at (0,0) {$m_1$};
	\node[location, left=of m1] (m0) {$m_0$};
	\node[location, right=of m1] (m2) {$m_2$};

	\tconnect{m0}{m1}{}{m0m1}{}
	\tconnect{m2}{m1}{}{m2m1}{}

\end{tikzpicture}}}
    }
    \subfloat[System\label{fig:compat:sys}]{%
    {\scalebox{0.8}{\begin{tikzpicture}[aut]
	\node[location] (c0) at (0,0) {$c_0$};
	\node[location, right=of c0] (c1) {$c_1$};
	\node[location, right=of c1] (c2) {$c_2$};
	\node[location, right=of c2] (c3) {$c_3$};

	\tconnect{c0}{c1}{}{t0}{}
	\tconnect{c1}{c2}{}{t1}{}
	\tconnect{c2}{c3}{}{t2}{}

\end{tikzpicture}}}
    }
    \subfloat[$\reach{M}$\label{fig:compat:reachm}]{%
    \begin{tabular}{|r|l|}
      \hline  $m_0$ & $\setof{m_0, m_1}$\\
      \hline  $m_1$ & $\setof{m_1}$\\
      \hline  $m_2$ & $\setof{m_2, m_1}$\\
      \hline
      \end{tabular}
    }
    \subfloat[$\reach{S}$\label{fig:compat:reachs}]{%
    \begin{tabular}{|r|l|}
      \hline  $c_0$ & $\setof{c_0, c_1, c_2, c_3}$\\
      \hline  $c_1$ & $\setof{c_1, c_2, c_3}$\\
      \hline  $c_2$ & $\setof{c_2, c_3}$\\
      \hline  $c_3$ & $\setof{c_3}$\\
      \hline
      \end{tabular}
    }
    \caption{Example Compatibility}
    \label{fig:compat}
\end{figure}
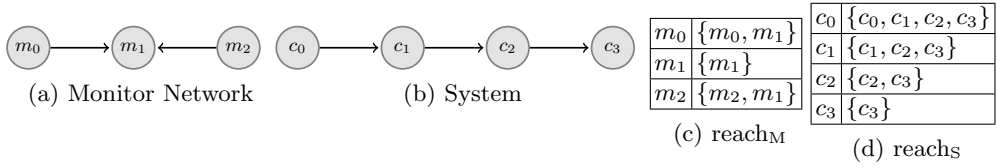

\begin{algorithm}[t]
 \caption{Computing Compatibility}
 \label{alg:compatible:brute}
\begin{algorithmic}[1]
\Procedure{verifyCompatible}{$\mrm{s}$, $\reach{M}$, $\reach{S}$} \Comment{Verify the assignment $\mrm{s}$}
	\ForAll{$m \in \fdom(\mrm{s})$} \Comment{Consider only assigned monitors}
		\If{$\setof{\mrm{s}(m') \mid m' \in (\reach{M}(m) \cap \fdom(\mrm{s}))} \not\subseteq \reach{S}(\mrm{s}(m))$ }\Comment{Check reachability}
			\State \Return $\mathit{false}$
		\EndIf
	\EndFor
	\State \Return $\mathit{true}$
\EndProcedure
\Procedure{compatibleProc}{$\mrm{s}$, $\mrm{M}$, $\mrm{C}$ , $\reach{M}$, $\reach{S}$} \Comment{Explore assignments}
  \If{$\mrm{M} = \emptyset$} \Comment{No monitors left to assign}
    \State \Return $\tuple{\mathit{true}, \mrm{s}}$ \Comment{Successfully assigned all monitors}
  \EndIf
  \State $m \gets \mrm{pick}(M)$ \Comment{Pick a monitor from those left to assign}
  \ForAll{$c \in \mrm{C}$} \Comment{Explore assigning monitor to all possible components}
    \State $s' \gets  \mrm{s} \memadd [m \mapsto c]$ \Comment{Add assignment to the existing solution}
    \If{$\funca{verifyCompatible}(s', \reach{M} , \reach{S})$} \Comment{Is it compatible?}
      \State $\tuple{res, sol} \gets \funca{compatibleProc}(s', \mrm{M} \setminus \setof{m}, \mrm{C}, \reach{M}, \reach{S}$) \Comment{Recurse on the rest}
      \If{$res$} \Comment{Found a compatible assignment for all the rest of $\mathrm{M}$}
        \State \Return $\tuple{res, sol}$
      \EndIf
    \EndIf
  \EndFor
  \State \Return $\tuple{\mathit{false}, []}$ \Comment{No compatible assignment found}
\EndProcedure
\Procedure{compatible}{$\tuple{\mons, E}$, $\tuple{\comps, E'}, \constraint$}
  \State $\reach{M} \gets \funca{computeReach}(\tuple{\mons, E})$ \Comment{Precompute reachability}
  \State $\reach{S} \gets \funca{computeReach}(\tuple{\comps, E'})$
  \If{$\neg \funca{verifyCompatible}(\constraint, \reach{M} , \reach{S})$} \Comment{Check constraint first}
    \State \Return $\tuple{\mathit{false}, []}$ \Comment{Constraint not satisfied}
  \EndIf
	\State \Return $\funca{compatibleProc}(\constraint, \mons \setminus \fdom(\constraint), \comps, \reach{M}, \reach{S})$ \Comment{Begin exploring}
\EndProcedure
\end{algorithmic}
\end{algorithm}

\section{Analysis} \label{sec:analysis}
%
%
We compare decentralized monitoring algorithms in terms of computation, communication and memory overhead.
We first consider the parameters and the cost for the basic functions of the \MemRep{}.
Then, we adapt the existing algorithms to use \MemRep{} and analyze their behavior.
We use $\fenc{e}$ to denote the size necessary to encode an element of the set $E$.
For example, $\fenc{\AP}$ is the size needed to encode an element of set $\AP$.
%
\subsection{Data Structure Costs} \label{sec:analysis:ds}
We consider the cost of using a memory or an \MemRep{}.
To do so, we first address the cost to store partial functions and merge them.

\paragraph{Storing partial functions.}
Since memory and $\MemRep{}$ are partial functions, to assess their required memory storage and iterations, we consider only the elements defined in the function.
The size of a partial function $f$, denoted $|f|$, is the size to encode all $x = f(x)$ mappings.
We recall that $|\fdom(f)|$ the number of entries in $f$.
The size of each mapping $x = f(x)$ is the sum of the sizes $|x| + |f(x)|$.
Therefore $|f| = \sum_{x\in\fdom(f)} |x| + |f(x)|$.
\paragraph{Merging.}
Merging two memories or two \MemRep{}s is linear in the size of both structures in both time and space.
In fact, to construct $f \dagger_{\mathrm{op}} g$, we first iterate over each $x \in \fdom(f)$, check whether $x \in \fdom(g)$, and if so assign $\mathrm{op}(f(x), g(x))$, otherwise assign $f(x)$.
Finally we assign $g(x)$ to any $x \in \fdom(g) \cap \overline{\fdom(f)}$.
This results in $|\fdom(f \dagger_{\mathrm{op}} g)| = |\fdom(f) \cup \fdom(g)|$ which is at most $|\fdom(f)| + |\fdom(g)|$.
\paragraph{Information delay.}
An $\MemRep{}$ associates an expression with a state for any given timestamp.
When an expression $\vars{e}$ associated with a state $q_\sstable$ for some timestamp $t_\sstable$ is evaluated to \vt, we know that the automaton is in $q_\sstable$ at $t_\sstable$.
We call $q_\sstable$ a `\emph{known}' (or stable) state.
Since we know the automaton is in $q_\sstable$, prior information is no longer necessary, therefore it is possible to discard all entries in $\sys$ with $t < t_\sstable$.
We parametrize the number of timestamps needed to reach a new known state from an existing known state as the information delay $\paramdelay$.
This can be seen as a garbage collection strategy~\cite{Wuu,CRDT} for the memory and $\MemRep{}$.
\paragraph{\MemRep{} encoding.}
For the \MemRep{} data structure, we consider in three functions: $\smove$, $\seval$, and $\sreach$\footnote{$\verdictAt$ is simply a $\sreach$ followed by a $O(1)$ lookup} (see \rsec{sec:cmon:ehe}).
Function $\smove$ depends on the topology of the automaton.
We quantify it using the maximum size of the expression that labels a transition in a normalized automaton (see Remark~\ref{rmk:normalized-aut}) $\paramlabel$, and the number of states in the automaton $|Q|$).
From a known state each application of $\smove$ considers all possible transitions and states that can be respectively taken and reached, for each outbound transition, the label itself is added.
Therefore, the rule is expanded by $\paramlabel$  per outbound state for each move beyond $t_\sstable$.
For each timestamp, we need for each state an expression.
The maximum size of the $\MemRep{}$ is therefore:
\[
	|\sys^{\paramdelay}| =  \paramdelay \times |Q| \times \sum_{1}^{\paramdelay} \paramlabel = \paramdelay^2 \times |Q| \times \paramlabel.
\]
For a given expression $\vars{e}$, we use $|\vars{e}|$ to denote the size of $\vars{e}$, i.e., the number of atoms in $\vars{e}$.
Given a memory $\mem$, the complexity of function $\seval(\vars{e}, \mem)$ is the cost of $\fsimp(\rw(\vars{e}, \mem))$.
Function $\rw(\vars{e}, \mem)$ looks up each atom in $\vars{e}$ in $\mem$ and attempts to replace it by its truth-value.
The cost of a memory lookup is $\Theta(1)$, and the replacement is linear in the number of atoms in $\vars{e}$.
It effectively takes one pass to syntactically replace all atoms by their values, therefore the cost of $\rw$ is $\Theta(|\vars{e}|)$.
However, applying the function $\fsimp()$ requires solving the Minimum Equivalent Expression problem which is $\Sigma_2^p$-complete~\cite{CircuitMin}, it is exponential in the size of the expression, making it the most costly function.
$|\vars{e}|$ is bounded by $\paramdelay \paramlabel$.
Function $\sreach()$ requires evaluating every expression in the \MemRep{}.
For each timestamp we need at most $|Q|$ expressions, and the number of timestamps is bounded by $\paramdelay$.
\paragraph{Memory.}
The memory required to store $\mem$ depends on the trace, namely the amount of observations per component.
Recall that once a state is known, observations can be removed, the number of timestamps is bounded by $\paramdelay$.
The size of the memory is then:
\[
\sum_{t=i}^{i+\paramdelay} |\trace(c,t)| \times (\fenc{\timestamp} \times \fenc{\AP} \times \fenc{\verdictb}).
\]
%
\subsection{Analyzing Existing Algorithms} \label{sec:analysis:algs}
%
We now shift the focus to the algorithms and their usage of the data structures.
We begin by presenting an overview of the abstract phases performed by decentralized monitoring algorithms.
We then elaborate on our approach to model their behavior.
Finally, we present the analysis for each of the algorithms adapted from~\cite{DecentMon}.

\paragraph{Overview.}
A decentralized monitoring algorithm consists of two steps: setting up the monitoring network, and monitoring.
In the first step, an algorithm initializes the monitors, defines their connections, and attaches them to the components.
We represent the connections between the various monitors using a directed graph $\tuple{\mons, E}$ where $E = 2^{\mons \times \mons}$ defines the edges  describing the sender-receiver relationship between monitors.
For example, the network $\tuple{\setof{\MONID_0, \MONID_1}, \setof{\tuple{\MONID_1, \MONID_0}}}$ describes a network consisting of two monitors $\MONID_0$ and $\MONID_1$ where $\MONID_1$ sends information to $\MONID_0$.
In the second step, an algorithm proceeds with monitoring, wherein each monitor processes observations and communicates with other monitors.

We consider the existing three algorithms: Orchestration, Migration and Choreography~\cite{DecentMon} adapted to use \MemRep{}.
We note that these algorithms operate over a global clock, therefore the sequence of steps can be directly mapped to the timestamp.
We choose an appropriate encoding of $\atoms$ to consist of a timestamp and the atomic proposition ($\atoms = \timestamp \times \AP$).
These algorithms are originally presented using an LTL specification instead of automata, however, it is possible to obtain an equivalent Moore automaton as described in \cite{LTL3Tools}.
\paragraph{Approach.}
A decentralized monitoring algorithm consists of one or more monitors that use the $\MemRep{}$ and memory data structures to encode, store, and share information.
By studying  $\paramdelay$, we derive the size of the $\MemRep{}$ and the memory a monitor would use (see \rsec{sec:analysis:ds}) .
Knowing the sizes, we determine the computation overhead of a monitor, since we know the bound on the number of simplifications a monitor needs to make ($\paramdelay |Q|)$, and we know the bounds on the size of the expression to simplify ($\paramdelay \paramlabel$).
Once the cost per monitor is established, the total cost for the algorithm can be determined by aggregating the costs per monitors.
This can be done by summing to compute total cost or by taking the maximum cost in the case of concurrency following the Bulk Synchronous Parallel (BSP)~\cite{Valiant90} approach.
\paragraph{Orchestration.}
The orchestration algorithm ({\algorch}) consists in setting up a main monitor which will be in charge of monitoring the entire specification.
However since that monitor cannot access all observations on all components, orchestration introduces one monitor per component to forward the observations to the main monitor.
Therefore, for our setup,  we consider the case of a main monitor $\MONID_0$ placed on component $c_0$ which monitors the specification and $|\comps| - 1$ forwarding monitors that only send observations to $\MONID_0$ (labeled $\MONID_k$ with $k \in [1, |\comps|]$).
We consider that the reception of a message takes at most $\vars{d}$ rounds.
The information delay $\paramdelay$ is then constant, $\paramdelay = \vars{d}$.
The number of messages sent at each round is $|\comps| - 1$, i.e., the number of forwarding monitors sending their observations.
The size of a message is linear in the number of observations for the component, for a forwarding monitor labeled with $\MONID_k$, the size of the message is $|\trace(t, c_k)| \times (\fenc{\timestamp} \times \fenc{\AP} \times \fenc{\verdictb})$.

\paragraph{Migration.}
The migration algorithm ({\algmigr}) initially consists in rewriting a formula and migrating from one or more component to other  components to fill in missing observations.
We call the monitor rewriting the formula the active monitor.
Our $\MemRep{}$ encoding guarantees that two monitors receiving the same information are in the same state.
Therefore, monitoring with Migration  amounts to rewriting the $\MemRep{}$ and migrating it across components.
Since all monitors can send the \MemRep{} to any other monitor, the monitor network is a strongly-connected graph.
In {\algmigr}, the delay depends on the choice of function $\migrchc$, which determines which component to migrate to next upon evaluation.
By using a simple function $\migrchc$, which causes a migration to the component with the atom with the smallest timestamp, it is possible to view the worst case as an expression where for each timestamp we depend on information from all components, therefore $|\comps| - 1$ rounds are necessary to get all the information for a timestamp ($\paramdelay = |\comps| - 1$).
We parametrize Migration by the number of active monitors at a timestamp $m$.
The presented function $\migrchc$ in~\cite{DecentMon}, selects at most one other component to migrate to.
Therefore, after the initial choice of $m$, subsequent rounds can have at most $m$ active monitors.

We illustrate {\algmigr} in Algorithm~\ref{alg:migration}.
The state of a migration monitor consists of a variable  $\mrm{isActive}$ that determines whether or not the monitor is active, and $\sys$ that is an \MemRep{} encoding the same automaton shared by all monitors.
At each round the monitor receives a timestamp $t$ and a set of observations $o$.
Line 2 displays the memory update with observations for that round.
Lines 3 to 10 describe the reception of $\MemRep{}$s from other monitors.
Upon receiving an $\MemRep{}$, the monitor state is set to active (Line 7).
An active monitor will then update its $\MemRep{}$ by first ensuring that it is expanded to the current timestamp using $\smove$ (Line 11),
then rewriting and evaluating each entry (Lines 12-17).
The number of entries in the $\MemRep{}$ depends on $\paramdelay$.
If any of the entries is evaluated to a final verdict (Line 14), then the verdict is found and we terminate.
While the verdict is not found, the migration algorithm first removes all unnecessary entries in the $\MemRep{}$ (Line 18).
Unnecessary entries are entries for which the state is known, the last known state is only kept, all previous timestamps are removed.
After removing unnecessary entries, we determine a new monitor to continue monitoring using the function $\migrchc$ (Lines 19-22).
The initial choice of active monitors is bounded by $m \leq |\comps|$.
Since at most $m - 1$ other monitors can be running, there can be $(m - 1)$ merges.
The size of the  resulting $\MemRep{}$ is  $m \times |\sys^\paramdelay| = m (|\comps| - 1)^2 |Q| \paramlabel$.
In the worst case, the upper bound on the size of $\MemRep{}$ is $(|\comps| - 1)^3 |Q| \paramlabel$.
The number of messages is bounded by the number of active monitors $m$.
The size of each message is however the size of the $\MemRep{}$, since {\algmigr} requires the entire $\MemRep{}$ to be sent.

\begin{algorithm}[!tp]
 \caption{Migration}
 \label{alg:migration}

\begin{algorithmic}[1]
\Procedure{Migration}{$t, o$}
	\State $\mem \gets \mem \memadd \cons(o, \fts_t)$ \Comment{Add observations to memory}
	\While{Received $\sys'$} \Comment{Received an $\MemRep{}$ from another monitor}
		\If{$\mrm{isActive}$}  \Comment{If the monitor is active, the monitor \MemRep{} has information}
			\State $\sys \gets \sys \sysadd \sys'$\Comment{Merge information with existing information}
		\Else
			\State $\sys \gets \sys'$; $\mrm{isActive} \gets \vt$ \Comment{Monitor becomes active after it receives an \MemRep{}}
		\EndIf
	\EndWhile
	\If{$isActive$}
		\State $t' \gets  \mathrm{getEnd}(\sys)$; $\sys \gets \smove(\sys, t', t)$	 \Comment{Build $\MemRep{}$ up to current timestamp}
		\ForAll{$t_v \in \fdom(\sys)$} \Comment{Go through all $\MemRep{}$ timestamps}
			\State $v \gets \verdictAt(\sys, \mem, t_v)$  \Comment{Evaluate the entries associated with timestamp $t_v$}
			\If{$v \in \verdictb$} \Comment{Found a final verdict}
					\State Report $v$ and terminate
			\EndIf
		\EndFor
		\State $\sys \gets \mathrm{dropResolved}(\sys)$ \Comment{Purge $\MemRep{}$ of non-needed entries}
		\State $c_k \gets \migrchc(\sys)$ \Comment{Determine the next component}
		\If{$c_k \neq c$} \Comment{Is the next component not local}
			\State $\mrm{isActive} \gets \vf$; Send $\sys$ to $\MONID_k$ \Comment{Send to relevant monitor, stop monitoring}
		\EndIf
	\EndIf
\EndProcedure
\end{algorithmic}

\end{algorithm}

\paragraph{Choreography.}
Choreography ({\algchor}) presented in~\cite{DecentMon,BauerF12} splits the initial LTL formula into subformulas and delegates each subformula to a component.
Thus {\algchor} can illustrate how it is possible to monitor decentralized specifications.
Once the subformulas are determined by splitting the main formula~\footnote{Details of the generation is provided in~\rapp{sec:app:choreo}.}, we adapt the algorithm to generate an automaton per subformula to monitor it.
To account for the verdicts from other monitors, the set of possible atoms is extended to include the verdict of a monitor identified by its id.
Therefore, $\atoms = (\timestamp \times \AP) \cup (\mons \times \timestamp)$.
Monitoring is done by replacing the subformula by the id of the monitor associated with it.
Therefore, monitors are organized in a tree, the leafs consisting of monitors without any dependencies, and dependencies building up throughout the tree to reach the main monitor that outputs the verdict.
Since each monitor is in charge of evaluating a subformula, the monitors communicate the evaluation of the formula as a verdict $\verdictb$ when it is resolved.
Furthermore, monitors may instruct other monitors to stop monitoring as they are no longer necessary.
The two messages are referred to as $\msgverdict$ and $\msgkill$, respectively.
For each monitor labeled $\ell \in \APmons$ we determine the set $\chorcoref_\ell \in 2^\APmons$ which contains the labels of monitors that send their verdicts to monitor $\aut_\ell$.
%
The information delay for a monitor is thus dependent on its depth in the network tree.
The depth of a monitor labeled $\ell$ that depends on the set of monitors $\chorcoref_{\ell}$,  is computed recursively as follows:
\[
	\func{depth}(\ell) = \funcparts{
	1 & \pif \monitorable(\aut_\ell) \land \chorcoref_\ell = \emptyset, \\
		1 + \func{max}(\setof{\func{depth}(\ell') \mid \ell' \in \chorcoref_\ell}) & \pif \monitorable(\aut_\ell) \land \chorcoref_\ell \neq \emptyset, \\
		\infty & \pelse.
	}
\]
%
A monitor synthesized by a non-monitorable specification will never emit a verdict, therefore its depth is $\infty$.
A leaf monitor has no dependencies, its depth is 1.
Since the depth controls the information delay ($\paramdelay$), it is possible in the case of choreography to obtain a large $\MemRep{}$ depending on the specification.
In effect, the worst case the size of the \MemRep{} can be linear in the size of the trace $\paramdelay = |\trace|$, as it will be required to store the $\MemRep{}$ until the end of the trace.
As such properties of the specification such as monitorability (see Section~\ref{sec:dmon:monitorability}) impact greatly the delay, and thus performance.
In terms of communication, the number of monitors generated determines the number of messages that are exchanged.
%
By using the naive splitting function (presented in~\cite{DecentMon}), the number of monitors depends on the size of the LTL formula.
Therefore, we expect the number of messages to grow with the number of atomic propositions in the formula.
By denoting $|E|$ the number of edges between monitors, we can say that the number of messages is linear in $|E|$.
The size of the messages is constant, it is the size needed to encode a timestamp, id and a verdict in the case of $\msgverdict$, or only the size needed to encode an id in the case of $\msgkill$.

\gettable{
	\caption{Scalability of Existing Algorithms.}
	\label{tbl:analysis:algs}
	\begin{tabular}{|l|c|c|c|}
		\hline \textbf{Algorithm} & \boldmath$\paramdelay$ & \textbf{\# Msg} & \boldmath$|\mathrm{Msg}|$\\
		\hline \hline Orchestration & $\Theta(1)$ &  $\Theta(|\comps|)$ & $O(\AP_c)$\\
		\hline Migration & $O(|\comps|)$ & $O(m)$ & $O(m|\comps|^2)$\\
		\hline Choreography & $O(\func{depth}(\mroot) + |\trace|)$ & $\Theta(|E|)$ & $\Theta(1)$\\
		\hline
	\end{tabular}
	}
\paragraph{Discussion.}
We summarize the main parameters that affect the algorithms in \rtbl{tbl:analysis:algs}.
%
This comparison could serve as a guide to choose which algorithm to run based on the environment (architectures, networks etc).
For example, on the one hand, if the network only tolerates short message sizes but can support a large number of messages, then {\algorch} or {\algchor} is preferred over {\algmigr}.
On the other hand, if we have heterogeneous nodes, as is the case in the client-server model, we might want to offload the computation to one major node, in this scenario {\algorch}  would be preferable as the forwarding monitor require no computation.
This choice can be further impacted by the network topology.
In a ring topology for instance, one might want to consider using Migration (with $m = 1$), as using {\algorch} might impose further delay in practice to relay all information, while in a star topology, using {\algorch} might be preferable.
In a more hierarchical network, {\algchor} can adapt its monitor tree to the hierarchy of the network.
Since we perform a worst-case analysis, we investigate the trends shown in \rsec{sec:experiment} by simulating the behavior of the algorithms on a benchmark consisting of randomly generated specifications and traces.
Furthermore, we use a real example in \rsec{sec:chiron} to refine the comparison by looking at six different specifications.

\section{The \THEMIS{} Framework} \label{sec:tool}
%
%
\THEMIS{} is a framework to facilitate the design, development, and analysis of decentralized monitoring algorithms; developed using Java and AspectJ~\cite{AspectJ} ($\sim$5700 LOC).\footnote{The \THEMIS{} framework is further described and demonstrated in the tool-demonstration paper~\cite{isstademo} and on its Website~\cite{themisweb}.}
It consists of a library and command-line tools.
The library provides all necessary building blocks to develop, simulate, instrument, and execute decentralized monitoring algorithms.
The command-line tools provide basic functionality to generate traces, execute a monitoring run and execute a full experiment (multiple parametrized runs).

The purpose of \THEMIS{} is to minimize the effort required to design and assess decentralized monitoring algorithms.
\THEMIS{} provides an API for monitoring and necessary data structures to load, encode, store, exchange, and process observations, as well as manipulate specifications and traces.
These basic building blocks can be reused or extended to modify existing algorithms or design new more intricate algorithms.
To assess the behavior of an algorithm, \THEMIS{} provides a base set of metrics (such as messages exchanged and their size, along with computations performed), but also allows for the definition of new metrics by using the API or by writing custom AspectJ instrumentation.
These metrics can be used to assess existing algorithms as well as newly developed ones.
Once algorithms and metrics are developed, it is possible to use existing tools to perform monitoring runs or full experiments.
Experiments are used to define sets of parameters, traces and specifications.
An experiment is effectively a folder containing all other necessary files.
By bundling everything in one folder, it is possible to share and reproduce the experiment\footnote{Experiments provided in this paper are provided at~\cite{themisartifact}, earlier experiments are provided at~\cite{themisweb}}.
After running a single run or an experiment, the metrics are stored in a database for postmortem analysis.
These can be queried, merged or plotted easily using third-party tools.
After completing the analysis, algorithms and metrics can be tuned so as to refine the design as necessary.

The \THEMIS{} framework has been improved since~\cite{themisissta,isstademo} to support fully distributed and multi-threaded support for monitoring by adding the tool \code{Node} that acts as a runtime.
One or more nodes can be deployed on a given platform.
A node receives information (via commands) to deploy components, and monitors on the current platform.
Each component contains one or multiple peripheries.
A periphery is an input stream to the component, that generates observations.
Peripheries follow a stream interface, and waits on a \code{next()} call to generate the next observations.
Monitors are attached to components, and receive the observations that components receive.
Thus, a node follows a publish subscribe model.
Components can be seen as topics, where a monitor registers to a topic.
Peripheries produce a stream of observations for components.
Peripheries can include reading traces from files, over network sockets, or be generated in a stream.
Nodes can communicate with other nodes in a distributed manner, through sockets.
The implementation of a node defines the high level assumptions of monitoring, for example, our round-based monitoring approach is implemented as a node.
For our implementation of a node, reading peripheries to generate component observations is done in parallel.
Once all peripheries have executed their $\code{next()}$ call to read the next events on the stream, the observations are aggregated in an event that is sent to monitors associated with the component.
Monitors execute in parallel, once all monitors have completed for a given round, the next round begins.
This behavior could be altered to ignore rounds, and simply monitor as soon as information is available by using a different node implementation.
Measures have been updated to  be thread-safe and work at the node-level.

Furthermore, the simplification logic (operation $\seval$) for the data structure \MemRep{} has been greatly improved to call the simplifier less and be more aggressive with the simplifications.
This yields on average, a  much smaller \MemRep{}, more details are provided in~\rapp{sec:app:newthemis}.

\section{Comparing Algorithms with \THEMIS{}}
\label{sec:experiment}
%
%
We use \THEMIS{} to compare adapted versions of existing algorithms ({\algorch}, {\algmigr}, and {\algchor} - \rsec{sec:analysis}) and study the behavior of the \MemRep{} data structure to validate the trends presented in the analysis in \rsec{sec:analysis}.
Furthermore, since the analysis presented worst-case scenarios, we look at the usefulness of simulation to determine the advantages or disadvantages of certain algorithms in specific scenarios.
The \THEMIS{} tool, the data for both scenarios used in this paper, the scripts used to process the data, and the full documentation for reproducing the experiments is found at~\cite{themisartifact}.
\paragraph{Overview of scenarios.}
We additionally consider a round-robin variant of {\algmigr}, {{\algmigrr}}, and use that for analyzing the behavior of the migration family of algorithms as it has a predictable heuristic (function $\migrchc$).
We compare the algorithms under two scenarios.
The first scenario explores synthetic benchmarks, that is, we consider random traces and specifications.
This allows us to account for different types of behavior.
The second scenario explores a specific example associated with a common pattern in programming.
For that, we consider a publish-subscribe system, where multiple publishers subscribe to a channel (or topic), the channel publishes events to the subscribers.
We use the Chiron user interface example~\cite{chiron,chironweb}, along with the specifications formalized for it~\cite{ltlpatterns}.
\paragraph{Monitoring metrics.}
The first considered metric is that of information delay ($\paramdelay$) (\rsec{sec:analysis:ds}).
The information delay impacts the size of the $\MemRep{}$ and therefore the computation, communication costs to send an \MemRep{} structure, and also the memory required to store it.
To compute the average information delay, we first consider the timestamp difference when an $\MemRep{}$ is resolved (i.e., it indicates a state).
We sum these differences across the entire run and count the number of resolutions.
As such, we acquire the average number of timestamps stored in an \MemRep{}.
We notice that it is possible for delay to fall below 1, as some traces can cause some monitors to emit a verdict at the very first timestamp.
By considering our analysis in \rsec{sec:analysis}, we split our metrics into two main categories: computation and communication.
The \MemRep{} structure requires the evaluation and simplification of a Boolean expression which is costly (see \rsec{sec:cmon:ehe}).
To measure computation, we can count the number of expressions evaluated (using memory lookup), and the number of calls to the simplifier.
For this experiment we consider the calls to the simplifier.
Since algorithms may have more than one monitor active, we consider for a given round the monitor with the most simplifications.
We sum the maximum number of simplifications per round across all the rounds, and then  normalize by the number of rounds.
This allows us to determine the slowest monitor per round, as other monitors are executing in parallel.
Therefore, we determine the bottleneck.
We refer to this metric as critical simplifications.
This can be similarly done by considering the number of expressions evaluated.
Since monitors can execute in parallel, we introduce \emph{convergence} as a metric to capture load balancing across a run of length $n$, where:
\[
  \func{conv}(n) = \dfrac{1}{n} \sum\limits_{t=1}^{n}\left(\sum\limits_{c \in \comps} \left(\frac{s^t_c}{{s}^t} - \frac{1}{|\comps|}\right)^2\right) \text{, with } {s}^t = \sum\limits_{c\in\comps} s^t_c.
\]
At a round $t$, we consider all simplifications performed on all components ${s}^t$ and for a given component $s^t_c$.
Then, we consider the ideal scenario where computations have been spread evenly across all components.
Thus, the ideal ratio is $\frac{1}{|\comps|}$.
We compute the ratio for each component ($\frac{s^t_c}{{s}^t}$), then its distance to the ideal ratio.
Distances are added for all components across all rounds then normalized by the number of rounds.
The higher the convergence the further away we are from having all computations spread evenly across components.
Convergence can also be measured similarly on evaluated expressions.
We consider communication using three metrics: number of messages, total data transferred, and the data transferred in a given message.
The number of messages is the total messages sent by all monitors throughout the entire run.
The data transferred consists of the total size of messages sent by all monitors throughout the entire run.
Both the number of messages and the data transferred are normalized using the run length.
Finally, we consider the data transferred in a given message to verify the message sizes.
To do so, we normalize the total data transferred using the number of messages.
%
\subsection{Synthetic Scenario}
\label{sec:experiment:synthetic}
%
\paragraph{Experimental setup.}
We generate the specifications as random LTL formulas using \vars{randltl} from \vars{Spot}~\cite{SPOT} then converting the LTL formulae to automata using \vars{ltl2mon}~\cite{LTL3Tools}.
We generate traces by using the \vars{Generator} tool in \THEMIS{} which generates synthetic traces using various probability distributions (provided by COLT\footnote{COLT provides a set of Open Source Libraries for High Performance Scientific and Technical Computing in Java.\cite{COLT}}).
For all algorithms we considered the communication delay to be 1 timestamp.
That is, messages sent at $t$ are available to be received at most at $t+1$.
In the case of migration, we set the active monitors to 1 ($m = 1$).
For our experiment, we use 200 traces of 60 events per component, we associate with each component 2 observations.
Traces are generated using 4 probability distributions (50 traces for each probability distribution).
The used distributions include \emph{normal} ($\mu=0.5, \sigma^2=1$), \emph{binomial} ($n=100, p=0.3$), and two \emph{beta} distributions: \emph{beta-1} ($\alpha=2, \beta=5$), and \emph{beta-2} ($\alpha=5, \beta=1$).
The varied distributions provide different probability to assign $\vt$ and $\vf$ to observations in the traces, as such we achieve varied coverage\footnote{
An observation is assigned $\vt$ if the generated number is strictly greater than $0.5$, and is otherwise $\vf$.
For the binomial distribution, we consider $p=0.3$ the probability of obtaining $\vt$.
}.
We vary the number of components between 3 and 5, and ensure that for each number we have 100 formulae that reference all components.
We were not able to effectively use a larger number of components since most formulae become sufficiently large that generating an automaton from them using \code{ltl2mon} fail.
The generated formulae were fully constructed of atomic propositions, there were no terms containing $\vt$ or $\vf$\footnote{To generate formulae with basic operators, string \vars{$\vf$=0,$\vt$=0,xor=0,}. \vars{M=0,W=0,equiv=0,implies=0,ap=6,X=2,R=0} is passed to \vars{randltl}.}
When computing sizes, we use a normalized unit to separate the encoding from actual implementation strategies.
Our assumptions on the sizes follow from the bytes needed to encode data (for example: 1 byte for a character,  4 for an integer).
We normalized our metrics using the length of the run, that is, the number of rounds taken to reach the final verdict (if applicable) or timeout, as different algorithms take different numbers of rounds to reach a verdict.
In the case of timeout, the length of the run is 65 (length of the trace, and 5 additional  timestamps to timeout).
%


\begin{figure}[t!]
 \subfloat[][Average delay]{\includegraphics[width=.5\textwidth]{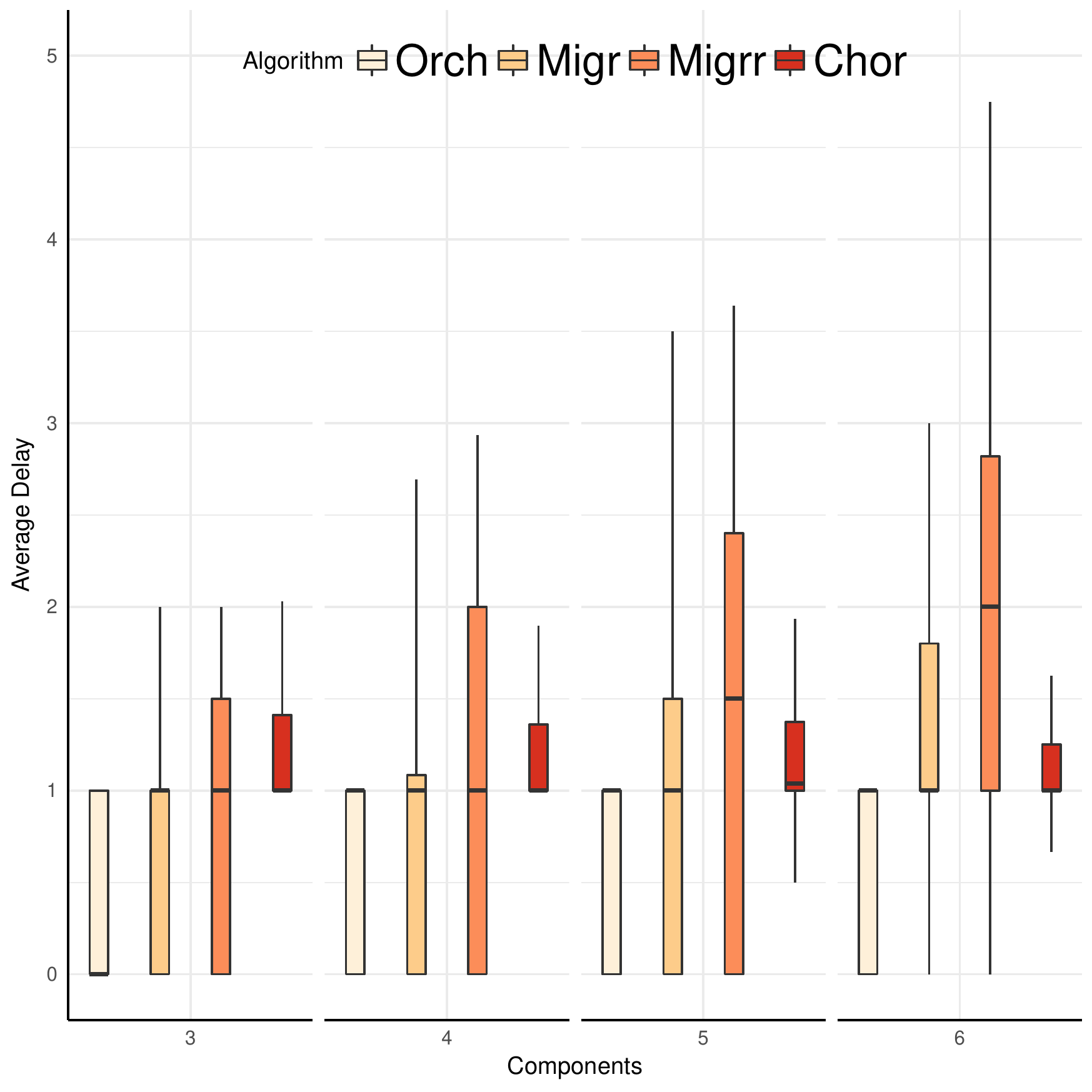} \label{fig:syn:delay}}
 \subfloat[][Critical simplifications]{\includegraphics[width=.5\textwidth]{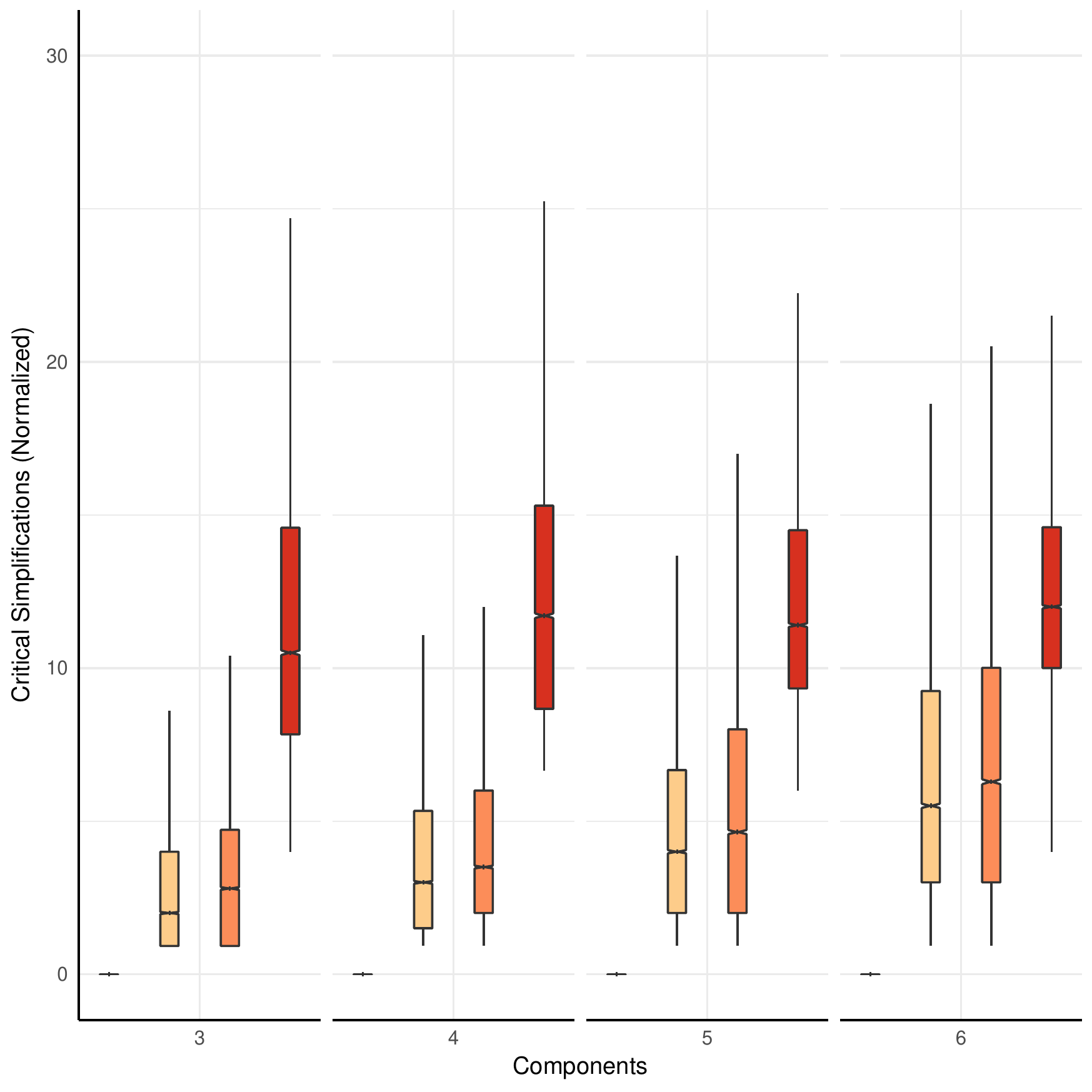} \label{fig:syn:simp}}\\
 \subfloat[][Convergence]{\includegraphics[width=.5\textwidth]{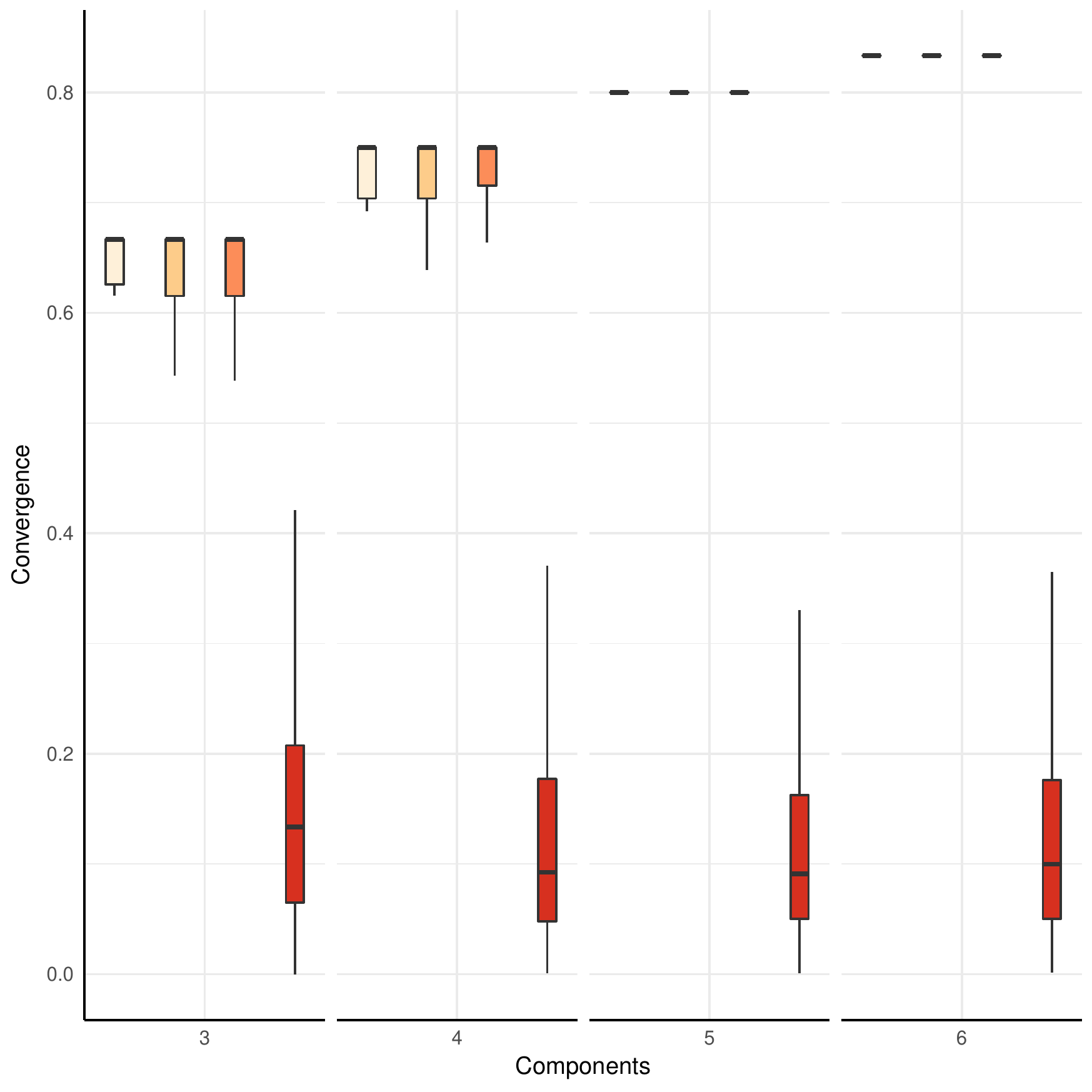} \label{fig:syn:conv}}
  \subfloat[][Number of messages]{\includegraphics[width=.5\textwidth]{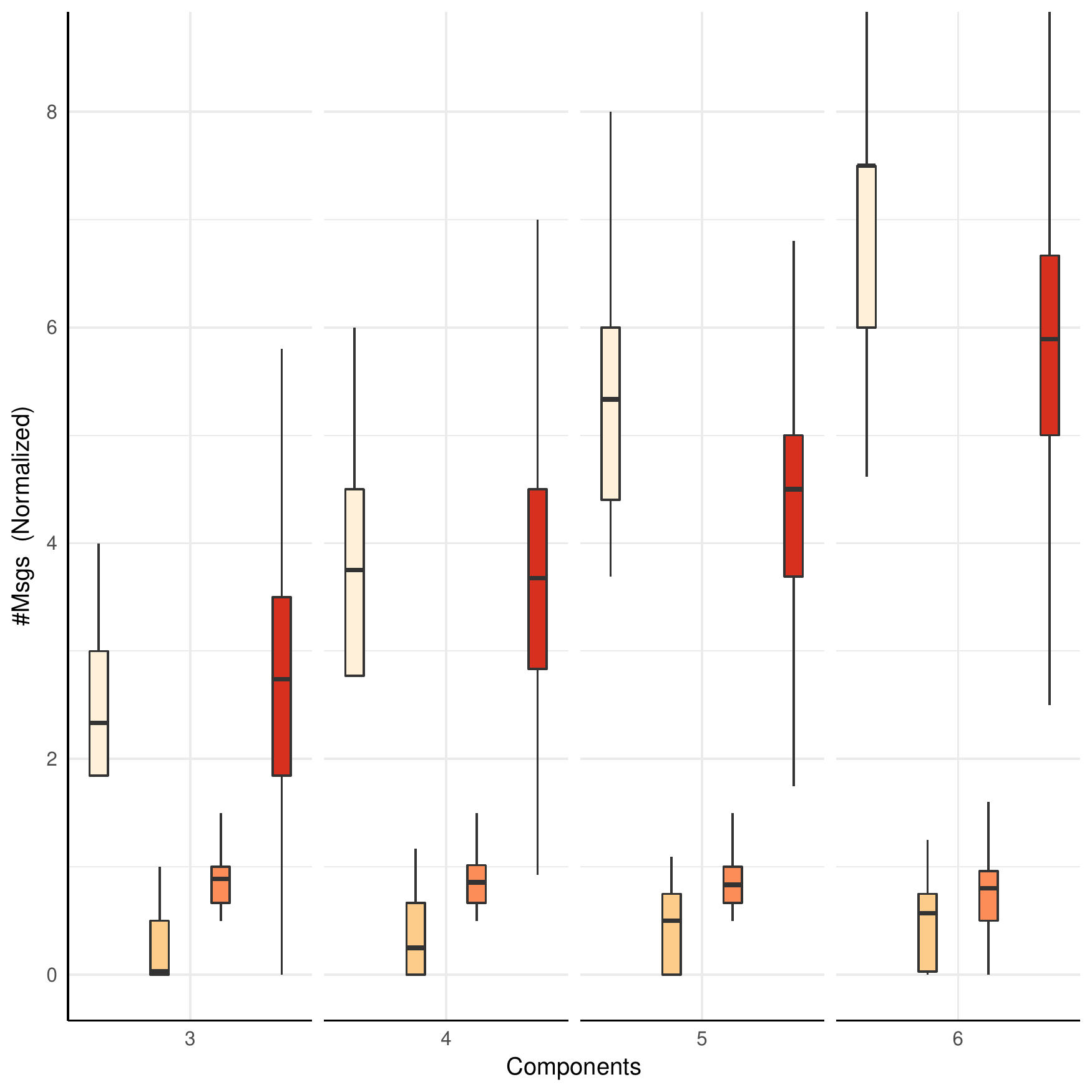} \label{fig:syn:msgnum}}
  \caption{Comparison of delay, computation and number of messages. Algorithms are presented in the following order: \algorch{}, \algmigr{}, \algmigrr{}, \algchor{}.}
  \label{fig:syn:computation}
\end{figure}

\begin{figure}[t!]
 \subfloat[][Total data transferred]{\includegraphics[width=.5\textwidth]{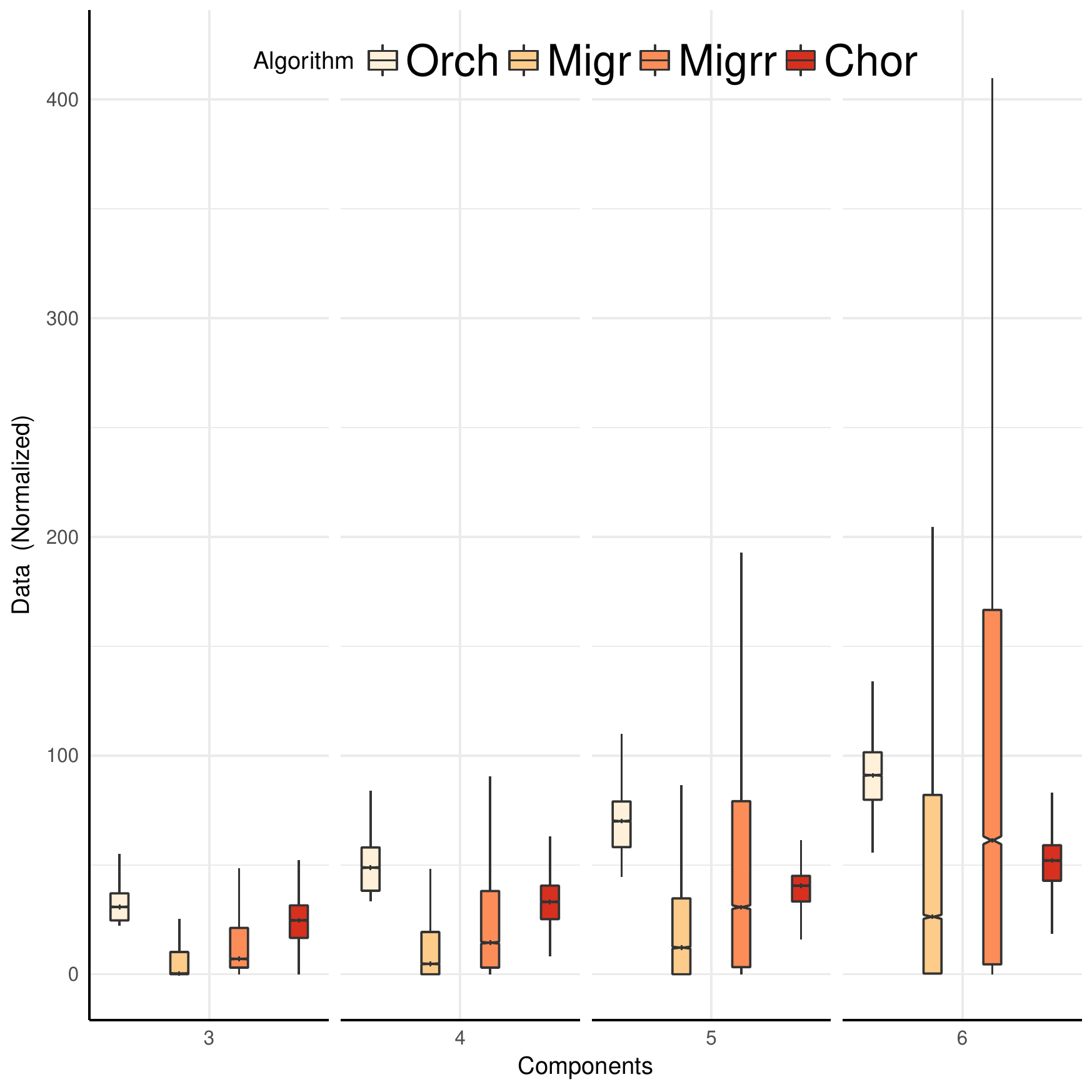} \label{fig:syn:data}}
 \subfloat[][Data per message]{\includegraphics[width=.5\textwidth]{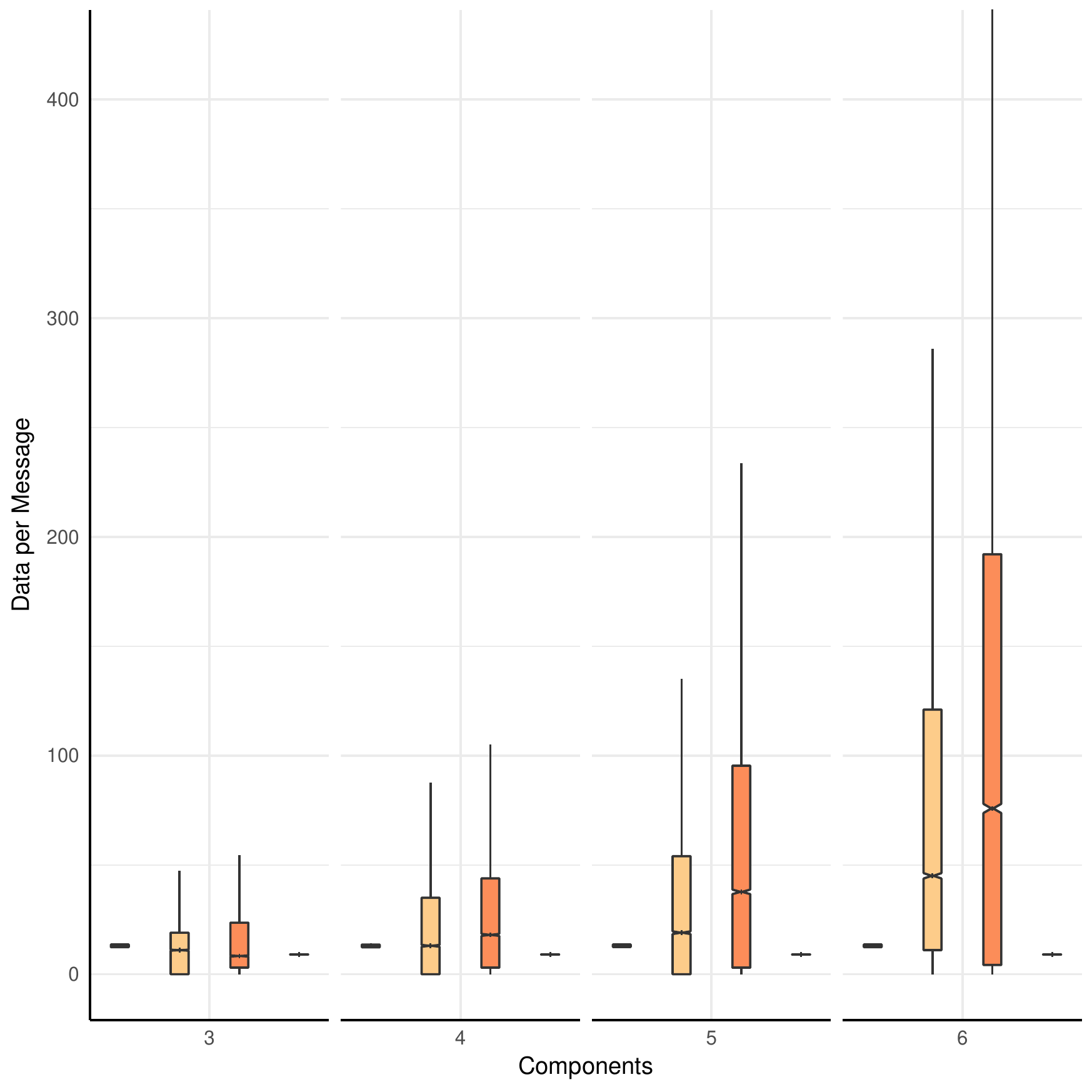} \label{fig:syn:datanorm}}
  \caption{Data Transfer}
  \label{fig:syn:datafig}
\end{figure}

\paragraph{Comparing algorithms.}
Figures~\ref{fig:syn:computation} and \ref{fig:syn:datafig} present the outcome of the proposed metrics for the algorithms.
We inspect the behavior of information delay in \rfig{fig:syn:delay} by computing the average information delay.
As expected, orchestration never exceeds a delay of 1.
For migration, the delay depends on the heuristic used, as mentioned in \rsec{sec:analysis:algs}, its worst case is the number of components.
Migration can still have a lower delay than orchestration in some cases (as observed for $|C| \geq 4$).
This observation is due to the initial monitor placement, as in our case we chose the first component to be always where we place the main orchestration monitor (component \emph{A}), while for migration, the heuristic function ($\migrchc{}$) decides which monitor starts.
As such, in a specification where the verdict can be resolved at the first timestamp, migration has an advantage.
For \algchor{}, the delay is at least 1, as the network depth affects the delay.
Furthermore, we notice that the delay for \algchor{} is not particularly affected with the number of components.
We know that its worst-case will depend on traces in cases of non-monitorability, we inspect that further in \rsec{sec:chiron}.
\rfig{fig:syn:simp} shows the average maximum computation done by a monitor for a given round.
By looking at computation, we notice that \algorch{} performs no simplifications.
This is the case as expressions in the \MemRep{} do not become sufficiently complex to require simplification.
We recall that for orchestration, the memories of all local observations are sent to the main monitor within one timestamp.
And as such, by memory lookup, the expression is immediately evaluated without the need to simplify.
We notice that for the average case, \algmigr{} performs a small amount of simplifications, and \algchor{} still executes a reasonable number of simplifications.
\rfig{fig:syn:conv} shows the convergence for the algorithms.
Since \algchor{} is the only algorithm that performs computations at different components in a given round, we notice that the convergence is much lower.

For communication, we first consider the number of messages transferred normalized by the length of the run.
We notice that for algorithms \algorch{} and \algchor{} the number of messages increases with the number of components.
Since \algchor{} depends on the edges that connect monitors, it scales better with the number of components than \algorch{} (the depth of the network is usually smaller than the number of components).
In contrast, we notice that for \algmigr{} and \algmigrr{}, the number of messages is independent from the number of components, as it depends on the number of active monitors.
\rfig{fig:syn:data} presents the total data transferred normalized by the run length.
We notice by examining algorithm \algorch{} that sending all observations can be costly.
Algorithms \algmigr{} and \algmigrr{} are capable of sending much less data on average, but have variable behavior, and scale poorly, we notice an increase as $|\comps|$ increases.
 Algorithm \algchor{} performs better than \algorch{}, and scales much better with component size.
 We notice that while \algmigr{} and \algmigrr{} send less messages than the other algorithms, and have better scaling in the number of messages transferred, they can still, in total, send more data depending on the traces and specification.
 We notice that the 75\% quartile for \algmigrr{} still exceeds that of \algorch{}.
 Since total data transferred includes both the number of messages and their sizes, we present the size of the message in \rfig{fig:syn:datanorm} by dividing the total data transferred by the number of messages.
 We observe that for \algorch{} and \algchor{} the size of a message is constant, not very variable and does not depend on $|\comps|$,
 while for \algmigr{} and \algmigrr{} we observe quadratic scaling.
 We recall from \rsec{sec:analysis:algs} that the migration algorithms send the \MemRep{} which grows quadratic in the size of the information delay.

\paragraph{Comparing variants.}
Using the same dataset, we look at another use-case of \THEMIS{}; that of comparing variants of the same algorithm.
In this case, we focus on differences between \algmigr{} and \algmigrr{}.
The heuristic of \algmigr{} improves on the round-robin heuristic of \algmigrr{} by choosing to transfer the \MemRep{} to the component that can observe the atomic proposition with the earliest timestamp in the \MemRep{} (referred to as earliest obligation~\cite{DecentMon}).
Using the simple heuristic, we notice a drop in the delay starting from $|\comps| > 4$ (\rfig{fig:syn:delay}).
The simple heuristic of earliest obligation seems to reduce on average the delay of the algorithm, interestingly, it maintains a mean of 1.
Furthermore, we observe a drop in both messages transferred (\rfig{fig:syn:data}) and size of messages (\rfig{fig:syn:datanorm}).
Consequently, this constitutes a drop in the total data transferred (\rfig{fig:syn:data}).
We note that the message size is also the size of the \MemRep{}.
The drop in the number of messages sent is explained by the decision not to migrate when the soonest observation can be observed by the same component, while for \algmigrr{}, the round-robin heuristic causes the \MemRep{} to always migrate.
However, this does not lead to a much lower number of simplifications (\rfig{fig:syn:simp}).
Using \THEMIS{} to compare the variants shows us that the earliest obligation heuristic reduces the size of the \MemRep{}, and thus, the size of the message, but also the number of messages sent.
However, it does not seem to impact computation as the number of simplifications remains similar.

\paragraph{Discussion.}
The observed behavior of the simulation aligns with the initial analysis described in \rsec{sec:analysis}.
We observe that the \MemRep{} presents predictable behavior in terms of size and computation.
The delay presented for each algorithm indeed depends on the listed parameters in the analysis.
With the presented bounds on \MemRep{}, we can determine and compare the algorithms that use it.
Therefore, we can theoretically estimate the situations where algorithms might be (dis)advantaged.
However, both Figures~\ref{fig:syn:computation} and \ref{fig:syn:datafig} show that for most metrics, we observe a large variance (as evidenced by the interquartile difference).
As such, we caution that while the analysis presents trends where algorithms have the advantage, it is still necessary to address the specifics, hence the need for simulation.

\paragraph{Trace variance.}
In \rtbl{tbl:exp:traces}, we examine the variance by observing metrics {\wrt} probability distributions used to generate the traces.
To exclude the variance due to the number of components, we fix $|\comps| = 6$, as it provides the highest variance.
For each metric, we present the mean and the standard deviation (between parentheses).
All metrics are normalized by the length of the run.
The metrics in order of columns are: average information delay ($\paramdelay$), average number of messages (\#Msgs), total data transferred (Data), average maximum simplifications per monitor ($\mathrm{S}$), and
convergence based on expressions evaluated ($\mathrm{Conv_E}$).
We observe that by changing the probability distribution, the metrics vary significantly.
This is particularly prominent for the information delay (especially in the case of \algchor{}), and data transferred.
We explore the differences in the algorithms in \rsec{sec:chiron} by considering real examples with existing formalized specifications.

\begin{table}[t]
  \caption{Variation of average delay, number of messages, data transfer, critical simplifications and convergence with traces generated using different probability distributions for each algorithm. Number of components is $|C| = 6$. Table cells include the mean and the standard deviation (in parentheses).}
  \label{tbl:exp:traces}
\centering
\scalebox{0.9}{\begin{tabular}{|l|c|c|cc|cc|cc|cc|}
  \hline
  \textbf{Alg.} & \textbf{Trace} & \boldmath$\paramdelay$ & \textbf{\#Msgs} & \textbf{Data} & \boldmath$\mathrm{S_{crit}}$ & \boldmath$\mathrm{Conv_E}$ \\
\hline \hline \multirow{4}{*}{{\algorch}}
 & normal & 0.69 (0.46) & 7.13 (1.38) & 94.87 (18.46) & 0.00 (0.00) & 0.83 (0.01) \\
 & binomial & 0.69 (0.46) & 7.15 (1.39) & 93.10 (18.13) & 0.00 (0.00) & 0.83 (0.01) \\
 & beta-1 & 0.70 (0.46) & 6.98 (1.47) & 96.11 (20.17) & 0.00 (0.00) & 0.83 (0.02) \\
 & beta-2 & 0.69 (0.46) & 6.91 (1.71) & 83.37 (20.66) & 0.00 (0.00) & 0.82 (0.02) \\
\hline \multirow{4}{*}{{\algmigr}}
 & normal & 1.72 (1.42) & 0.50 (0.32) & 110.13 (276.43) & 7.01 (5.01) & 0.82 (0.03) \\
 & binomial & 1.67 (1.40) & 0.49 (0.32) & 95.44 (221.65) & 6.86 (4.91) & 0.82 (0.04) \\
 & beta-1 & 1.82 (1.44) & 0.53 (0.32) & 133.15 (313.44) & 7.14 (5.16) & 0.82 (0.04) \\
 & beta-2 & 1.53 (1.36) & 0.47 (0.38) & 56.48 (114.54) & 5.95 (4.24) & 0.82 (0.03) \\
\hline \multirow{4}{*}{{\algmigrr}}
 & normal & 2.64 (1.93) & 0.70 (0.35) & 177.48 (358.61) & 7.50 (5.18) & 0.83 (0.03) \\
 & binomial & 2.59 (1.95) & 0.69 (0.36) & 171.64 (318.94) & 7.49 (5.21) & 0.83 (0.03) \\
 & beta-1 & 2.82 (1.94) & 0.74 (0.34) & 210.02 (452.83) & 7.49 (5.23) & 0.82 (0.03) \\
 & beta-2 & 2.55 (2.08) & 0.66 (0.41) & 162.28 (287.28) & 6.93 (4.90) & 0.82 (0.02) \\
\hline \multirow{4}{*}{{\algchor}}
   & normal & 2.02 (1.97) & 5.92 (1.60) & 52.54 (14.23) & 12.68 (3.63) & 0.13 (0.10) \\
   & binomial & 1.93 (1.86) & 5.95 (1.61) & 52.76 (14.33) & 12.55 (3.70) & 0.13 (0.10) \\
   & beta-1 & 2.59 (4.58) & 5.80 (1.64) & 51.54 (14.48) & 13.29 (4.33) & 0.14 (0.12) \\
   & beta-2 & 2.95 (7.26) & 5.81 (1.79) & 51.52 (15.93) & 13.50 (9.91) & 0.14 (0.14) \\
  \hline
\end{tabular}}
\end{table}



\subsection{The Chiron User Interface}
\label{sec:chiron}
\paragraph{Overview.}
Moving away from synthetic benchmarks, we consider properties that apply to patterns of programs and specifications.
In this section, we compare the algorithms by looking at a real example that uses the publish-subscribe pattern.
To that extent, we consider the Chiron user interface example~\cite{chiron}.
Chiron consists of artists responsible of rendering parts of a user interface, that register for various events via a dispatcher.
A dispatcher receives events from an abstract data type (ADT) and forwards them to the registered artists.
We chose Chiron for two practical reasons.
Firstly its example source code (in ADA), and its specifications are available online~\cite{chironweb}.
The specification is completely formalized and utilizes various LTL patterns described in~\cite{ltlpatterns,ltlpatternsweb}.
Thus, it covers a multitude of patterns for writing specifications.
Secondly, the Chiron system can be easily decomposed into various components, we consider four components, the dispatcher (A), the two artists (B,C) and the main thread (D).
The main thread is concerned with observing termination of the program.
\paragraph{Experimental setup.}
Table~\ref{tbl:chiron:specs} lists the subset of the Chiron specification we considered.
For each property,  column ID references the original property name in~\cite{chironweb},
 column $\verdict$ references the expected verdict at the end of the trace\footnote{In the case where the expected verdict is $\vna$, the specification is designed to falsify the property, as such if no falsification is found, we will terminate with verdict $\vna$.},
and column pattern identifies the LTL pattern corresponding to the formula.
We modify the Chiron example program~\cite{chironweb} to output a trace of the program, and consider the specifications listed in Table~\ref{tbl:chiron:specs}\footnote{We exclude specification 7 as we were unable to generate an automaton using \code{ltl2mon} for it. This is due to the formula either being too complex, or non-monitorable.}.
For example, we consider the specification shown in Listing~\ref{chiron:spec}.
It states that artists are only notified when the dispatcher receives an event.
That is, the dispatcher does not send events to the artists without receiving them properly from the ADT.
Since we monitor offline, we generate the trace by inserting a global monitor that contains information about all relevant atomic propositions.
The program is then instrumented to notify the monitor of events.
Specifications and traces are then provided as input to \THEMIS{} to process with the existing algorithms.
The details on the atomic propositions and their assignment to components can be found in Appendix~\ref{sec:app:chiron}.
We randomized the events dispatched in the Chiron example, and generated 100 traces of length 279.
We targeted generating traces under 300 events.
This corresponds to the ADT dispatching 91 events, with the addition of events to register, and unregister artists.

\begin{lstlisting}[basicstyle=\ttfamily\footnotesize,frame=single, caption={Example Chiron Specification},label={chiron:spec},float]
!(notify_client_event_a1_e1 || notify_client_event_a2_e1)
  U (notify_artists_e1 ||
     []!(notify_client_event_a1_e1 || notify_client_event_a2_e1))
\end{lstlisting}

\gettable{
\caption{Monitored Chiron specifications. CRC stands for Constrained Response Chain.}
\label{tbl:chiron:specs}
\centering
\begin{tabularx}{\textwidth}{|l|c|l|X|}
  \hline \textbf{ID} & \boldmath$\verdict$ & \textbf{Pattern} & \textbf{Description} \\
  \hline\hline
   1 & $\vna$ & Absence & An artist never registers for an event if she is already registered for that event, and an artist never unregisters for an event unless she is already registered for that event. \\
  \hline 2 & $\vna$ & CRC (2-1) & If an artist is registered for an event and dispatcher  receives this event, it will not receive another event before  passing this one to the artist.  \\
  \hline 3 & $\vt$ & Precedence & Dispatcher does not notify any artists of an event until it receives this event from the ADT.\\
  \hline 5 & $\vna$ & Absence & Dispatcher does not block ADT if no one is registered (this means that if no artists are registered for events of kind 1, dispatcher does nothing upon receiving an event of this kind from the ADT). \\
  \hline 7* & $\vna$ & CRC (3-1)  & The order in which artists register for events of kind 1 is
             the order in which they are notified of an event of this kind by
             the dispatcher. In other words, if artist1 registers for event2
             before artist2 does, then once dispatcher receives event2 from
             the ADT, it will first send it to artist1 and then to artist2.\\
 \hline 15a & $\vna$  & Universal & The program never terminates with an artist registered. \\
 \hline 15b & $\vt$ & Response & An artist always unregisters before the program terminates. Given that you can't register for the same event twice, we need only check that unregisters respond to registers \\
  \hline
\end{tabularx}
}

\paragraph{Comparing algorithms.}
\rfig{fig:chiron:computation} presents the means for the metrics of average delay, convergence, and both critical and maximum simplifications\footnote{We note that since we broke down the metrics per specification, we have little variation in the data, for details and standard deviations refer to Appendix~\ref{sec:app:data}.}.
We immediately notice for \algchor{} the high average delay for specifications 2 and 15b (133.86, and 116.52, respectively).
In these two cases, the heuristic to generate the monitor network for choreography has split the network inefficiently, and introduced a large delay due to dependencies.
We recall that the heuristic used for choreography consider LTL formulae.
For a given formula it counts the number of references to atomic propositions of a given component.
The monitor tasked with monitoring the formula will then be associated with the highest component.
To generate a decentralized specification, the heuristic starts with an LTL formula and splits it into two subformulas for each binary operator, then one of the subformulas is chosen to remain on the current component while the other is delegated to the component with the most references to atomic propositions.
We see in this case that simply counting references and breaking ties using the lexicographical order of the component name can yield inefficient decompositions.
Furthermore, we notice that while \algorch{} maintains the lowest delay, other algorithms can still yield comparable delays.
In the case of specification 15a we observe that \algorch{}, \algmigr{}, and \algchor{} have similar delay (1.0).
While \algmigr{} may outperform \algchor{} for specifications 2 and 15b, it is the opposite for specifications 1, 3 and 5.
This highlights that the network decomposition of monitors (i.e., the setup phase) is an important consideration when designing decentralized monitoring algorithms.

\rfig{fig:chiron:conv} presents the convergence (computed using the number of expressions evaluated).
We see that the poor decomposition also yields imbalanced workloads on the monitors.
In the case of specifications 2 and 15b, we observe a convergence of 0.67 to 0.71 for \algchor{}, respectively.
The observed convergence is comparable  with that of \algorch{} and \algmigr{}.
Furthermore, it is still possible to improve on the load balance for specifications 3 and 15a, as the convergence is high (0.33 and 0.47).

\rfig{fig:chiron:simpc} illustrates critical simplifications, we see that \algchor{} has a higher cost compared to \algmigr{} in terms of computation.
We also notice that \algmigr{} performs better than \algmigrr{} for all specifications.
The heuristic of migrating the formula based on the atomic proposition with the earliest timestamp  (earliest obligation) does indeed improve computation costs.
More importantly, we notice that the highest delay for \algchor{} is for specifications 2 and 15b.
To inspect that, we look at the maximum delay induced in a given monitor for an entire run, and consider the mean across all traces to obtain the worst-case maximum simplifications in \rfig{fig:chiron:simpm}.
Indeed, we notice a peak in the maximum number of simplification in a given round for specifications 2 and 15b.
Particularly, we notice that while comparable in other specifications (e.g., for specification 15a, we have 20  max simplifications for \algchor{} as opposed to 8.64 and 10.00 for \algmigr{} and \algmigrr{}, respectively), the maximum number of simplifications for \algchor{} increases to 2,798 (compared to 12 for \algmigrr{}), and 3,387 (compared to 16.86 for \algmigrr{}) for specifications 2 and 15b, respectively.
In this particular case, we see how delay can impact the maximum number of simplifications.
\begin{figure}[t!]
 \subfloat[][Average delay]{\includegraphics[width=.5\textwidth]{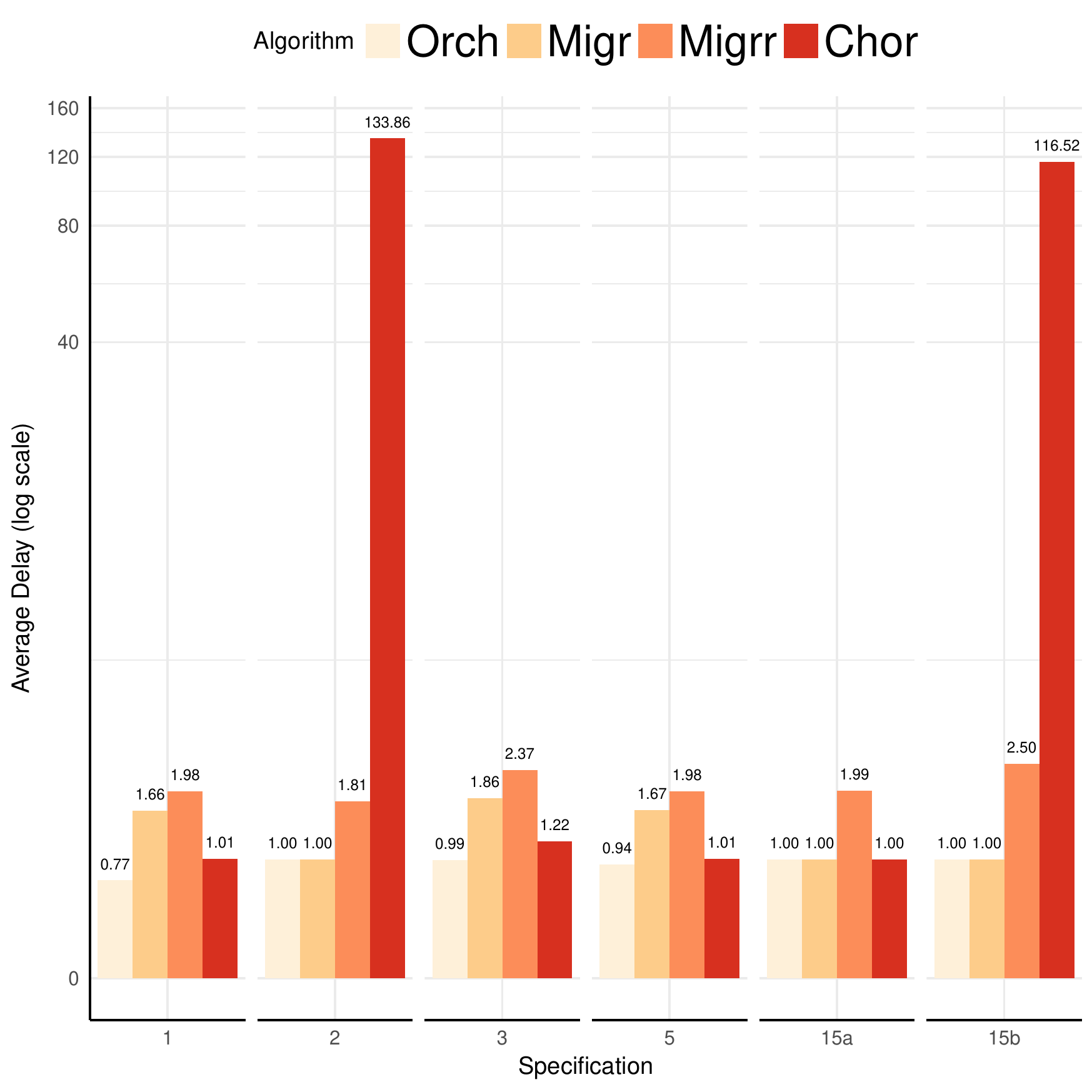} \label{fig:chiron:delay}}
 \subfloat[][Convergence]{\includegraphics[width=.5\textwidth]{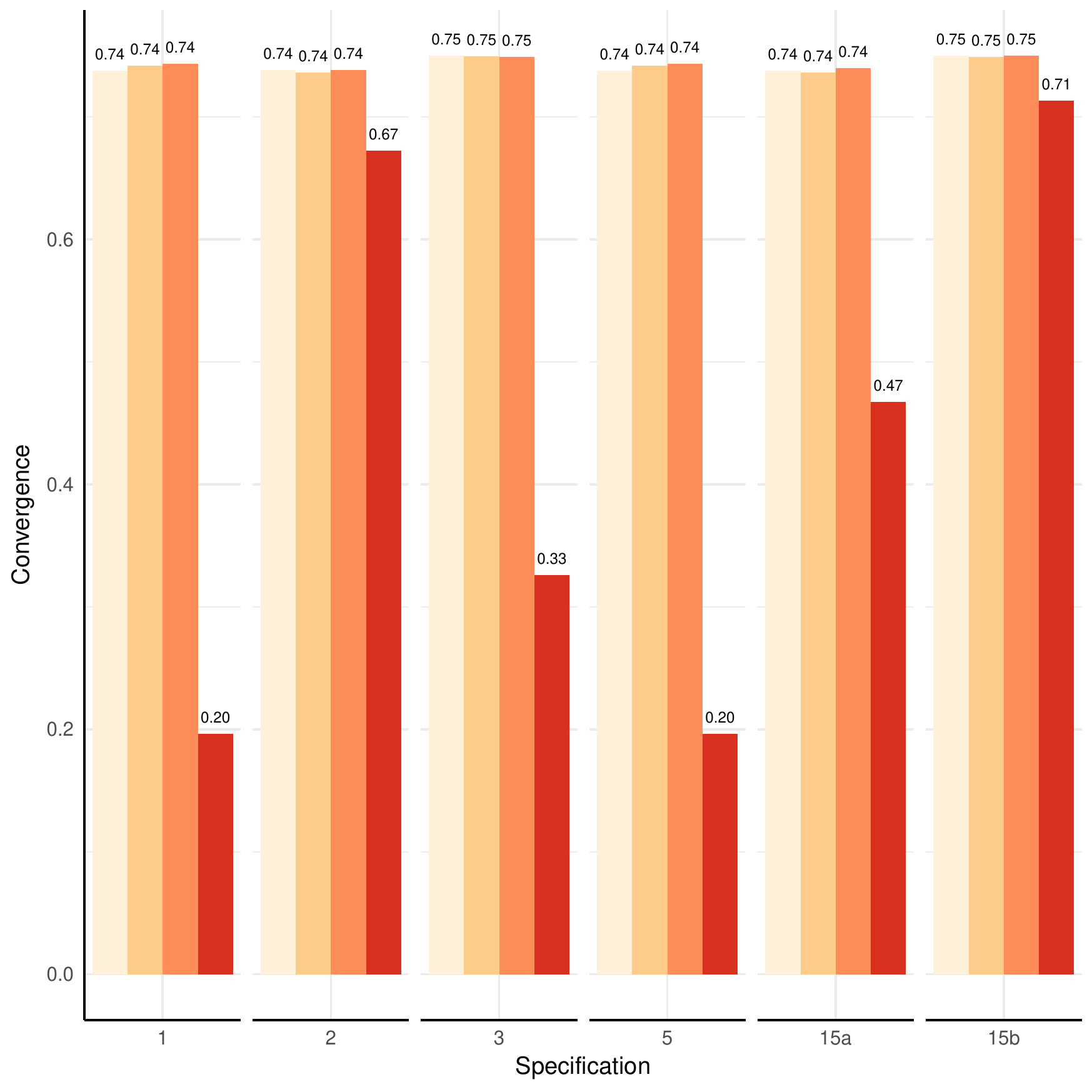} \label{fig:chiron:conv}}\\
 \subfloat[][Critical simplifications]{\includegraphics[width=.5\textwidth]{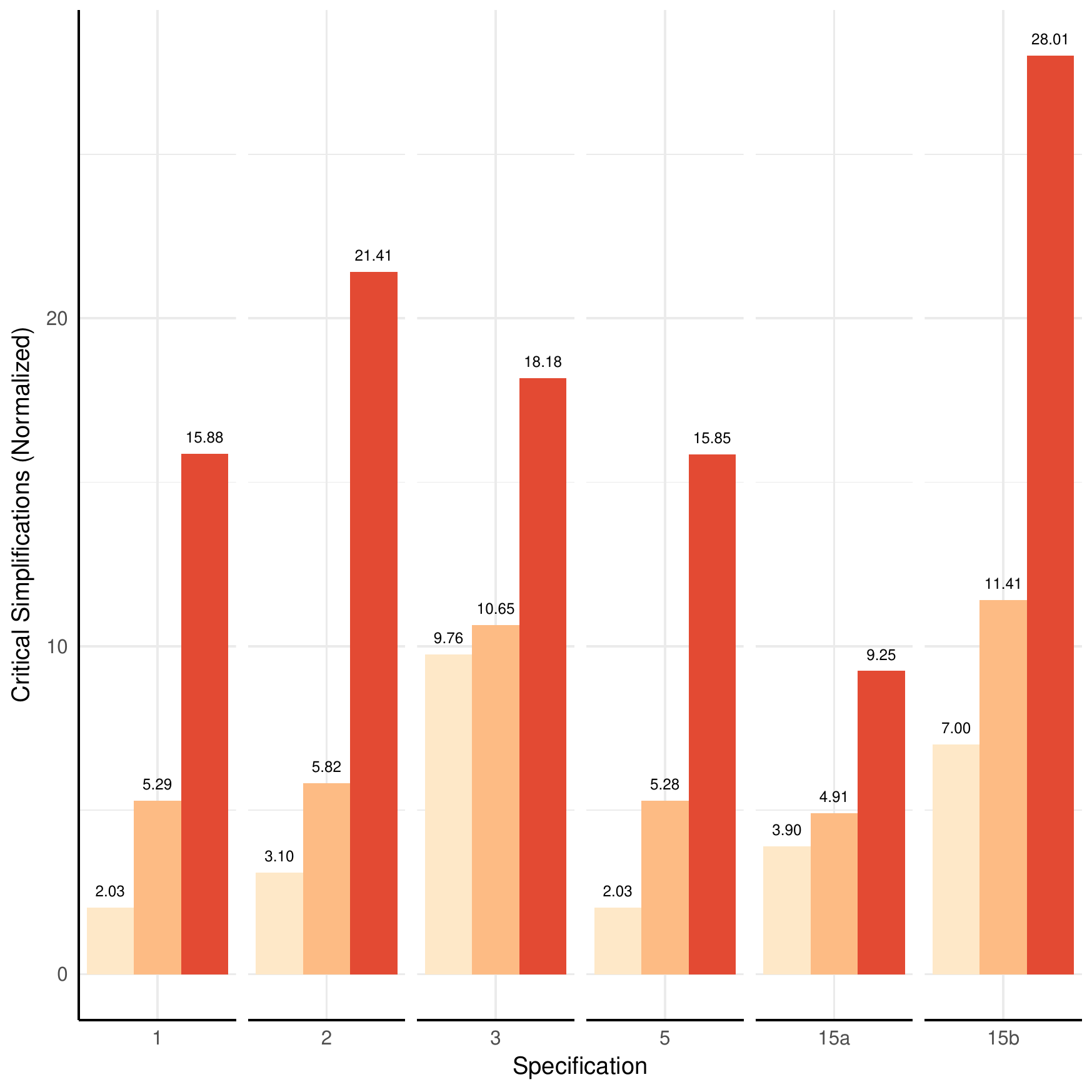} \label{fig:chiron:simpc}}
  \subfloat[][Maximum simplifications (worst case)]{\includegraphics[width=.5\textwidth]{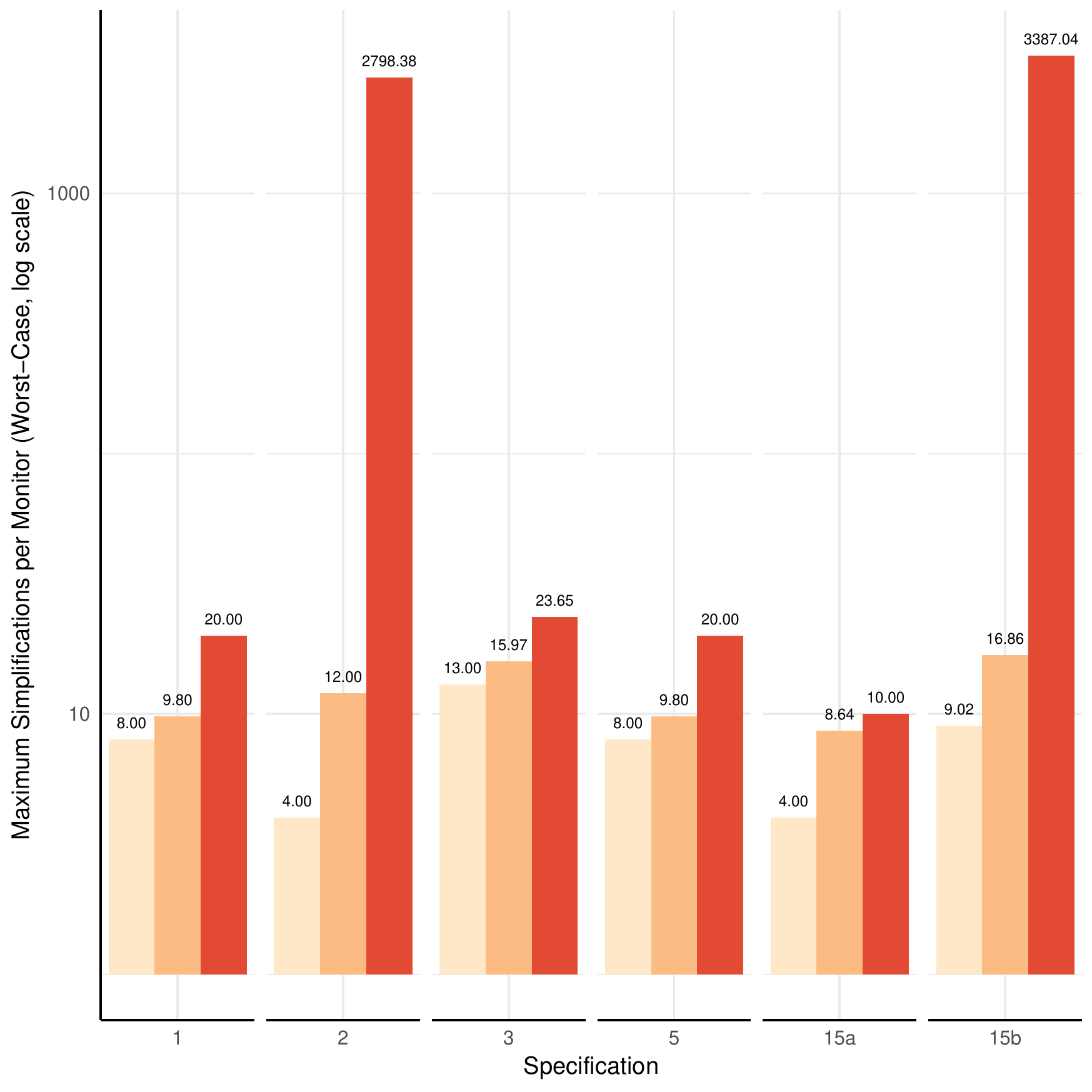} \label{fig:chiron:simpm}}
  \caption{Comparison of delay, convergence and number of simplifications. Algorithms are presented in the following order: \algorch{}, \algmigr{}, \algmigrr{}, \algchor{}. \algorch{} is omitted in the simplifications count as it is zero.}
  \label{fig:chiron:computation}
\end{figure}

We now consider communication costs by observing the number of messages transferred in \rfig{fig:chiron:msgnum}.
We see that \algmigr{} and \algmigrr{} perform well compared to the other two algorithms, with \algmigr{} performing consistently better than \algmigrr{}.
We note that the analysis of \algmigr{} indicates that the number of messages per round will be in the worst case the number of active monitors (in our case that is 1).
One can see in specification 5 that \algmigr{} sends only 0.02 messages on average per round, compared to \algmigrr{} with 1.01, followed by \algorch{} with 2.95, and finally \algchor{} with 4.89.
We see that \algorch{} outperforms \algchor{} in the case of specifications 1, 2, 3 and 5, where generally \algorch{} sends 1-2 messages less.
We note that this pattern is in line with the trends shown in \rfig{fig:syn:msgnum}.
We see for $|\comps| = 4$ that \algorch{} and \algchor{} overlap, with \algmigrr{} outperforming both, and \algmigr{} outperforming all other algorithms.
Interestingly, we find that in the case of specification 15a in  \rfig{fig:chiron:msgnum}, we see for \algchor{} a number of messages  (0.98) slightly higher than \algmigr{} (0.97) and lower than \algmigrr{} (1.01), consistent with the lower whiskers in  \rfig{fig:syn:msgnum}.
Similarly, when considering the total data transferred in \rfig{fig:chiron:msgdata}, we see as a trend across specifications \algmigr{} being particularly good, while still being slightly outperformed by \algchor{} in specifications 15a and 15b.
Furthermore, we notice that \algmigrr{} performs poorly and indeed sends more data than \algorch{} in most cases, indicating that a heuristic can indeed be instrumental in the success of designing the family of migration algorithms.
We notice also that the trends from \rfig{fig:syn:data} seem to apply in most cases, \algorch{} sends a lot more data than \algmigr{} and \algchor{}, with \algmigrr{} possibly surpassing \algorch{}.

\begin{figure}[t!]
 \subfloat[][Number of messages (normalized)]{\includegraphics[width=.5\textwidth]{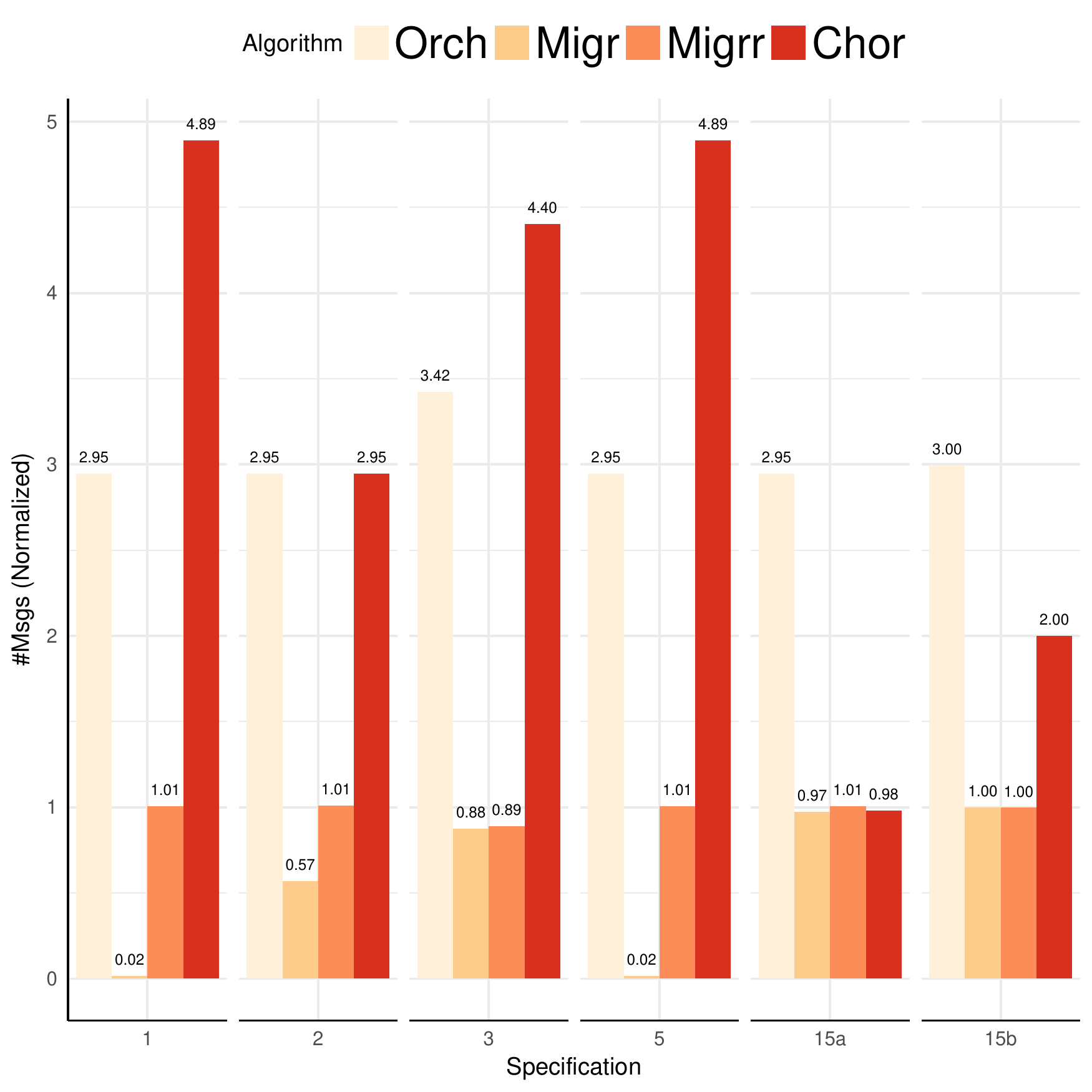} \label{fig:chiron:msgnum}}
 \subfloat[][Total data transferred (normalized)]{\includegraphics[width=.5\textwidth]{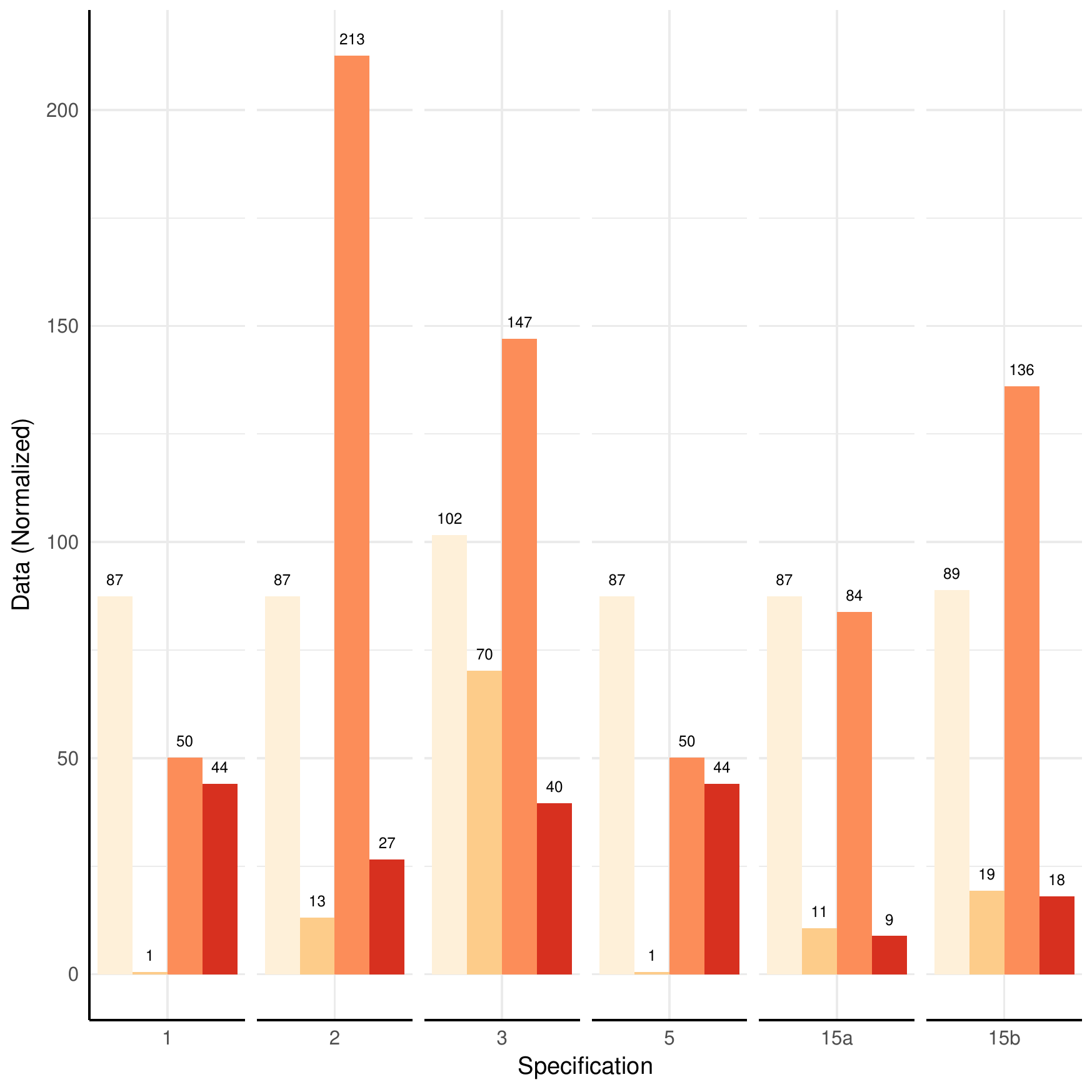} \label{fig:chiron:msgdata}}
  \caption{Data Transfer}
  \label{fig:chiron:datafig}
\end{figure}

\section{Future Directions}
\label{sec:fw}
%
By introducing decentralized specifications, we separate the monitor topology from the monitoring algorithm.
As such, we address future directions that result from analyzing the topology of monitors, and thus, define properties on such topologies, studying the monitoring by improving metrics, and applying decentralized specifications to the problem of runtime enforcement~\cite{Falcone10}.

\cblock{
On the one hand, we can consider optimizing the topology of monitors to be best suited for system architecture, this provides a formulation  of compatibility (\rsec{sec:prop:compat}) as an optimization problem.
We can also compare decentralized specifications to ensure that two specifications emit the same verdict for all possible traces, we elaborate on this property as \emph{verdict equivalence}.
Additionally, it is possible to explore the problem of \emph{specification synthesis}, that is, the problem of generating decentralized specifications using other specifications as reference, and ensuring they have specific properties, such as monitorability (\rsec{sec:dmon:monitorability}), compatibility, and verdict equivalence.
On the other hand, we can explore new metrics  in \THEMIS{}  to compare decentralized monitoring algorithms.
And finally, to explore the problem of runtime enforcement~\cite{Falcone10} when dealing with decentralized specifications.
}

\paragraph{Optimized compatibility.}
The first direction is to extend the notion of \emph{compatibility} (\rsec{sec:prop:compat}) to not only decide whether or not a specification is applicable to the architecture of the system, but also use the architecture to optimize the placement.
That is, one can generate a decentralized specification that balances computation to suit the system architecture, or optimize specific algorithms for specific layouts of decentralized systems (as discussed in \rsec{sec:analysis}).

\paragraph{Verdict equivalence.}
We can also compare decentralized specifications to ensure that two specifications emit the same verdict for all possible traces, we elaborate on this property as \emph{verdict equivalence}.
We consider two decentralized specifications $\decent$ and $\decent'$, constructed with two sets of monitors $\mons$ and $\mons'$ (as per Section~\ref{sec:dmon}).
Let the root monitors be $\mroot$, and $\mroot'$, respectively.
We recall the notation from \rsec{sec:cmon:pre}, for a given monitor label $\ell$, $q_{\ell_0}$, $\Delta_\ell$ and $\verdictf_\ell$ indicate the initial state, transition relation and the verdict function for a given monitor (automaton).
One way to assess equivalence, is to verify, that for all traces, both specifications yield similar verdicts.
It suffices to evaluate the trace on the transition function starting from the root monitor, and check the verdict of the reached state.
That is, two decentralized specifications  $\decent$ and $\decent'$ are \emph{verdict equivalent}
iff $\forall t \in \traces: \>
    \verdictf_{\mroot}(\Delta_{\mroot}'^*(q_{\mroot_0} , t, 1)) =
    \verdictf_{\mroot'}(\Delta_{\mroot'}'^*(q_{{\mroot'}_0} , t, 1)).
 $
The verdict equivalence property establishes the basis for comparing two specifications that eventually output the same verdicts for the same traces.
For all possible traces ($\forall t \in \traces$),
we first evaluate the trace on the root monitor of $\decent$ (i.e., $\Delta_{\mroot}'^*(q_{\mroot_0} , t)$),
and similarly we evaluate the same trace on the root monitor of $\decent'$ (i.e., $\Delta_{\mroot'}'^*(q_{{\mroot'}_0} , t)$).
The states reached for both of the automata executions need to be labeled by the same verdict.
While both specifications yield the same verdict for a given trace, one could also extend this formulation to add bounds on delay .
\paragraph{Specification Synthesis.}
\label{sec:prop:synthesis}
Another interesting problem to explore is that of \emph{specification synthesis}.
Specification synthesis considers the problem of generating a decentralized specification, using various inputs.
Typically, we would expect another specification as reference, and possibly the system architecture.
For example, given a centralized specification, we generate a decentralized specification, by splitting the specification into subspecifications and assigning the subspecifications to monitors.
Generating a decentralized specification using a centralized one as reference is used in some algorithms such as choreography~\cite{DecentMon}\footnote{For more details see \rapp{sec:app:choreo}}.
Starting from an LTL formula, the formula is split into subformulas hosted on the various components of the system (this is detailed further in Section~\ref{sec:analysis:algs}).
Given a decentralized specification $\decent$, and a system graph $\tuple{\comps,E'}$, the problem consists in generating a specification $\decent'$.
The variants of the synthesis problem depend on the properties that $\decent'$ must have, we list (non-exhaustively) example properties:
\begin{enumerate}
    \item $\decent'$ is monitorable (Section~\ref{sec:dmon:monitorability});
    \item $\decent'$ is compatible with $\tuple{\comps, E'}$ (Section~\ref{sec:prop:compat});
        \item $\decent'$ and $\decent$ are verdict equivalent.
\end{enumerate}
Synthesis problems could also be expanded to handle optimization techniques, with regards to specifications.
The specification determines the computation and communication needed by the monitors.
As such, it is possible to optimize, the size of automata, and references so as to fine tailor load and overhead for a given system architecture.

\paragraph{Extending \THEMIS{} metrics.}
Moreover, one could consider creating new metrics for \THEMIS{} to analyze more aspects of decentralized monitoring algorithms.
We see that this is important, as in two specifications out of the five when using Chiron  traces (\rsec{sec:chiron}), the choreography algorithm using a simple heuristic generated an inefficient decentralized specification.
New metrics would be automatically instrumented on all existing algorithms and experiments could be easily replicated to compare them.
%
%

\section{Conclusions}
\label{sec:conclusion}
%
%
%

We present a general approach to monitoring decentralized specifications.
A specification is a set of automata associated with monitors that are attached to various components.
We provide a general decentralized monitoring algorithm defining the major steps needed to monitor such specifications.
We make a clear distinction between the topology of monitors and the behavior of each monitor.
We elaborate on twp properties associated with decentralized specifications: compatibility, and monitorability.
In addition, we present the \MemRep{} data structure which allows us to (i) aggregate monitor states with strong eventual consistency, (ii) remain sound {\wrt} the execution of the monitor, and (iii) characterize the behavior of the algorithm at runtime.
We then map three existing algorithms: Orchestration, Migration and Choreography to our approach using our data structures.
We develop and use \THEMIS{} to implement algorithms and analyze their behavior by designing new metrics.
We implement four algorithms in \THEMIS{} under our model and data-structures: orchestration (\algorch{}), migration using earliest obligation (\algmigr{}), migration using round-robin (\algmigrr{}), and choreography (\algchor{}).
Using \THEMIS{} and the designed metrics, we explore simulations of the four algorithms on two scenarios and validate the trends observed in the analysis.
In the first scenario, we use the synthetic benchmark comprising of random specifications and traces.
In the second scenario, we use a real example (Chiron) with existing formalized specifications.

\bibliographystyle{splncs03}
\bibliography{biblio}

\appendix
\section{Proof} \label{sec:app:proofs}
%
%

\begin{proof}[Proof of \rprop{prop:ehe:invariant}]\label{proof:ehe:invariant}
The proof is by induction on the number of timestamps in the \MemRep{}, i.e., $n = |\timestamps{\sys}|$.
Without loss of generality, we can assume the automaton being encoded is normalized (see Remark~\ref{rmk:normalized-aut}), that is, all shared edges between any two states are replaced by one edge which is labeled by the disjunction of their labels.\\

\noindent
One could see that the base case only contains the initial state of an automaton, i.e., $\sys^0 = [0 \mapsto q_0 \mapsto \vt]$, and as such the proposition holds.\\

\noindent
Let us consider $n = 2$, we have $\sys^1 =  \smove(\sys^0, 0, 1)$.
To compute $\smove$, we first consider $\snext(\sys^0, 0)$ which considers all states reachable from $q_0$ as the only tuple in $\sys^0$ is $\tuple{0, q_0}$, i.e.,
$\snext(\sys^0, 0) = \setof{q' \in Q \mid \exists e \in \expr: \delta(q_0, e) = q'}$, we know that only one such $e$ can evaluate to $\vt$ for any memory encoded with the identity encoder ($\fid$), since the automaton is deterministic.
Let us collect all such states and their expressions as $P = \setof{\tuple{q', e} \in Q \times \expr \mid \exists e \in \expr: \delta(q_0, e) = q'}$.
We note that $\sys^1(0,q_0) = \vt$ is the only entry for timestamp 0.
The property holds trivially for that entry.
We now consider the entries in $\sys^1$ for timestamp 1.
Each of tuple $\tuple{q', e} \in P$ corresponds to the expression $\sys^1(1, q')$, constructed with $\sto(\sys^0, 0, q', \fts_1) = \sys^0(0, q_0) \land \fts_1(e)$.
We note that $\fts_1$ only adds the timestamp 1 to each atomic proposition.
As such, for any given memory encoded with $\fts_1$ only one such expression can be evaluated to $\vt$.\\

\noindent
Inductive step:
We assume that the property holds on $\sys^{n-1}$ for some $n \in \mathbb{N}$, that is:\\
$\forall \mem \in \Mem, \forall t \in \timestamps{\sys^{n-1}}, \exists q \in Q: (\seval(\sys^{n-1}(t,q), \mem) = \vt) \implies (\forall q' \in Q \setminus \{ q \} \implies \seval(\sys^{n-1}(t,q'), \mem) \neq \vt)$.
Let us prove that the property holds for $\sys^n$.\\

\noindent
The approach is similar to that of $n=2$ using the recursive structure of the $\MemRep{}$ to generalize. We decompose $\sys^n$ as follows:

\begin{align*}
  \sys^n = \sys^{n-1} \sysadd  \smashoperator{\biguplus\limits^{\lor}_{q' \in \snext(\sys^{n-1},n)}} \setof{n \mapsto q' \mapsto \sto(\sys^{n-1},n-1,q', \fts_{n})} && \text{(definition of $\smove$)} \\
\end{align*}

\noindent
We know that $\timestamps{\sys^n} = \timestamps{\sys^{n-1}} \cup \setof{n}$.
The induction hypothesis states that the property holds for all entries in $\sys^{n-1}$ (i.e. for $t \in \timestamps{\sys^{n-1}}$), we consider the entries for timestamp $n$ only.
Since $\biguplus\limits^{\lor}$ applies $\sysadd$ on the entire set, and it is associative and commutative we consider the expression for a given state after all the merges, without consideration of order of merges.
As such the states associated with timestamp $n$ are computed using $\snext(\sys^{n-1},n)$. We have:

\begin{align*}
  \forall q' \in \snext(\sys^{n-1},n): \sys^n(n, q')  &=  \sto(\sys^{n-1},n-1,q', \fts_{n}) && \text{(definition of $\smove$)} \\
                 &=  	\smashoperator{\bigvee_{\setof{\tuple{q,e'} \mid  \> \delta(q, e') = q'}}}( \sys^{n-1}(n - 1, q) \land \fts_n(e')  )  && \text{(1)}
\end{align*}

\noindent
(1) follows from the definition of $\sto$.
If we examine the disjunction we notice using the induction hypothesis that there can only be a unique $q_\mrm{u} \in Q$ with $\sys^{n-1}(n - 1, q_\mrm{u})$ that evaluates to $\vt$ at timestamp $n-1$.
As such, the conjunction can only hold for one such $q_\mrm{u}$.
Consequently, we can rewrite (1) by simplifying the disjunction and considering only states reachable from $q_\mrm{u}$, as the rest cannot evaluate to $\vt$.
Let us collect all such states and expressions in the set
$P_\mrm{u} = \setof{\tuple{q', e'} \mid q' \in \snext(\sys^{n-1},n) \land  \exists e' \in \expr_\AP: \delta(q_\mrm{u}, e') = q'}$.
The only entries that can still evaluate to $\vt$ are:
\begin{align*}
    \forall \tuple{q', e'} \in P_\mrm{u}:  \sys^n(n, q') &=  \sys^{n-1}(n - 1, q_\mrm{u}) \land \fts_n(e')\\
      &= \fts_n(e')
\end{align*}
Since the automaton is deterministic, we know that we have one unique expression $e_\mrm{u}$ that can evaluate to $\vt$, given any memory encoded with $\fid$.
Since $\fts_n$ only adds the timestamp $n$ to the atomic propositions without changing the expression, we deduce that only $\fts_n(e_\mrm{u})$ evaluates to $\vt$.
As such, there is a unique expression that can evaluate to $\vt$ for any given memory encoded with $\fts_n$.
Furthermore, we know that the expression has only been encoded with $\fts_n$ so when memories encoded with different timestamps or encoders are merged, they do not affect the evaluation of $\fts_n(e_\mrm{u})$.
As such, we have a unique entry $\sys^n(n, q'_\mrm{u})$ s.t. $\delta(q_\mrm{u}, e_\mrm{u})$ that can evaluate to $\vt$. Therefore:\\

\noindent
\begin{align*}
\forall \mem \in \Mem, \forall t \in &\timestamps{\sys^{n}}, \exists q \in Q: \\
 &(\seval(\sys^n(t,q),\mem) = \vt) \implies (\forall q' \in Q : q' \neq q \implies \seval(\sys^n(t,q'), \mem) \neq \vt)
\end{align*}
\qed
\end{proof}

\begin{lemma}[Evaluation modulo encoding]\label{sec:app:proofs:l1}
  Given a trace $\trace$ of length $i$ and a reconstructed global trace $\rho(\trace) = \mathit{evt}_1 \cdot \hdots \cdot \mathit{evt}_i$, we consider two memories $\mem^i_\aut$ and $\mem^i$ generated under different encodings.
  We consider $\mem^i_\aut = \cons(\mathit{evt}_i, \fid)$, and  $\mem^i = \biguplus^{2}_{t \in [1, i]} \setof{\cons(\mathit{evt}_{t}, \fts_{t})}$.
  We show that an expression encoded using different encodings evaluates the same for the memories, that is:
  \[
  \forall e \in \expr_\AP:
    \seval(\fid(e), \mem_\aut^{i}) \Leftrightarrow \seval(\fts_{i}(e), \mem^{i}).
  \]
\end{lemma}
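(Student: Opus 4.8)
The plan is to prove the statement by structural induction on the expression $e \in \expr_\AP$, reducing the whole claim to the base case of a single atomic proposition. Recall that $\seval(e', \mem)$ is defined as $\fsimp(\rw(e', \mem))$ tested for Boolean equivalence to $\vt$ or $\vf$, and that both $\fid$ and $\fts_i$ are encoders that act purely on the atomic propositions occurring in $e$ (the identity leaves them as-is, $\fts_i$ tags each with the timestamp $i$) while leaving the Boolean structure $\neg, \land, \lor$ untouched. Since $\fsimp$ and the final equivalence check are syntax-directed over the Boolean connectives, it suffices to show that $\rw(\fid(e), \mem_\aut^i)$ and $\rw(\fts_i(e), \mem^i)$ produce Boolean expressions that are equal up to a consistent renaming of their atoms — i.e., atom $a$ in the first corresponds to $\tuple{i,a}$ in the second — and that this renaming respects which atoms get replaced by a truth value and which are kept.

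First I would set up the induction. For the inductive cases $e = \neg e'$, $e = e_1 \land e_2$, $e = e_1 \lor e_2$: by Definition~\ref{def:common:rw}, $\rw$ commutes with each connective, and both encoders distribute over connectives as well; so $\rw(\fid(\neg e'), \mem_\aut^i) = \neg\,\rw(\fid(e'), \mem_\aut^i)$ and likewise on the other side, and the induction hypothesis plus the fact that Boolean equivalence is a congruence for $\neg, \land, \lor$ closes these cases immediately. The entire weight of the argument therefore falls on the base case $e = a$ for $a \in \AP$.

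For the base case, I would compare $\rw(a, \mem_\aut^i)$ with $\rw(\tuple{i,a}, \mem^i)$. On the automaton side, $\mem_\aut^i = \cons(\mathit{evt}_i, \fid)$, so $a \in \fdom(\mem_\aut^i)$ iff $a$ appears (with some final verdict) in the global event $\mathit{evt}_i$, in which case $\rw(a, \mem_\aut^i) = \mem_\aut^i(a)$, and otherwise $\rw(a, \mem_\aut^i) = a$. On the \MemRep{} side, $\mem^i = \biguplus^{2}_{t \in [1,i]} \setof{\cons(\mathit{evt}_t, \fts_t)}$; the key observation is that the atom $\tuple{i,a}$ can only be introduced into $\mem^i$ by the summand $\cons(\mathit{evt}_i, \fts_i)$, since every other summand $\cons(\mathit{evt}_t, \fts_t)$ with $t \neq i$ tags its atoms with timestamp $t \neq i$. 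Hence $\tuple{i,a} \in \fdom(\mem^i)$ iff $a \in \mathit{evt}_i$, and when it is, $\mem^i(\tuple{i,a}) = \cons(\mathit{evt}_i, \fts_i)(\tuple{i,a})$, which by definition of $\cons$ equals the same final verdict that $\cons(\mathit{evt}_i, \fid)(a) = \mem_\aut^i(a)$ gives. So in both the "replaced" and "kept" sub-cases the two rewritings agree under the renaming $a \leftrightarrow \tuple{i,a}$, and the simplification and final equivalence checks then yield the same verdict. I expect the main obstacle to be the bookkeeping around $\biguplus^{2}$ and the operator $\memadd$: one must argue carefully that merging the per-timestamp memories with $\memadd$ does not cause an atom tagged with timestamp $i$ to pick up a value from a different timestamp's contribution, and that the total order underlying $\memadd$ is irrelevant here because the domains $\fdom(\cons(\mathit{evt}_t, \fts_t))$ are pairwise disjoint across distinct $t$ (distinct timestamp tags). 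Once that disjointness is stated explicitly, $\mem^i$ restricted to atoms with tag $i$ is literally $\cons(\mathit{evt}_i, \fts_i)$, and the base case is routine.
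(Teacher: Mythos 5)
Your proposal is correct and rests on the same two observations as the paper's own proof: entries of $\mem^{i}$ tagged with timestamps other than $i$ cannot affect the evaluation of an expression encoded with $\fts_{i}$, and the timestamp-$i$ entries are exactly those of $\cons(\mathit{evt}_{i}, \fts_{i})$, which agree with $\mem^{i}_\aut$ up to the tag $a \leftrightarrow \tuple{i,a}$ (the pairwise disjointness of the summands' domains making $\memadd$ a plain union). Your structural induction on $e$ is just a more explicit packaging of the lifting step that the paper performs in one stroke — it establishes the memory correspondence once and immediately concludes for all expressions — and your care in inducting on the stronger "equal up to atom renaming after $\rw$" invariant (rather than on $\seval$ itself, which is not compositional) is exactly what makes that packaging sound.
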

\begin{proof}[Proof of Lemma~\ref{sec:app:proofs:l1}]

  \noindent
  \noindent
  We first note that for the first evaluation $\seval(\fid(e), \mem_\aut^{i})$, we rely only on the event $\mathit{evt}_i$ since  $\mem^i_\aut = \cons(\mathit{evt}_i, \fid)$.
  This is not the case for $\seval(\fts_{i}(e), \mem^{i})$ as $\mem^i = \biguplus^{2}_{t \in [1, i]} \setof{\cons(\mathit{evt}_{t}, \fts_{t})}$.
  However, we notice that for the second evaluation we evaluate the expression $\fts_{i}(e)$, that is, where the expression where all atomic propositions have been encoded by the timestamp $i$.
  Therefore, let us denote the memory with the timestamp $i$ by $\mem' = \cons(\mathit{evt}_{i}, \fts_{i})$.
  We can rewrite $\mem^i$ as follows:
  \[\begin{array}{rll}
      \mem^i &= \cons(\mathit{evt}_{i}, \fts_{i}) &\memadd \biguplus^{2}_{t \in [1, k]} \setof{\cons(\mathit{evt}_{k}, \fts_{k})} \\
             &= \mem' &\memadd \biguplus^{2}_{t \in [1, k]} \setof{\cons(\mathit{evt}_{k}, \fts_{k})}.
  \end{array}\]
  We know that all entries $\tuple{k, a} \in \fdom(\mem^i)$ with $k < i$ do not affect at all the evaluation of an expression encoded with $\fts_i$.
  As such we have: \[
    \forall e \in \expr_\AP: \seval(\fts_{i}(e), \mem^{i}) \equiv  \seval(\fts_{i}(e), \mem')
   \]
  We now show that the two memories $\mem^i_\aut$ and $\mem'$ contain simply an encoding of the same atomic propositions.
  We have by construction the following:
  \[
  \begin{array}{rll}
    \forall a \in \fdom(\mem_\aut^{i}) : & \tuple{i, a} \in \fdom(\mem^{i})   &\land \> \mem_\aut^{i}(a) = \mem'(\tuple{i, a})\\
    \forall \tuple{i, a'} \in \fdom(\mem') : & a' \in \fdom(\mem_\aut^{i}) & \land  \> \mem'(\tuple{i, a'}) = \mem_\aut^{i}(a')
  \end{array}
  \]
  As such we have:  $\forall e \in \expr_\AP: \seval(\fid(e), \mem_\aut^{i})
  \equiv \seval(\fts_{i}(e), \mem')
  \equiv \seval(\fts_{i}(e), \mem^{i})
  $.

\end{proof}

\begin{proof}[Proof of \rprop{prop:cmon:soundness}]\label{proof:ehe:soundess}
Given a trace $\trace$ of length $i$ and a reconstructed global trace $\rho(\trace) = \mathit{evt}_1 \cdot \hdots \cdot \mathit{evt}_i$, the proof is done by induction on the length of the trace $|\rho(\trace)|$.
We omit the label $\ell$ for clarity.\\

\noindent
Base case: $|\rho(\trace)| = 0, \rho(\trace) = \emptytrace$

\noindent
\[ \Delta^*(q_0, \emptytrace) =   q_0  = \sreach(\sys^0, [\,], 0)\]
\[ \sys^0 = \smove([0 \mapsto q_0 \mapsto \vt], 0, 0) = [0 \mapsto q_0 \mapsto \vt] \]
We only have expression $\vt$ which is mapped to $q_0$ at $t = 0$.
Expression $\vt$ requires no memory to be evaluated.\\

\noindent
Inductive step:
We assume that the property holds for a trace of length $i$ for some $ i \in \mathbb{N}$, that is $\Delta^*(q_0, \mathit{evt}_1 \cdot \hdots \cdot \mathit{evt}_i) = \sreach(\sys^i, \mem^i, i) = q_i$.
Let us prove that the property holds for any trace of length $i+1$.\\

\noindent
We now consider the transition functions in the automaton:
\begin{align*}
 q_{i+1} &= \Delta^*(q_0, \mathit{evt}_1 \cdot \hdots \cdot \mathit{evt}_{i+1})\\
         &= \Delta(\Delta^*(q_0, \mathit{evt}_1 \cdot \hdots \cdot \mathit{evt}_i), \mathit{evt}_{i+1})  \text{ (\rdef{def:cmon:aut-semantics})}\\
         &= \Delta(q_i, \mathit{evt}_{i+1})                                        \text{ (Induction Hypothesis)}\\
         &\Leftrightarrow  \exists \vars{e} \in \expr_\AP : \delta(q_i, expr) = q_{i+1} \land \seval(e, \mem_\aut^{i+1}) = \vt\text{ (1)}\\
\end{align*}
We note that, since the automaton is deterministic, there is a unique $q_{i+1}$ such that $q_{i+1} = \Delta(q_i, \mathit{evt}_{i+1})$.

We now consider the \MemRep{} operations to reach $q_{i+1}$ from $q_i$.
\begin{align*}
  q_i &= \sreach(\sys^i, \mem^i, i) \\
      &\Leftrightarrow e = \sys^i(i,q_i) \mbox{\text{ with }} \seval(e, \mem^i) = \vt && \text{(2)}\\
      & \quad \land \forall q_i' \in Q: q_i' \neq q_i \implies  \seval(\sys^i(i, q'_i)) \neq \vt && \text{(\rprop{prop:ehe:invariant})}\\
      &\Leftrightarrow \sto(\sys^i, i, q_{i+1}, \fts_{i+1}) = \vt && \text{(3)}\\
\end{align*}

\noindent
(3) From the induction hypothesis,  we know that $\sys^i(i, q_i) = \vt$, thus:
\begin{align*}
  \sto(\sys^i, &i, q_{i+1}, \fts_{i+1}) \\
                                         &= \bigvee\limits_{\setof{\tuple{q, e'} \mid \delta(q, e')= q_{i+1}'}}(\sys^i(i, q) \land \fts_{i+1}(e') ) \\
                                       &= \bigvee\limits_{\setof{\tuple{q, e''} \mid \delta(q, e'')= q_{i+1}' \land q \neq q_i}}(\sys^i(i, q) \land \fts_{i+1}(e'') ) \\
                                       &\quad  \lor \bigvee\limits_{\setof{\tuple{q_i,e'''} \mid \delta(q_i, e''') = q_{i+1}'}} (\fts_{i+1}(e''')).\\
\end{align*}
We split the disjunction to consider the expressions that only come from state $q_i$, we now show that one such expression evaluates to $\vt$.
We know from (1), that one such expression can be taken in the automaton:
\begin{align*}
   \exists \mathit{e} \in \expr_\AP &: \delta(q_i, \mathit{e}) = q_{i+1} \land \seval(\mathit{e}, \mem_\aut^{i+1}) = \vt && \text{(1)}\\
     &\Leftrightarrow  \seval(\fts_{i+1}(\mathit{e}), \mem^{i+1}) = \vt && \text{(4)}\\
     &\Leftrightarrow \sto(\sys^i, i, q_{i+1}, \fts_{i+1}) = \vt        && \text{(5)}
\end{align*}
(4) is obtained using \rlemma{sec:app:proofs:l1} and  $\fid(\mathit{e}) = \mathit{e}$.\\
(5) follows from the disjunction.

\noindent
Using the same approach, we can show that  $\forall q' \in \snext(\sys^i, i): q' \neq q_{i+1} \implies \sto(\sys^i, i, q', \fts_{i+1}) \neq \vt$, since the first part of the conjunction does not evaluate to $\vt$, and we know that the second part cannot evaluate to $\vt$ by (2).\\
Finally, $\sto(\sys^i, i, q_{i+1}, \fts_{i+1}) = \vt$ iff $\sreach(\sys^{i+1}, \mem^{i+1}, i+1) = q_i.$\\
\end{proof}
%
\section{Chiron System Atomic Propositions}
\label{sec:app:chiron}
%
We broke down the Chiron system based on analysis of the examples provided in~\cite{chironweb,ltlpatterns}, using the various specifications rewritten in~\cite{ltlpatternswebspecs}.
Table~\ref{tbl:app:chiron} displays various associations we used to generate our traces and events.
Column \textbf{C} assigns an ID to the component.
Column \textbf{Name} lists the logical module of the system we considered as a component.
Column \textbf{Original (AP)} lists the atomic proposition provided by the authors of Chiron, and then edited by~\cite{ltlpatterns}.
Column \textbf{AP} maps the atomic proposition to our traces.
Column \textbf{Comments} includes comments on the atomic propositions.

{\begin{table}[h!]\scriptsize
\caption{Chiron Atomic Propositions and Components}
\label{tbl:app:chiron}
\begin{tabularx}{\textwidth}{|c|c|c|c|X|}
\hline \textbf{C} & \textbf{Name} & \textbf{Original (AP)} & \textbf{AP} & \textbf{Comments} \\
\hline\multirow{12}{*}{ A } & \multirow{12}{*}{ Dispatcher } & registered\_event\_a1\_e1 & a0 & \multirow{4}{*}{ Holds true after an artist has completed registration} \\
& & registered\_event\_a1\_e2 & a1 & \\
& & registered\_event\_a2\_e1 & a2 & \\
& & registered\_event\_a2\_e2 & a3 & \\
\cline{3-5}& & notify\_a1\_e1 & a4 & \multirow{4}{*}{ Holds true on starting to dispatch an event to an artist} \\
& & notify\_a1\_e2 & a5 & \\
& & notify\_a2\_e1 & a6 & \\
& & notify\_a2\_e2 & a7 & \\
\cline{3-5}& & lst\_sz0\_e1 & a8 & \multirow{4}{*}{ Tracking the size of the list (state) } \\
& & lst\_sz0\_e2 & a9 & \\
& & lst\_gt2\_e1 & a10 & \\
& & lst\_gt2\_e2 & a11 & \\
\hline\multirow{6}{*}{ B } & \multirow{6}{*}{ Artist1 } & notify\_client\_a1\_e1 & b0 & \multirow{2}{*}{ Artist receives a notification } \\
& & notify\_client\_a1\_e2 & b1 & \\
\cline{3-5}& & register\_event\_a1\_e1 & b2 & \multirow{2}{*}{ Artist requests to register for an event } \\
& & register\_event\_a1\_e2 & b3 & \\
\cline{3-5}& & unregister\_event\_a1\_e1 & b4 & \multirow{2}{*}{ Artist requests to unregister } \\
& & unregister\_event\_a1\_e2 & b5 & \\
\hline\multirow{6}{*}{ C } & \multirow{6}{*}{ Artist2 } & notify\_client\_a2\_e1 & c0 & \multirow{6}{*}{ See Artist1 } \\
& & notify\_client\_a2\_e2 & c1 & \\
& & register\_event\_a2\_e1 & c2 & \\
& & register\_event\_a2\_e2 & c3 & \\
& & unregister\_event\_a2\_e1 & c4 & \\
& & unregister\_event\_a2\_e2 & c5 & \\
\hline D & Main & term & d0 & Main program terminates \\
\hline
\end{tabularx}

\end{table}
}

\section{Changes in \THEMIS{}}\label{sec:app:newthemis}
%
Figure~\ref{fig:tool:ehe} shows the data transferred for the migration algorithms, which is associated with the size of the $\MemRep{}$.
We reduced the size of the $\MemRep{}$ by making calls to the simplifier only for complex simplifications, and implement the basic Boolean simplification while traversing the expression to replace atomic propositions by looking up the memory (in the operation $\rw$).
Since we have less calls, we apply a more aggressive simplification that is more costly\footnote{We use \texttt{ltlfilt} from~\cite{SPOT} with \texttt{--boolean-to-isop} to rewrite Boolean subformulas as irredundant sum of products.}, but also reduces the size of the expressions.
The x-axis indicates the algorithm's variant and the number of components, where {\algmigr} (resp. {\algmigr}r) stands for earliest the variant obligation (resp. round-robin).
The y-axis is presented in logarithmic scale.
We notice a significant drop in the size of \MemRep{}, dropping from 769 in the ISSTA'17 version for Migrr-5 to 73.12.

\begin{figure}[h!]
  \centering
  \includegraphics[width=.85\textwidth]{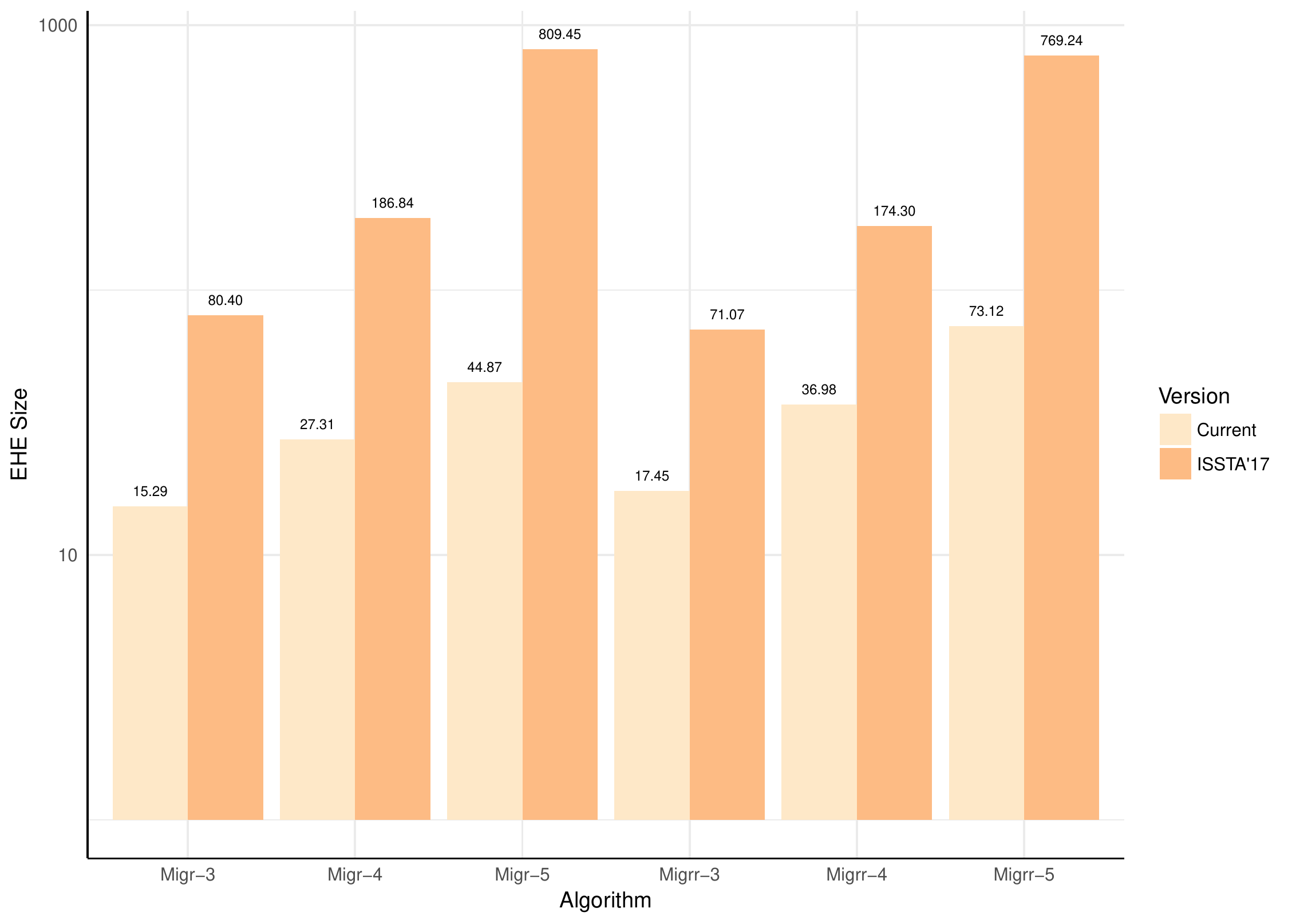}
  \caption{Size of \MemRep{}}
  \label{fig:tool:ehe}
\end{figure}

%
%
%
%
%
%
%

\section{Detailed Comparison}
\label{sec:app:data}
Tables~\ref{tbl:exp:syndata} and \ref{tbl:exp:chirondata} present the detailed comparison for the synthetic scenario and Chiron, respectively.
The metrics presented are (in order of columns): average information delay ($\paramdelay$), normalized average number of messages (\#Msgs), normalized data transferred (Data), maximum simplifications done by any given monitor per run, averaged across all runs ($\mathrm{S_{max}}$), normalized critical simplifications ($\mathrm{S_{crit}}$), and convergence based on expressions evaluated ($\mathrm{Conv_E}$). For more details on the metrics, see \rsec{sec:experiment}.

\begin{table}[h!]
  \caption{Synthetic Benchmark. Cells contains mean and standard deviation in parentheses.}
  \label{tbl:exp:syndata}
\centering
\scalebox{0.85}{\begin{tabular}{|l|c|c|cc|cc|c|}
  \hline
  \textbf{Alg.} & \textbf{$|\comps|$} & \boldmath${\paramdelay}$ & \textbf{\#Msgs} & \textbf{Data}  & \boldmath$\mathrm{S_{max}}$ & \boldmath$\mathrm{S_{crit}}$ & \boldmath$\mathrm{Conv_E}$ \\

\hline\hline \multirow{4}{*}{{\algorch}}
 &   3 & 0.48 (0.50) & 2.44 (0.61) & 31.84 (8.07) & 0.00 (0.00) & 0.00 (0.00) & 0.65 (0.02) \\
 &   4 & 0.53 (0.50) & 3.85 (0.94) & 50.05 (12.39) & 0.00 (0.00) & 0.00 (0.00) & 0.74 (0.02) \\
 &   5 & 0.64 (0.48) & 5.30 (1.16) & 69.20 (15.55) & 0.00 (0.00) & 0.00 (0.00) & 0.79 (0.02) \\
 &   6 & 0.69 (0.46) & 7.04 (1.50) & 91.86 (20.02) & 0.00 (0.00) & 0.00 (0.00) & 0.83 (0.02) \\

\hline \multirow{4}{*}{{\algmigr}}
 &   3 & 0.58 (0.58) & 0.27 (0.32) & 8.46 (15.32) & 4.72 (4.41) & 3.08 (2.66) & 0.65 (0.02) \\
 &   4 & 0.71 (0.67) & 0.32 (0.34) & 17.45 (35.87) & 6.10 (6.17) & 4.03 (3.75) & 0.73 (0.03) \\
 &   5 & 0.96 (0.71) & 0.43 (0.34) & 30.41 (56.68) & 7.41 (6.18) & 4.97 (3.76) & 0.79 (0.03) \\
 &   6 & 1.19 (0.86) & 0.50 (0.34) & 98.80 (244.94) & 10.09 (8.32) & 6.74 (4.87) & 0.82 (0.04) \\
\hline \multirow{4}{*}{{\algmigrr}}
 &   3 & 0.76 (0.69) & 0.78 (0.33) & 14.51 (18.40) & 5.62 (4.99) & 3.51 (2.93) & 0.65 (0.02) \\
 &   4 & 1.02 (0.90) & 0.76 (0.36) & 31.76 (51.55) & 7.64 (7.16) & 4.58 (4.04) & 0.74 (0.03) \\
 &   5 & 1.39 (1.04) & 0.75 (0.35) & 62.83 (91.89) & 9.70 (7.88) & 5.70 (4.25) & 0.79 (0.03) \\
 &   6 & 1.72 (1.19) & 0.70 (0.37) & 180.35 (360.25) & 12.56 (9.76) & 7.35 (5.14) & 0.82 (0.03) \\

\hline \multirow{4}{*}{{\algchor}}
 &   3 & 1.47 (1.99) & 2.79 (1.10) & 24.98 (9.85) & 60.22 (242.88) & 12.27 (6.55) & 0.16 (0.12) \\
 &   4 & 1.36 (1.52) & 3.84 (1.23) & 34.36 (10.94) & 44.71 (184.05) & 12.95 (5.98) & 0.13 (0.12) \\
 &   5 & 1.41 (1.55) & 4.63 (1.37) & 41.17 (12.16) & 44.06 (223.15) & 12.68 (6.06) & 0.12 (0.11) \\
 &   6 & 1.29 (1.38) & 5.87 (1.66) & 52.09 (14.77) & 38.35 (215.27) & 13.01 (6.01) & 0.13 (0.12) \\

  \hline
\end{tabular}}
\end{table}

\begin{table}[h!]
  \caption{Metrics for Chiron traces. Cells contains mean and standard deviation in parentheses.}
  \label{tbl:exp:chirondata}
\centering
\scalebox{0.85}{\begin{tabular}{|l|c|c|cc|cc|c|}
  \hline
  \textbf{Alg.} & \textbf{Spec} & \boldmath$\paramdelay$ & \textbf{\#Msgs} & \textbf{Data} & \boldmath$\mathrm{S_{max}}$ & \boldmath$\mathrm{S_{crit}}$ & \boldmath$\mathrm{Conv_E}$ \\
\hline \hline\multirow{6}{*}{{\algorch}}
 & 1 & 0.77 (0.42) & 2.95 (0.00) & 87.46 (0.00) & 0.00 (0.00) & 0.00 (0.00) & 0.74 (0.00) \\
   & 2 & 1.00 (0.00) & 2.95 (0.00) & 87.46 (0.00) & 0.00 (0.00) & 0.00 (0.00) & 0.74 (0.00) \\
   & 3 & 0.99 (0.10) & 3.42 (0.02) & 101.62 (1.00) & 0.00 (0.00) & 0.00 (0.00) & 0.75 (0.00) \\
   & 5 & 0.94 (0.24) & 2.95 (0.00) & 87.46 (0.00) & 0.00 (0.00) & 0.00 (0.00) & 0.74 (0.00) \\
   & 15a & 1.00 (0.00) & 2.95 (0.00) & 87.46 (0.00) & 0.00 (0.00) & 0.00 (0.00) & 0.74 (0.00) \\
   & 15b & 1.00 (0.00) & 3.00 (0.01) & 88.90 (0.16) & 0.00 (0.00) & 0.00 (0.00) & 0.75 (0.00) \\

\hline \multirow{6}{*}{{\algmigr}}
 & 1 & 1.66 (0.03) & 0.02 (0.00) & 0.52 (0.00) & 8.00 (0.00) & 2.03 (0.00) & 0.74 (0.00) \\
 & 2 & 1.00 (0.00) & 0.57 (0.00) & 13.09 (0.10) & 4.00 (0.00) & 3.10 (0.01) & 0.74 (0.00) \\
 & 3 & 1.86 (0.00) & 0.88 (0.01) & 70.23 (1.00) & 13.00 (0.00) & 9.76 (0.05) & 0.75 (0.00) \\
 & 5 & 1.67 (0.00) & 0.02 (0.00) & 0.52 (0.00) & 8.00 (0.00) & 2.03 (0.00) & 0.74 (0.00) \\
 & 15a & 1.00 (0.00) & 0.97 (0.00) & 10.71 (0.04) & 4.00 (0.00) & 3.90 (0.00) & 0.74 (0.00) \\
 & 15b & 1.00 (0.00) & 1.00 (0.00) & 19.36 (0.35) & 9.02 (2.00) & 7.00 (0.03) & 0.75 (0.00) \\

\hline \multirow{6}{*}{{\algmigrr}}
 & 1 & 1.98 (0.00) & 1.01 (0.01) & 50.19 (0.33) & 9.80 (1.37) & 5.29 (0.07) & 0.74 (0.00) \\
 & 2 & 1.81 (0.02) & 1.01 (0.01) & 212.55 (3.17) & 12.00 (0.00) & 5.82 (0.05) & 0.74 (0.00) \\
 & 3 & 2.37 (0.03) & 0.89 (0.02) & 147.00 (3.97) & 15.97 (0.22) & 10.65 (0.09) & 0.75 (0.01) \\
 & 5 & 1.98 (0.01) & 1.01 (0.00) & 50.16 (0.30) & 9.80 (1.35) & 5.28 (0.06) & 0.74 (0.00) \\
 & 15a & 1.99 (0.01) & 1.01 (0.01) & 83.80 (0.34) & 8.64 (0.94) & 4.91 (0.01) & 0.74 (0.00) \\
 & 15b & 2.50 (0.01) & 1.00 (0.00) & 136.05 (0.27) & 16.86 (0.35) & 11.41 (0.01) & 0.75 (0.00) \\

\hline \multirow{6}{*}{{\algchor}}
 & 1 & 1.01 (0.00) & 4.89 (0.00) & 44.02 (0.00) & 20.00 (0.00) & 15.88 (0.32) & 0.20 (0.01) \\
 & 2 & 133.86 (0.17) & 2.95 (0.00) & 26.52 (0.00) & 2798.38 (211.11) & 21.41 (0.74) & 0.67 (0.02) \\
 & 3 & 1.22 (0.04) & 4.40 (0.13) & 39.64 (1.16) & 23.65 (0.87) & 18.18 (0.93) & 0.33 (0.02) \\
 & 5 & 1.01 (0.00) & 4.89 (0.00) & 44.02 (0.00) & 20.00 (0.00) & 15.85 (0.33) & 0.20 (0.01) \\
 & 15a & 1.00 (0.00) & 0.98 (0.00) & 8.84 (0.00) & 10.00 (0.00) & 9.25 (0.07) & 0.47 (0.01) \\
 & 15b & 116.52 (1.19) & 2.00 (0.00) & 18.00 (0.00) & 3387.04 (316.16) & 28.01 (1.13) & 0.71 (0.00) \\

  \hline
\end{tabular}}
\end{table}

\newpage
\section{Choreography Setup Phase}
\label{sec:app:choreo}
Choreography as presented in~\cite{DecentMon} splits the initial LTL formula into subformulae and delegates each subformula to a monitor on a component.
Thus choreography presents a complicated \emph{setup} phase.
In this section, we present the \emph{setup} phase.
As such, we present the generation of the decentralized specification from a start LTL formula.

~

Choreography begins by taking the main formula, then deciding to split it into subformulae.
Each monitor will monitor the subformula, notify other monitors of its verdict, and when needed \emph{respawn} .
Recall from the definition of $\Delta'$ (see \rdef{def:sem-decent}), that monitoring is recursively applied to the remainder of a trace starting at the current event.
That is, initially we monitor from $e_0$ to $e_n$ and then from $e_1$ to $e_n$ and so forth.
To do so, it is necessary to reset the state of a monitor appropriately, this process is called in~\cite{DecentMon} a \emph{respawn}.
Once the subformulae are determined, we generate an automaton per subformula to monitor it.
Then, we construct the network of monitors in the form of a tree, in which the root is the main monitor.
Verdicts for each subformula are then propagated in the hierarchy until a verdict can be reached by the root monitor.

A choreography monitor is a tuple $\tuple{id,\aut_{\varphi_{id}}, \chorref_{id}, \chorcoref_{id}, \chorresp_{id}}$ where:
\begin{itemize}
	\item $id$ denotes the monitor unique identifier (label);
	\item $\aut_{id}$ the automaton that monitors the subformula;
	\item $\chorref_{id} : 2^{\mons}$ the monitors that this monitor should notify of a verdict;
	\item $\chorcoref_{id} : 2^{\mons}$ the monitors that send their verdicts to this monitor;
	\item $\chorresp_{id} : \verdictb$ specifies whether the monitor should \emph{respawn};
\end{itemize}
To account for the verdicts from other monitors, the set of possible atoms is extended to include the verdict of a monitor identified by its id.
Therefore, $\atoms = (\timestamp \times \AP) \cup (\mons \times \timestamp)$.
Monitoring is done by replacing the subformula by the id of the monitor associated with it.

Before splitting a formula, it is necessary to determine the component that hosts its monitor.
The component score is computed by counting the number of atomic propositions associated with a component in the subformula.
\begin{align*}
\scor(\varphi, c)
				&: LTL \times \comps \rightarrow \timestamp\\
				&= \texttt{ match $\varphi$ with }\\
				 & \begin{array}{ll}
				\mid a \in AP 				  & \rightarrow \twopartdef{1}{\clookup(a) = c} {0} \\
				\mid \text{ op } \phi	 	  & \rightarrow \scor(\phi, c)\\
				\mid \phi \text{ op } \phi'	  & \rightarrow \scor(\phi, c) + \scor(\phi', c)\\
\end{array}
\end{align*}
The chosen component is determined by the component with the highest score, using $\cchoose : LTL \rightarrow \comps $:
\[
	\cchoose(\varphi) = \underset{c \in \comps}{\operatorname{arg max}}  (\scor(\varphi, c))
\]

In order to setup the network of monitors, firstly the LTL expression is split into subformulae and the necessary monitors are generated to monitor each subformula.
The tree of monitors is generated by recursively splitting the formula at the binary operators.
We present the \emph{setup} phase as a tree traversal of the LTL formula to generate the monitor network, merging nodes at each operator, which is a different flavor of the generation procedure in~\cite{DecentMon}.
Given the two operands, we choose which operands remains in the host component, and (if necessary) which would be placed on a different component.
Therefore, we add the constraint that at least one part of the LTL expression must still remain in the same component.
Given two formulas $\varphi$ and $\varphi'$ and an initial base component $c_{b}$ we determine the two components that should host $\varphi$ and $\varphi'$ with the restriction that one of them is $c_{b}$:
\begin{align*}
	c_1 = \cchoose(\varphi) &, \quad c_2 = \cchoose(\varphi')\\
	s_1 = \scor(\varphi, c_b) &, \quad s_2 = \scor(\varphi', c_b)\\
	\csplit(\varphi, \varphi', c_b)
							 & = \funcparts{
		\tuple{c_b, c_b} & \pif c_1 = c_2 = c_b \\
		\tuple{c_1, c_b} & \pif (c_1 \neq c_b) \land (c_2 = c_b \lor s_2 > s_1)\\
		\tuple{c_b, c_2} & \pelse
}
\end{align*}
\ralg{alg:choreography-setup} displays the procedure to split the formula.
For each binary operator, we determine which of the operands needs to be hosted in a new component.
The result is a tuple: $\tuple{root, N, E}$ where:
\begin{itemize}
	\item $root$ is the root of the tree;
	\item $N$ is the set of generated monitor data;
	\item $E$ is the set of edges between the monitors.
\end{itemize}
Monitor data is a pair $\tuple{id, spec}$ that represents the id of the monitor and the formula that it monitors.

\begin{itemize}
	\item First, $\cchoose$ determines the $\vars{host}$ component where the root monitor resides.
	\item Second, the AST of the $LTL$ formula is traversed using $\func{netx}$, which splits on binary operators.
		\begin{itemize}
			\item If both formulae can be monitored with the same monitor it does not split.
			\item Otherwise
			\begin{enumerate}
				\item We recurse on the side kept, to further split the formula;
				\item We recurse on the side split, with a new $host$ and $id$;
				\item We merge the subnetworks by:
					\begin{enumerate}
						\item Generating the host monitor with the formula resulting from the recursion;
						\item Connecting the split branch's root monitor to the current host monitor;
						\item Adding the split branch's root monitor to the set of additional monitors;
						\item Merging the set of additional monitors and edges from both branches.
					\end{enumerate}
			\end{enumerate}
		\end{itemize}
\end{itemize}

\begin{algorithm}[!tp]
 \caption{Setting up the monitor tree}
 \label{alg:choreography-setup}
\begin{algorithmic}[1]
	\Procedure{$NET\_CHOR$}{$\varphi, C, M$}
		\State $id \gets 0$
		\State $c_h \gets \cchoose(\varphi)$
		\State $\tuple{root, mons, edges} \gets netx(\varphi, id, c_h)$
		\State \Return $\tuple{\setof{root} \cup mons, edges}$
	\EndProcedure
	\Procedure{$netx$}{$\varphi, id_c, c_h$}
		\If{$\varphi \in AP$}
			\State $m \gets \tuple{\varphi, id_c}$
			\State \Return $\tuple{m, \emptyset, \emptyset}$
		\ElsIf{$\varphi$ matches op $e$} \Comment{Unary Operator}
			\State $o \gets \func{netx}(e, id_c, c_h)$
			\State $m \gets \tuple{\mbox{op } o.f, id_c}$
			\State \Return $\tuple{m, o.N, o.E}$
		\ElsIf{$\varphi$ matches $e$ op $e'$}
			\State $\tuple{c_1, c_2} \gets split(e, e', c_h, M)$
			\If{$c_1 = c_2 = c_sec:app:choreoh$} \Comment{No Split}
				\State $l \gets netx(e, id_c, c_h)$
				\State $r \gets netx(e', id_c, c_h)$
				\State $m \gets \tuple{l.f \mbox{ op } r.f, id_c}$
				\State \Return $\tuple{m, l.N \cup r.N, l.E \cup r.E}$
			\ElsIf{$c_1 = c_h$} \Comment{Split Right Branch}
				\State $id_n \gets \func{newid}()$
				\State $l \gets netx(e , id_c, c_h)$
				\State $r \gets netx(e', id_n, c_2)$
				\State $m \gets \tuple{l.f \mbox{ op } \tuple{id_n}, id_c}$
				\State \Return $\tuple{m, (l.N \cup r.N \cup {r.root}),(l.E \cup r.E \cup \setof{\tuple{id_n, id_c})} }$
			\Else \Comment{Split Left Branch}
				\State $id_n \gets \func{newid}()$
				\State $l \gets netx(e , id_n, c_1)$
				\State $r \gets netx(e', id_c, c_h)$
				\State $m \gets \tuple{\tuple{id_n} \mbox{ op } r.f, id_c}$
				\State \Return $\tuple{m, (l.N \cup r.N \cup {l.root}),(l.E \cup r.E \cup \setof{\tuple{id_n, id_c}}) }$
			\EndIf
		\EndIf
	\EndProcedure
\end{algorithmic}
\end{algorithm}

Once the monitor data tree is created, monitors are created accordingly, generating an automaton for the subformula, where some of its atomic propositions have been replaced with monitor ids.
Each monitor is initialized with the refs and corefs set based on the edges setup.

\begin{remark}[Compacting the network]
The monitor network can further be compacted as follows; monitors with the same subformula are merged into one, and their refs and corefs will be the result of the set union.
However one or more merged monitor will have to replace all occurence of the id of the other monitors in all subformulae of all monitors.
\end{remark}

\end{document}